\newcommand{\Arnaud}[1]{\todo[size=\small,inline,color=purple!30]{#1 \\ \hfill --- Arnaud}}
\newtheorem{lemma}{Lemma}
\newtheorem{theorem}{Theorem}
\newtheorem{proposition}{Proposition}
\theoremstyle{definition}
\theoremstyle{remark}
\author[1]{Erin Wolf Chambers \thanks{This work was funded in part by the National Science Foundation through grants CCF-1614562, CCF-1907612 and DBI-1759807.}}
\author[2]{Francis Lazarus \thanks{This author is partially supported by the French ANR projects GATO (ANR-16-CE40-0009-01) and MINMAX (ANR-19-CE40-0014) and the LabEx PERSYVAL-Lab (ANR-11-LABX-0025-01) funded by the French program Investissement d’avenir.}}
\author[3]{\\Arnaud de~Mesmay\thanks{This author is partially supported by the French ANR projects ANR-17-CE40-0033 (SoS),  ANR-16-CE40-0009-01 (GATO) and ANR-19-CE40-0014 (MINMAX).}}
\author[4]{Salman Parsa\thanks{This work was funded in part by the National Science Foundation through grants CCF-1614562 as well as funding from the SLU Research Institute.}}
\affil[1,4]{Saint Louis University, Saint Louis, MO, USA}
\affil[2]{G-SCOP, CNRS, UGA, Grenoble, France}
\affil[3]{LIGM, CNRS, Univ. Gustave Eiffel, ESIEE Paris, F-77454 Marne-la-Vall\'ee, France}
\date{}
\begin{document}

\bibliographystyle{plain}% the recommnded bibstyle

\title{Algorithms for Contractibility of Compressed Curves on 3-Manifold Boundaries}

%\subjclass{}% mandatory: Please choose ACM 2012 classifications from https://www.acm.org/publications/class-2012 or https://dl.acm.org/ccs/ccs_flat.cfm . E.g., cite as "General and reference $\rightarrow$ General literature" or \ccsdesc[100]{General and reference~General literature}.

%\ccsdesc[500]{Mathematics of computing~Geometric topology}
%\ccsdesc[500]{Mathematics of computing~Graphs and surfaces}
%\ccsdesc[500]{Theory of computation~Data compression}

%\keywords{3-manifolds, surfaces, low-dimensional topology, contractability, compressed curves}%mandatory

\maketitle

\begin{abstract}
In this paper we prove that the problem of deciding contractibility of an arbitrary closed curve on the boundary of a 3-manifold is in \textbf{NP}. We emphasize that the manifold and the curve are both inputs to the problem. Moreover, our algorithm also works if the curve is given as a compressed word. Previously, such an algorithm was known for simple (non-compressed) curves, and, in very limited cases, for curves with self-intersections. Furthermore, our algorithm is fixed-parameter tractable in the complexity of the input 3-manifold.

As part of our proof, we obtain new polynomial-time algorithms for compressed curves on surfaces, which we believe are of independent interest. We provide a polynomial-time algorithm which, given an orientable surface and a compressed loop on the surface, computes a canonical form for the loop as a compressed word. In particular, contractibility of compressed curves on surfaces can be decided in polynomial time; prior published work considered only constant genus surfaces.
More generally, we solve the following normal subgroup membership problem in polynomial time: given an arbitrary orientable surface, a compressed closed curve $\gamma$, and a collection of disjoint normal curves $\Delta$, there is a polynomial-time algorithm to decide if $\gamma$ lies in the normal subgroup generated by components of $\Delta$ in the fundamental group of the surface after attaching the curves to a basepoint.
\end{abstract}

\section{Introduction}
In a topological space $X$, a closed curve $c:\mathbb{S}^1 \rightarrow X$ is \emph{contractible} if the map $c$ extends to a map of the standard disk, $f:D \rightarrow X$ , $f|_{\partial D}=c$; such a curve can be continuously deformed to a point, and represents the trivial homotopy class for the fundamental group of the space.
Dating back to Dehn~\cite{dehn} more than a century ago, the problem of testing contractibility has been an important focus in the development of efficient algorithms in computational topology and group theory.
Such problems are now well understood in two dimensions, with recent works of the second author and Rivaud~\cite{lazarusrivaud} and Erickson and Whittlesey~\cite{tcs} providing linear-time algorithms to test contractibility and free homotopy of curves on surfaces; Despré and the second author~\cite{desprelazarus} also developed efficient algorithms for the closely related problem of computing the geometric intersection number of curves. On the other hand, in higher dimensions, homotopy problems quickly become undecidable: testing contractibility of curves is already undecidable on $2$-dimensional simplicial complexes and in $4$-manifolds (see Stillwell~\cite[Chapter~8]{stillwell}).

The aim of this paper is to further improve our understanding of the intermediate $3$-dimensional case. In a $3$-manifold, testing whether a curve is contractible is known to be decidable, but the best known algorithms are intricate and rely on the proof of the Geometrization Conjecture. We refer to the survey of Aschenbrenner, Friedl and Wilton~\cite{aschenbrennerfriedlwilton}. When the input manifold is a knot complement, this problem contains as a special case the famous \textsc{Unknot Recognition} problem, which asks whether an input knot in $\mathbb{R}^3$ is trivial. That problem is now known to be in \textbf{NP}~\cite{hasslagariaspippenger} and co-\textbf{NP}~\cite{lackenby}. Recently, Colin de Verdi\`ere and the fourth author~\cite{cdvparsa} initiated the study of a problem in between those two, which is testing the contractibility of a closed curve $\gamma$ which lies on the boundary of a triangulated $3$-manifold $M$, and they provided an algorithm to solve this problem in time $2^{O(|M|+|\gamma|)^2})$. They asked whether the problem lies in the complexity class \textbf{NP} and proved that it holds in two cases: when the number of self-intersections of $\gamma$ is at most logarithmic, and when the boundary surface is a torus. We remark that the existence of an algorithm to test contractibility of a closed curve on the boundary actually follows from the earlier results of Waldhausen~\cite{waldhausen}, and an exponential-time algorithm can be derived from the standard 3-manifold literature, as we will explain later in this paper. However, the algorithm of~\cite{cdvparsa,cdvparsaJ} is simpler in the sense that it relies only on the proof of the Loop Theorem.

Our main result is that the problem of contractibility when the curve is on the boundary of the 3-manifold is in \textbf{NP}. Furthermore, the algorithm that we provide has two additional features. First, it is fixed parameter tractable in the input manifold: for a given $3$-manifold $M$, after a preprocessing taking exponential time in $|M|$, we can solve any contractibility test for curves on the boundary in polynomial time. Second, our algorithm also works if the input curve is \textit{compressed} via a straight-line program, in particular, it can be exponentially longer than its encoding size (we defer to Section~\ref{S:background} for the precise definitions).

\begin{theorem}\label{t:3mcontract}
Let a triangulated $3$-manifold $M$ and a curve $\gamma$ on the boundary of $M$ be given as the input. The problem of deciding whether $\gamma$ is contractible in $M$ is in \textbf{NP}, and there is a deterministic algorithm that solves the problem in time $2^{O(|M|^3)}poly(|\gamma|)$. This is also the case if $\gamma$ is given as a compressed word, i.e., a straight-line program.
\end{theorem}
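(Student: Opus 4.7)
My plan would be to reduce contractibility in $M$ to a normal subgroup membership problem on $\partial M$, exploiting the Loop Theorem of Papakyriakopoulos. The key algebraic fact I would use is that the kernel of the inclusion-induced map $\pi_1(\partial M) \to \pi_1(M)$ is normally generated inside $\pi_1(\partial M)$ by the boundary curves of any maximal system of disjoint, pairwise non-parallel compressing disks in $M$. Granting this, $\gamma$ is contractible in $M$ if and only if there exist properly embedded disks $D_1, \ldots, D_k$ in $M$ with $\partial D_i \subset \partial M$ such that $\gamma$ lies in the normal closure of $\{\partial D_1, \ldots, \partial D_k\}$ in $\pi_1(\partial M)$, with basepoints connected via fixed paths. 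One direction is immediate because each $\partial D_i$ is trivial in $\pi_1(M)$; the other uses an iterated Loop Theorem / compression body argument.

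For the \textbf{NP} certificate, I would guess such a collection $D_1,\ldots,D_k$ encoded via normal surface theory. Standard bounds on vertex solutions of the matching equations make each normal disk representable by integer coordinates of bit-length $\mathrm{poly}(|M|)$, and a maximal disk system contains only $O(|M|)$ disks, so the certificate has polynomial size. Verifying that each guessed surface is genuinely a properly embedded disk (Euler characteristic $1$, connectedness, boundary pattern in $\partial M$, embeddedness) is done by classical polynomial-time normal-surface checks. The boundary curves $c_i = \partial D_i$ arise naturally as disjoint normal curves on the triangulated surface $\partial M$, recorded by their normal coordinates in the induced triangulation.

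The verification phase then reduces to exactly the compressed normal-subgroup-membership problem on surfaces advertised in the abstract: check whether the compressed closed curve $\gamma$ lies in the normal subgroup of $\pi_1(\partial M)$ generated by the components of the disjoint normal multicurve $\Delta = c_1 \sqcup \cdots \sqcup c_k$. Since both the input $\gamma$ (a straight-line program) and the $c_i$ (normal coordinates, possibly exponential arc counts) are provided in compressed form, applying the stated polynomial-time surface algorithm yields verification in $\mathrm{poly}(|M|, |\gamma|)$ time, establishing membership in \textbf{NP}.

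For the explicit $2^{O(|M|^3)}\mathrm{poly}(|\gamma|)$ deterministic bound and the FPT claim, I would derandomize the guess by enumerating candidate disks through vertex normal surfaces (of which there are at most $2^{\mathrm{poly}(|M|)}$) and then computing a maximal disk system by iterated compression; the triangulation bookkeeping through at most $O(|M|)$ compression steps is where I expect the $|M|^3$ in the exponent to appear. Crucially, all of this preprocessing depends only on $M$ and not on $\gamma$: once a maximal disk system and the induced triangulation of $\partial M$ are computed once, each query curve $\gamma$ is handled in $\mathrm{poly}(|\gamma|)$ time by the surface algorithm, giving the fixed-parameter tractable statement. The main obstacle I anticipate is purely technical rather than conceptual: propagating the normal coordinates of $\gamma$ coherently through the compressions and confirming that the resulting triangulation of the cut-up boundary has size polynomial in $|M|$, so that the surface normal-subgroup-membership subroutine can be invoked as a black box without losing the polynomial dependence on $|\gamma|$.
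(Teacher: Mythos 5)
Your high-level architecture matches the paper's: reduce contractibility in $M$ to disjoint normal subgroup membership on $\partial M$ via compression disks and the Loop Theorem, certify those disks with vertex normal surfaces, invoke the compressed surface algorithm, and derandomize by enumerating vertex solutions. Your observation that the preprocessing depends only on $M$ (giving FPT) is also correct, and the accounting for the $2^{O(|M|^3)}$ exponent is in the right spirit (linearly many surfaces, each with $O(|M|^2)$-bit vertex normal coordinates).

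However, there is a genuine gap in the crucial reduction step. Your ``key algebraic fact'' --- that for arbitrary $M$ the kernel of $\pi_1(\partial M)\to\pi_1(M)$ is normally generated by the boundaries of a maximal compressing disk system --- is not established, and the sketch ``iterated Loop Theorem / compression body argument'' papers over the real difficulty. The paper does \emph{not} work with an arbitrary (possibly reducible) $M$: it first crushes along normal reducing spheres (Burton, Jaco--Rubinstein) to produce an irreducible manifold $M^i$, observing that crushing preserves $\partial M$ and only undoes connected sums, so a boundary curve is contractible in $M^i$ iff in $M$ since free products preserve nontriviality. Irreducibility is then used twice: (i) the Jaco--Tollefson theorem producing a \emph{complete} vertex normal compression disk system (with the complement of the compression body incompressible on both sides) is stated for irreducible manifolds; and (ii) the innermost-disk argument in the Loop Theorem step relies on this incompressibility, which can fail in a reducible $M'$ where interior spheres create spurious compression disks. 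Your proposal as written skips this entire stage, so the certificate you describe (just the disks) would be insufficient --- the paper's certificate additionally records the reducing spheres and intermediate crushed triangulations, which is where the cubic size actually arises. Separately, you do not address non-orientable $M$ (the paper handles this with a polynomial-time lift to the orientable double cover while transporting the compressed word), and your final paragraph about ``propagating the normal coordinates of $\gamma$ through the compressions'' conflates the two compression models: $\gamma$ is carried as a straight-line program through the Erickson--Nayyeri street-complex retriangulation of $\partial M$, not as normal coordinates through 3-dimensional compressions.
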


Our proof of Theorem~\ref{t:3mcontract} will reduce the problem to another problem involving only curves on surfaces. However, there is an important subtlety. Many algorithms in $3$-dimensional topology involve an exponential blow-up in the size of the objects under consideration. This is the case for example for \textsc{Unknot Recognition}, where the classical algorithm~\cite{hasslagariaspippenger} looks for a disk spanning the knot, but this disk might be exponentially large~\cite{hasssnoeyinkthurston}. In order to get \textbf{NP} membership, this exponential blow-up is controlled using a compressed\footnote{There are two different forms of compression going on in this paper: normal coordinates and straight line programs. We keep the name \textit{compressed} for the latter, while a curve or a multicurve encoded with normal coordinates will be simply called a \textit{normal (multi-)curve}.} system of coordinates called \textit{normal coordinates} (see Section~\ref{S:background} for definitions). Our approach follows a similar scheme and suffers from a similar blow-up, which forces us to deal with normal curves. Precisely, we show that the problem of contractibility of an arbitrary curve on the boundary of a 3-manifold reduces to the following problem. Given a family of disjoint normal closed curves $\Delta$ on a triangulated surface $S$, and a curve $\gamma$ on $S$, decide if $\gamma$ lies in the normal subgroup of the fundamental group determined by the curves of $\Delta$ (appropriately connected to the basepoint). This is equivalent to deciding if the curve $\gamma$ is contractible in the space resulting from $S$ by gluing a disk to each component of $\Delta$. We call this problem the \textit{disjoint normal subgroup membership} problem, and we provide an algorithm to solve it in polynomial time, despite the highly compressed nature of $\Delta$ and $\gamma$.

\begin{theorem}\label{t:nsm}
Let an orientable triangulated surface $S$, a closed normal multi-curve $\Delta$, and a compressed curve $\gamma$ be given as the input. There is a polynomial-time algorithm for deciding the disjoint normal subgroup membership problem for $\gamma$ and $\Delta$ on $S$.
\end{theorem}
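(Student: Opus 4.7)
The plan is to reduce the disjoint normal subgroup membership problem to a family of compressed contractibility tests on closed surfaces, which the paper already solves in polynomial time. Let $X$ be the $2$-complex obtained from $S$ by attaching a disk $D_l$ along each component $\Delta_l$ of $\Delta$; then $\gamma$ lies in the normal closure of $\Delta$ if and only if $\gamma$ is null-homotopic in $X$. I would replace $X$ by a homotopy-equivalent model built as follows. Cut $S$ along $\Delta$ to obtain a (possibly disconnected) surface $S'$ with paired boundary circles $\Delta_l^{\pm}$, cap each boundary circle with a disk to produce a disjoint union $\hat{X} = \bar{S}_1 \sqcup \cdots \sqcup \bar{S}_c$ of closed surfaces, and then add for each $l$ an arc $e_l$ joining chosen basepoints of the two caps $D_l^{\pm}$. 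A direct deformation argument---a tubular neighborhood of $\Delta_l$ together with $D_l$ deformation retracts onto the pair of caps with an arc joining their centers---yields $X \simeq \hat{X} \cup \bigsqcup_l e_l$. A matching Euler characteristic computation gives $\chi(X)=\chi(\hat X)-k$, which is consistent; contracting a spanning tree of this graph of surfaces gives a wedge $\bigvee_i \bar{S}_i \vee \bigvee_{j=1}^m S^1$ with $m = k - c + 1$, and Seifert--van Kampen yields $\pi_1(X) \cong \pi_1(\bar{S}_1) * \cdots * \pi_1(\bar{S}_c) * F_m$.

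The next step is to represent $\gamma$ as a word in this free product. The curve $\gamma$ crosses $\Delta$ in a cyclic sequence of intersection points; each intersection contributes an edge letter $e_l^{\pm 1}$, and each arc of $\gamma$ between consecutive intersections lies in a single component $S_i'$ and extends through the caps to a loop in $\bar{S}_i$. I would compute, from the SLP description of $\gamma$ and the normal coordinates of $\Delta$, a \emph{compressed trace}: an SLP-encoded alternating word whose bulk letters are SLP-encoded paths in the various $\bar{S}_i$ and whose edge letters are the $e_l^{\pm 1}$. Given such a trace, contractibility in $X$ is tested by the standard free-product normal form---repeatedly merge adjacent letters belonging to the same factor (keeping them compressed), then delete any letter that is trivial in its factor, and iterate. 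Triviality of a bulk letter in $\bar{S}_i$ is decided by the polynomial-time compressed contractibility algorithm on closed surfaces already established in this paper; triviality of an $F_m$-letter reduces to standard free reduction on SLPs. The curve $\gamma$ lies in $\langle\langle \Delta \rangle\rangle$ if and only if its trace reduces to the empty word.

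The main obstacle is producing the compressed trace. The cyclic intersection sequence of $\gamma$ with $\Delta$ may have combinatorial length exponential in $|S| + |\Delta| + |\gamma|$, so it must be constructed and manipulated entirely in compressed form. This requires combining the normal-coordinate intersection machinery for $\Delta$ with Plandowski-style SLP operations---subword extraction, pattern matching, and canonical rewriting---to produce bulk letters already in the compressed format consumed by the surface-contractibility subroutine. A further delicate point is guaranteeing that the free-product merges do not blow up the SLP sizes superpolynomially during iterated reduction, which likely requires a carefully designed canonical form for the bulk letters that behaves well under concatenation. Once this combinatorial engine is in place, the free-product reduction and the invocation of the surface-group solver should be largely routine.
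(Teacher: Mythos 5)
Your high-level strategy agrees with the paper: reduce to contractibility in a space homotopy-equivalent to a wedge of closed surfaces and circles, then test triviality of a compressed word in the resulting free product. The wedge model you build by cutting along $\Delta$, capping, and joining by arcs is homotopy-equivalent to the one the paper builds in Proposition~\ref{P:cappingoff} by contracting each component of $\Delta$ to a point and then splitting pinch vertices, and your Euler-characteristic and Seifert--van Kampen bookkeeping is correct. So the target of the reduction is the same.

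However, there is a genuine gap at exactly the step you flag as ``the main obstacle'': producing the compressed trace. Because $\Delta$ is given by normal coordinates, its total number of edges can be exponential in the input size, so ``cut $S$ along $\Delta$ and record the cyclic intersection sequence of $\gamma$ with $\Delta$'' is not an operation you get for free even in compressed form. The normal-coordinate intersection machinery you gesture at does not directly produce an SLP for the trace of a \emph{self-intersecting}, itself SLP-compressed $\gamma$ through the (exponentially subdivided) annular neighborhood of $\Delta$. The paper resolves precisely this difficulty with a nontrivial technical device: Proposition~\ref{P:retriangulation}, which runs the Erickson--Nayyeri street-complex tracing on $\Delta$ while simultaneously maintaining a composition system recording how each terminal edge of $\gamma$'s SLP crosses streets, junctions, and spirals. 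Only after this retriangulation --- which makes $\Delta$ polynomially short in the new $1$-skeleton --- can one safely cap off and recover a compressed walk in the wedge. Without an explicit analogue of this step, your reduction to the wedge is not algorithmic.

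Your proposed back end --- iterated free-product normal-form reduction, calling a surface-group oracle on each bulk letter --- is also genuinely different from the paper's and carries an unresolved issue that you yourself point out: bounding the number of reduction rounds and the SLP sizes under repeated merge-and-delete. The paper sidesteps iterated free-product reduction entirely: it treats the whole wedge at once via turn sequences extended with ``jump'' symbols between wedge summands, and computes rightmost and leftmost canonical forms in a single bottom-up pass over the SLP (Lemmas~\ref{L:removingstaircases} and~\ref{L:reduction}), with a careful argument that each production rule introduces at most a bounded number of new spurs or brackets after one absorption step. Proving that your iterated merge-and-delete converges in polynomially many rounds with polynomial SLP growth would require its own argument, and it is not obviously routine. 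In short: the skeleton of the reduction is right, but the two pieces you defer --- the compressed trace construction and the polynomial control of the free-product reduction --- are precisely where the paper's technical work lives, and neither is supplied here.
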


To solve the disjoint normal subgroup membership problem, one of our technical tools is a polynomial-time algorithm to test triviality of a compressed closed curve on a surface, which might be of independent interest. In fact, we compute a canonical form for such a curve, in the sense that there is a unique canonical representative in each based homotopy class.

\begin{theorem}\label{t:2mccontract}
Let an orientable triangulated surface $S$ and a compressed loop $\gamma$ on $S$ described by a straight-line program be given. Then one can transform $\gamma$, in polynomial time, into a canonical form in its based homotopy class, given as a compressed word. In particular, we can test in polynomial time whether $\gamma$ is contractible.
\end{theorem}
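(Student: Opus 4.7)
The plan is to reduce the topological problem on $S$ to a compressed word problem in $\pi_1(S)$ and then solve the latter with SLP-aware algorithms.

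First I would set up a polygonal presentation of $\pi_1(S,v_0)$. Pick a basepoint $v_0$ and a spanning tree $T$ of the 1-skeleton of $S$; the non-tree edges generate $\pi_1(S,v_0)$, the triangles give short relators, and standard Tietze simplification reduces this presentation in polynomial time to either a free group (when $\partial S\neq\emptyset$, or $S$ is a sphere) or the polygonal presentation $\langle a_1,b_1,\ldots,a_g,b_g\mid\prod_{i=1}^g[a_i,b_i]\rangle$ for a closed orientable genus-$g$ surface. After homotoping $\gamma$ into the 1-skeleton and collapsing $T$, the input SLP becomes an SLP of polynomial size for a word $w$ in these generators.

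I would then compute the canonical form of $w$ case by case on the topology of $S$. If $\pi_1(S)$ is free, I would apply the known polynomial-time compressed free reduction algorithm, whose reduced-word output is canonical. If $S$ is a torus, $\pi_1(S)\cong\mathbb{Z}^2$ and a single recursive traversal of the SLP computes the signed exponent sums of the two generators, yielding the canonical form $a^p b^q$. The sphere, disk and projective plane are trivial.

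For a closed orientable surface of genus $g\geq 2$, $\pi_1(S)$ is word-hyperbolic and the relator $R=\prod_i[a_i,b_i]$ satisfies Dehn's condition: any non-geodesic word contains a subword which is more than half of a cyclic conjugate of $R$, and replacing such a subword by its shorter complement strictly shortens the word. Since $|R|=4g$ is polynomial in $|S|$, I can use compressed pattern matching and SLP equality testing to locate and perform such a replacement in polynomial time. To avoid an exponential number of reductions on a compressed input, I would process the SLP by divide and conquer: for each production $A\to BC$, recursively compute canonical SLPs for $B$ and $C$, concatenate them, and apply Dehn reductions only along the new seam. The linear isoperimetric inequality in $\pi_1(S)$ bounds the number of seam reductions by a constant depending only on $|S|$, and fixing a shortlex order on the generators yields uniqueness.

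The main obstacle will be controlling the size of the SLPs across the recursive Dehn reductions, and showing that the overall procedure stays polynomial in $|S|$ and in the SLP size of $\gamma$. This relies on combining the efficient SLP operations (substring extraction, equality testing, and compressed pattern matching) with the fact that the hyperbolicity constants of $\pi_1(S)$ are explicit and polynomial in $|S|$. Once the canonical form is available, deciding contractibility just amounts to testing whether the output SLP represents the empty word.
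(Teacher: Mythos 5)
The overall outline (homotope into the $1$-skeleton, pass to a one-vertex polygonal presentation, then canonicalize the compressed word) parallels the paper, and your handling of the free, torus, sphere, and disk cases is fine. The genus $\ge 2$ case, however, has a genuine gap.

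The claim that ``the linear isoperimetric inequality in $\pi_1(S)$ bounds the number of seam reductions by a constant depending only on $|S|$'' is false, and it is exactly the point where the real difficulty lives. The linear isoperimetric inequality bounds the number of Dehn moves by a quantity linear in the \emph{uncompressed} length of the word, which is exponential in the SLP size. When you concatenate two reduced words $B$ and $C$, the cascade of reductions at the seam can be exponentially long even in the free case (take $C=B^{-1}$: you need $|B|$ free cancellations), and in genus $\ge 2$ there is a subtler ``staircase'' phenomenon (Figure~\ref{F:snakes} in the paper) where exponentially many bracket/relator moves propagate across the seam even though each of $B$ and $C$ is already reduced, and the reductions are not mere free cancellations. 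So a divide-and-conquer that applies Dehn moves one at a time at each seam is an exponential-time algorithm, not a polynomial one. You acknowledge the obstacle, but the proposed resolution does not resolve it.

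What the paper actually does to get around this is the content of Lemma~\ref{L:removingstaircases} and the \textsc{Reduction Algorithm}: for each nonterminal it maintains \emph{both} the rightmost and the leftmost canonical form, and at a production $A_i\to A_jA_k$ it detects the entire exponential cascade at once by computing the longest common prefix of $A_j^R$ reversed and $A_k^L$ via compressed prefix-matching (the last bullet of Lemma~\ref{algo:main}). The structural results of Despr\'e--Lazarus (Theorems~\ref{T:uniqueness} and~\ref{T:geodesics}) guarantee that after this single bulk step, only $O(1)$ remaining brackets and shifts survive. Maintaining the two dual canonical forms is the key idea missing from your proposal; without it, there is no apparent way to recognize and excise the exponential seam cascade in one compressed operation. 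Your outline is closer in spirit to Schleimer's sketch in~\cite[Appendix~A]{Schleimer2008}, which the paper cites and which avoids the same issue by a different device (``well-tempered paths''). If you want to pursue your route, you would need to supply a concrete mechanism for performing an exponentially long run of Dehn/shortlex reductions at a seam in a single polynomial-time compressed step; a constant-move bound from hyperbolicity will not do it.
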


We emphasize that in Theorem~\ref{t:2mccontract} the surface $S$ is part of the input. The analogous theorem, if we assume that surface $S$ is not part of the input, has been known in a much broader context, as we discuss in the next section.

\subsection{Related work}

\subparagraph*{Algorithms in $3$-manifold topology} There is now a large body of research on the computational complexity of problems in $3$-manifold topology and knot theory. To paint a picture in broad strokes, most topological problems are now known to be decidable, using a combination of geometric, algebraic and topological tools and the complexity of the best known algorithms range from exponential~\cite{hasslagariaspippenger,agolhassthurston} to quite galactic, see for example Kuperberg~\cite{kuperberghomeomorphism} for the homeomorphism problem. In contrast, only very few complexity lower bounds are known~\cite{agolhassthurston,lackenbyhard,unbearable} and for many problems, including ours, it is open whether a polynomial-time algorithm exists. One particular tool that our work relies on is \textit{normal surface theory}, which provides a powerful framework to enumerate and analyze the topologically relevant embedded surfaces in a $3$-manifold. Actually, tools from normal surface theory~\cite{schubert} provide an off-the-shelf algorithm proving that deciding contractibility of a \textit{simple} curve on the boundary of a $3$-manifold is in \textbf{NP}, as explained in~\cite[Section~3]{cdvparsa}. However, normal surface theory is ill-adapted to study surfaces that are not embedded~\cite{immersed}, hence the more technical path that we follow in this paper.  We refer to Lackenby~\cite{lackenbysurvey} for a recent and extensive survey on algorithms in $3$-manifold topology.

\subparagraph*{Geometric and combinatorial group theory} The problem of deciding the contractibility of a curve in a topological space, which is given for example as a finite simplicial complex, can be equivalently rephrased as a problem of testing the triviality of a word in a finitely presented group, the fundamental group of the space. In the case of 2- and 3-manifolds, the geometry or topology of the underlying manifold has a strong impact on the properties of the group, and there is a vast amount of literature on this interaction, see for example~\cite{delaharpe}. We refer to Aschenbrenner, Friedl and Wilton~\cite{aschenbrennerfriedlwilton,aschenbrennerfriedlwilton2} for the case of $3$-manifold groups. Of particular interest for the word problem is the notion of a \textit{Dehn function}, which upper bounds the area of a disk certifying the triviality of a word, and thus controls the complexity of a brute-force algorithm to solve the word problem. For $3$-manifold groups, this function can be exponential~\cite[Theorem~8.1.3]{wordprocessing}, making such an approach at least exponentially worse than ours. When the Dehn function of a group presentation is polynomial (as is the case, for instance, for the fundamental group of a compression body), it can be used to prove that the word problem is in \textbf{NP}. However, this only works for a fixed presentation. In our problem, the group presentation itself is part of the input, and we cannot use the bound on the Dehn function directly.

\subparagraph*{Algorithms for compressed curves and surfaces} As we hinted at in the introduction, $3$-dimensional algorithms naturally tend to involve exponentially large objects, which are generally compressed using normal coordinates. This incurs a need for efficient algorithms dealing with normal curves and surfaces, even for the most basic tasks. For example, it is not easy to test whether a curve described by normal coordinates is connected. By now, there are many known techniques to handle topological problems on compressed curves, see for example Agol, Hass and Thurston~\cite{agolhassthurston}, Schaefer, Sedgwick and Stefankovi\v{c}~\cite{sss1,sss2}, Erickson and Nayyeri~\cite{Erickson2013}, Bell~\cite{bell} or Dynnikov~\cite{dynnikov2} and Dynnikov and Wiest~\cite{dynnikovwiest}. None of these techniques seem to directly solve our disjoint normal subgroup membership problem, but we rely extensively on the work of Erickson and Nayyeri~\cite{Erickson2013} as a subroutine (see Section~\ref{S:retriangulation}). However, normal coordinates do not describe curves with self-intersections, which are our main object of interest in this paper. This is why we rely on a different compression model in the form of \textit{straight-line programs}. The use of such programs in geometric group theory is not new, and in particular there is a sketch of an algorithm to test triviality of compressed curves on a surface of genus $3$ in an appendix of Schleimer~\cite[Appendix~A]{Schleimer2008}. His approach seems to be readily generalizable to higher genus, but the dependency on the genus is not made explicit. More recently, Holt, Lohrey and Schleimer~\cite{holtlohreyschleimer} provided a polynomial-time algorithm to test triviality of compressed words in \textit{hyperbolic groups}, of which (most) surface groups are a subclass. The difference with our Theorem~\ref{t:2mccontract} is that in their result, the group presentation is not part of the input.
While both approaches could plausibly be used in our setting, we rely on different tools to provide a complete proof of Theorem~\ref{t:2mccontract}: our approach treats the surface as an input and works for any genus, and seamlessly generalizes to the case of wedges of surfaces that we also need.

\subsection{Summary of our techniques}
The standard methods of normal surface theory allow us to reduce our main contractibility problem to the case where the input manifold belongs to a particular class of manifolds called \textit{compression bodies}, see Section~\ref{S:3dstuff}. The fundamental group of compression bodies is roughly a free product of surface groups, and thus one can rely on known algorithms for surfaces~\cite{tcs,lazarusrivaud} to solve the contractibility problem there. However, due to the exponential size of normal surfaces, this would only yield an exponential time algorithm. In order to improve on this, we identify the precise problem that we need to solve to be the disjoint normal subgroup membership problem and set out on a quest to solve it efficiently. While an \textbf{NP} algorithm would suffice for our main result, we will develop a polynomial-time algorithm.

The disjoint normal subgroup membership problem is made delicate by the compressed nature of the curves in $\Delta$. If they were given as disjoint curves, say, on the $1$-skeleton of the surface, we could glue a disk on each of them, and observe that the resulting complex is homotopically equivalent to a wedge of surfaces\footnote{A \textit{wedge} of a family topological spaces is the space obtained after attaching them all to a single common point. Actually, there are also circle summands here -- we do not mention them in this outline to keep the discussion light.}, allowing us to reduce the problem to the triviality of a curve in a surface (see Section~\ref{S:cappingoff} for a description of the homotopy equivalence). In order to do so despite the compression, we compute in Section~\ref{S:retriangulation} a \textit{retriangulation} of the surface, so that the curves in $\Delta$ are only polynomially long in the new triangulation. This is done by tracing the normal curve and building the \textit{street complex} which was specifically designed by Erickson and Nayyeri to handle normal curves on surfaces in polynomial time. We then track how the street complex interacts with our input curve $\gamma$. This operation comes at a cost: even if the input curve $\gamma$ was not compressed, it may become exponentially long after the retriangulation. Since it may be not simple, we rely on straight-line programs instead of normal curves to encode it. At this stage, we note that it might as well have been a compressed straight-line program from the start.

Finally, we want to test the triviality of a compressed curve in a wedge of surfaces. The fundamental group of this wedge is a free product of fundamental groups of surfaces, and thus the crux of the problem is to solve it in a single surface. While there are known techniques to test the contractibility of a curve on a surface, based on local simplification rules~\cite{tcs,lazarusrivaud}, the compressed nature of our curves is once again a stumbling block here, as we need to be careful to never apply an exponential number of such simplification rules. The solution sketched by Schleimer~\cite[Appendix~A]{Schleimer2008} to that issue is to introduce a family of \textit{well-tempered paths}, which are carefully designed to not necessitate such exponential simplifications. We use a different approach: we rely on the standard tools developed for the non-compressed case, namely quad systems and turn sequences, and detect exponential simplification and realize them all at once in polynomial time. This relies on recent insights on the structure of these quad systems~\cite{desprelazarus}~\cite[Section~4.3]{dagstuhl}.

\section{Background}\label{S:background}

We begin by recalling some key definitions and results, although we assume that the reader is familiar with basic algebraic topology such as homotopy and fundamental groups; we refer to Hatcher~\cite{hatcher} or Stillwell~\cite{stillwell} for more detailed definitions.

\subsection{3-dimensional Manifolds}

A $3$-manifold is a topological space where each point is locally homeomorphic to a $3$-dimensional ball or a $3$-dimensional half-ball. The points of the latter type form the \textit{boundary} of the $3$-manifold, which is always a surface. In this paper, $3$-manifolds are always assumed to be triangulated, and we use common and a less restrictive definition than simplicial complexes: a \textit{triangulation} of a $3$-manifold $M$ is a collection of $n$ abstract tetrahedra,  along with a collection of gluing pairs which identify some of the $4n$ boundary triangles in pairs.
In particular, we allow two faces of the same tetrahedron to be identified. If we add the further restriction that the neighborhood of each vertex is a $3$-ball or a half $3$-ball and no edge gets glued to itself with the opposite orientation, then the resulting space is always a 3-manifold~\cite{moise}.

We briefly recall normal surface theory, which will be used in Section~\ref{S:3dstuff}, and refer the reader to Hass, Lagarias and Pippenger~\cite{hasslagariaspippenger} and Burton~\cite{burton} for a more detailed overview and background of the topic. A \textit{normal isotopy} is an isotopy of a triangulated $3$-manifold that leaves each cell of the triangulation invariant.
A \textit{normal surface} in a triangulation is a properly embedded surface transverse to the triangulation; such a surface decomposes into a disjoint collection of \textit{normal disks}, each embedded in a single tetrahedron. Each disk inside a tetrahedra is either a triangle, which separates one vertex in a tetrahedron from the others, or a quadrilateral which separates two vertices from the other two. In each tetrahedron, there are (up to normal isotopy) four types of such triangles, and three types of quadrilaterals. The normal surface can be reconstructed using only the number of these triangles and quadrilaterals in each tetrahedron, as long the collection of disks satisfies a family of equations  called \textit{matching equations} and has at most one quadrilateral type in each tetrahedron. These numbers are called the \textit{normal coordinates}, and since $n$ bits encode numbers up to $2^n$, these normal coordinates naturally have exponential compression: normal coordinates of polynomial bit-size encode surfaces containing possibly exponentially many normal disks.
Nevertheless, one can compute the Euler characteristic of a normal surface in polynomial time in the bit-size encoding, since it is a linear form~\cite[Formula 22]{hasslagariaspippenger}.
Finally,  \textit{crushing} a triangulation $T$ along a normal surface roughly consists, for each tetrahedron containing at least one quadrilateral, in removing that quadrilateral and gluing its neighbours pairwise along the two pair of edges not intersected by the quadrilateral, as in Figure~\ref{F:quads}. Then one tears apart in multiple components the triangulation at the places where it is not a $3$-manifold. This can be done in polynomial time, and surprisingly, this very destructive process has a well-controlled topological meaning when the normal surface is a sphere~\cite{burton}.

\subsection{Curve and surface representations}
We shall often represent a surface by a graph cellularly embedded in that surface. This embedding is encoded as a \textit{combinatorial} surface thanks to a rotation system over the edges of the graph~\cite{mt-gs-01}. Equivalently, one can define a combinatorial surface as a gluing of polygons whose sides are pairwise identified. When all the polygons are triangles, we speak of a \textit{triangulation}, or a \textit{triangulated surface}. Note that we allow a triangle to have two sides identified after the gluing or two triangles to share two vertices but no edge, etc.

The \textit{complexity} of a surface is  the number of cells (vertices, edges and faces) of the complex induced by the graph embedding. A curve on such a surface can be represented or encoded in a variety of ways. For instance, any curve is homotopic to a curve defined by a walk on the 1-skeleton of the complex. The complexity of the curve is the number of edges in the walk counted with repetition. One can also consider curves in general position avoiding the vertices of the graph and cutting its edges transversely as in the next section, in which case we say  the complexity of a curve is its number of intersection points with the graph.

We next outline two different notions of compressed curve representations.
To repeat the footnote of warning in the introduction: normal coordinates and straight line programs are two different methods for compressed representations, both of which we use in this paper. We keep the name \textit{compressed} for straight line programs, while a curve or a multicurve encoded with normal coordinates will be simply called a \textit{normal (multi-)curve}.

\subparagraph*{Normal curves}
Let $S$ be a triangulated surface. A \textit{multi-curve} is a disjoint family of curves embedded on a $S$.
A multi-curve $c$ is \textit{normal} if it is in general position with respect to the triangulation of $S$ and if every maximal arc of $c$ in a triangle has its endpoints on distinct sides. In particular, no component of a normal curve is contained in a single triangle.
A normal curve may have three types of arcs in a triangle, one for each pair of sides. Counting the number of arcs of $c$ of each type in each of the $t$ triangles of the triangulation, we get $3t$ numbers called the \textit{normal coordinates} of $c$. As for triangulated 3-manifolds, a  \textit{normal isotopy} of $S$ is an isotopy that leaves each cell of the triangulation invariant.
Up to a normal isotopy, $c$ is completely determined by its normal coordinates.
It is thus possible to encode a multi-curve with exponentially many arcs by storing its normal coordinates, which then have polynomial bit-complexity.
A normal multi-curve is \textit{reduced} if no two of its components are normally isotopic.

\subparagraph*{Straight-line programs}
Another method of compressing curves is to think of a curve as a word over an alphabet. In this encoding, a letter of the alphabet corresponds to a directed edge on the surface and juxtaposition of letters corresponds to concatenation of paths. Therefore, we can consider curves as abstract words and to represent them using compressed representations of abstract words. The length of a word $w$, denoted by $|w|$, is the number of letters in it.

A \textit{straight-line program} is a four-tuple $\mathbb{A}=\langle \mathcal{L},\mathcal{A},A_n,\mathcal{P}\rangle$ where:
\begin{itemize}
\item $\mathcal{L}$ is a finite alphabet of \textit{terminal} characters,
\item $\mathcal{A}$ is a disjoint alphabet of \textit{nonterminal} characters,
\item $A_n\in \mathcal{A}$ is the \textit{root}, and
\item $\mathcal{P}=\{A_i \rightarrow W_i\}$ is a sequence of \textit{production rules}, where $W_i$ is a word in $(\mathcal{L} \cup \mathcal{A})^*$ containing only non-terminals $A_j$ for $j<i$.
\end{itemize}

A straight-line program is in \textit{Chomsky normal form} if every production rule either has two non-terminals or a single terminal on the right. Every straight-line program can be put in polynomial time into Chomsky normal form by adding intermediate non-terminals, and thus we will always assume that a straight-line program is in Chomsky normal form. We denote by $w(A)$ the word encoded by a non-terminal character $A$, or sometimes simply by $A$ when there is no confusion. Its length is denoted by $|A|$. We will always use a terminal alphabet $\mathcal{L}$ equipped with an involution $a \mapsto \bar{a}$. The \textit{reversal} of a word $w$ is the word obtained by changing its letter $a$ by its reverse $\bar{a}$ and reversing the order of the letters.

\textit{Composition systems} are straight-line programs of a more advanced type, where productions of the form $P\rightarrow A[i:j]B[k:l]$ are allowed, and the meaning is that $A[i:j]$ represents the subword between the $i+1$th and $j$th letter (both included, starting the numbering at $1$) of the word that $A$ encodes. We take the convention that negative indices count from the end of the word, and that $A[i:]$, $A[i]$ and $A[:j]$ are shorthand respectively for $A[i:-1]$, $A[i:i+1]$ and $A[0:j]$.
While composition systems might seem more powerful than straight-line programs, a theorem of Hagenah~\cite{hagenah} says that given any composition system, one can compute in polynomial time an equivalent straight-line program. Henceforth, we will slightly abuse language and freely use the $[\cdot:\cdot]$ construct in our straight-line programs.

The following theorem summarizes the algorithms on straight-line programs that we will rely on, see Schleimer~\cite{Schleimer2008} or Lohrey~\cite{lohrey2}.

\begin{lemma}\label{algo:main}
Let $\mathbb{A}$ and $\mathbb{X}$ be two straight-line programs, $i$ be an integer and $e$ be a letter in the terminal alphabet of $\mathbb{A}$. Then one can, in polynomial-time,
\begin{itemize}
\item compute the length of $\mathbb{A}$,
\item output the letter $\mathbb{A}[i]$,
\item find the greatest $j$ so that $\mathbb{A}[j]=e$,
\item compute a straight-line program $\overline{\mathbb{A}}$ for the reversal word $\overline{w(\mathbb{A})}$,
\item decide whether $w(\mathbb{A})=w(\mathbb{X})$,
\item find the biggest $k$ such that $\mathbb{A}[:k]=\mathbb{X}[:k]$.
\end{itemize}
\end{lemma}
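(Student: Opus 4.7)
The plan is to handle the six operations in increasing order of difficulty: the first four by direct dynamic programming on the parse DAG of the straight-line program, the fifth by invoking the known polynomial-time equality test for SLPs (Plandowski, with Jeż's simpler recompression-based proof), and the sixth by reducing to the fifth via binary search and the composition-system machinery validated by Hagenah's theorem.

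First, for the length of $\mathbb{A}$, I would compute $|A_i|$ bottom-up using the recurrence $|A_i|=|B|+|C|$ when $A_i \to BC$ and $|A_i|=1$ when $A_i$ produces a terminal. Since there are polynomially many non-terminals and each recurrence uses values already computed for smaller indices, this runs in polynomial time (although the lengths themselves may be exponential, they are stored in binary and thus have polynomial bit-size). For the letter $\mathbb{A}[i]$, I would walk the parse DAG from the root: at a rule $A \to BC$, I recurse into $B$ if $i\leq |B|$ and into $C$ with index $i-|B|$ otherwise, stopping when I reach a terminal. Each step descends one level in the SLP, so there are at most $|\mathcal{A}|$ steps. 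For the greatest $j$ with $\mathbb{A}[j]=e$, I would precompute, for each non-terminal $A$, a boolean $\mathrm{occ}_e(A)$ indicating whether $e$ appears in $w(A)$, using $\mathrm{occ}_e(A)=\mathrm{occ}_e(B)\vee \mathrm{occ}_e(C)$. Then I traverse from the root choosing the right child if $\mathrm{occ}_e(C)$ is true and otherwise the left, combining the running offset as in the previous item to return the absolute index. For the reversal, I add, for each non-terminal $A_i \to BC$, a new rule $\overline{A_i} \to \overline{C}\,\overline{B}$ (and $\overline{A_i}\to\bar a$ for terminal rules); this produces an SLP of the same size in linear time that evaluates to $\overline{w(\mathbb{A})}$.

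The hard step is deciding whether $w(\mathbb{A})=w(\mathbb{X})$. The naive approach of unfolding both SLPs fails because their common length may be exponential. Here I would invoke Plandowski's theorem (or Jeż's simpler recompression-based proof) which gives a polynomial-time decision algorithm for the equality of words produced by two SLPs; this is exactly the black box cited via Schleimer and Lohrey. This single primitive is what powers the rest of the lemma.

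Finally, for the longest common prefix length $k$, I would do a binary search on $k\in\{0,1,\dots,\min(|\mathbb{A}|,|\mathbb{X}|)\}$. For each candidate $k$, I need an SLP for $\mathbb{A}[:k]$ and $\mathbb{X}[:k]$, which is exactly a composition-system construction; by Hagenah's theorem these can be converted to genuine SLPs in polynomial time, and then the equality query above decides whether the two prefixes agree. Since the binary search uses $O(\log |\mathbb{A}|)$ queries and $\log|\mathbb{A}|$ is polynomial in the input size, the total running time remains polynomial. The main obstacle throughout is therefore really the SLP equality test; every other item either is a short traversal of the parse DAG or reduces to a small number of equality queries via composition systems.
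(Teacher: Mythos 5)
The paper does not supply its own proof of this lemma: it states it as a summary of standard facts and simply cites Schleimer~\cite{Schleimer2008} and Lohrey~\cite{lohrey2} (where the decisive ingredient, SLP equality testing, traces back to Plandowski, with a modern recompression-based proof by Je\.z). Your proof fills in this gap correctly. The first four items (length, random access, last occurrence, reversal) are handled by the standard bottom-up dynamic programming and top-down parse-DAG traversal, exactly as in the references. For equality you invoke Plandowski/Je\.z, which is the right black box and indeed the only genuinely hard item. For the longest common prefix, your reduction to equality via binary search over $k$ together with Hagenah's theorem (to turn the composition-system prefixes $\mathbb{A}[:k]$ and $\mathbb{X}[:k]$ into honest SLPs) is valid: the predicate ``$\mathbb{A}[:k]=\mathbb{X}[:k]$'' is monotone in $k$, the search range is bounded by $\min(|\mathbb{A}|,|\mathbb{X}|)$ whose logarithm is polynomial in the input bit-size, and each query costs polynomial time. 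In the literature the longest common prefix is more often computed directly inside the recompression machinery rather than by binary search, which saves the logarithmic factor, but your approach is correct and arguably more modular since it reuses the equality oracle as a black box. The only minor omission is that you should handle the edge case where $e$ does not occur at all in $w(\mathbb{A})$ in item three (return a failure flag when $\mathrm{occ}_e$ is false at the root), but this is trivial.
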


For a given $2$-complex $K$ (in particular if $K$ is a combinatorial surface), a straight-line program with terminal alphabet $\mathcal{L}$ the set of directed edges of $K$ can encode any closed curve on $K$. We call such an encoding a \textit{compressed curve} or \textit{walk}. Simple operations on $K$ can seamlessly be done while updating a compressed curve appropriately to obtain a homotopic compressed curve in the modified complex. For example, contracting an edge $e$ of $K$ simply boils down to adding a production rule $e \rightarrow \varepsilon$, where $\varepsilon$ is the empty word.
Likewise, deleting an edge $e$ bounding a face $\bar{e}p$, where $p$ is the complementary path in that face, amounts to replacing each occurrence of $e$ by a $p$ via a new production rule $e \rightarrow p$.

When no encoding is specified for a closed (multi-)curve, then it is simply encoded as a (multi-)walk on the one skeleton of the underlying surface or complex.

\section{From contractibility to normal subgroup membership problem}\label{S:3dstuff}

The following proposition reduces the contractibility problem for closed curves on the boundary of an orientable 3-manifold $M$ to the normal subgroup membership problem for a boundary surface of $M$, where the multi-curve $\Delta$ is a reduced normal curve with polynomial complexity. We note that this reduction is the only place where the algorithm is non-deterministic, the rest of our algorithms run in deterministic polynomial time.

\begin{proposition}\label{P:from3dto2d}
Let $M$ be an orientable triangulated 3-manifold with boundary. There exists a reduced normal multi-curve $\Delta$ on $\partial M$ with normal coordinates of linear bit-size (with respect to $|M|$), so that for any curve $\gamma$ on $\partial M$, $\gamma$ is contractible in $M$ if and only if $\gamma$ is contained in the normal subgroup $N$ of $\pi_1(\partial M)$ generated by the homotopy classes of the components of $\Delta$.

Further, given $M$, $\Delta$, and a cubic-sized certificate (in $|M|$), there is a polynomial-time algorithm to verify that $\Delta$ has the property that all curves in $N$ are contractible in $M$.
\end{proposition}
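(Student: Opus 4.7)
The plan is to obtain $\Delta$ as the boundary of a maximal disjoint system of essential compressing disks. Let $D_1,\ldots,D_k$ be a maximal collection of pairwise disjoint, properly embedded, essential compressing disks in $M$ (no two parallel in $M$ and none isotopic into $\partial M$). Cutting $M$ along these disks gives a manifold $M'$ whose new boundary components lie in the boundary of $M'$, and maximality ensures that $\partial M'$ is incompressible in $M'$ (otherwise a further compressing disk could be added, contradicting maximality). A standard 3-manifold argument, via Bonahon's characteristic compression body decomposition or direct repeated application of the Loop Theorem, then shows that the kernel $K$ of the inclusion-induced map $\pi_1(\partial M) \to \pi_1(M)$ is precisely the normal closure in $\pi_1(\partial M)$ of the loops $\partial D_1,\ldots,\partial D_k$. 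Setting $\Delta$ to be the multi-curve $\partial D_1 \cup \cdots \cup \partial D_k$, discarding parallel copies to make it reduced, we get that $\gamma$ is contractible in $M$ iff $\gamma \in N$, the normal closure of $\Delta$.

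For the size bound, I would invoke normal surface theory twice. First, by a standard application of normal-surface isotopy, each essential disk $D_i$ can be taken to be a \emph{fundamental} (in fact vertex) normal surface; such surfaces have normal coordinates bounded by $2^{O(|M|)}$, hence of linear bit-size in $|M|$. Second, by Kneser--Haken finiteness, the number $k$ of pairwise non-parallel essential disks in $M$ is $O(|M|)$. The boundary $\partial D_i$ inherits a normal coordinate vector on $\partial M$ from the normal coordinates of $D_i$, and the union (after removing parallel copies) is again encoded by linearly many normal coordinates on $\partial M$ of bit-size linear in $|M|$.

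For the cubic-sized NP certificate, I would simply provide the list of normal coordinates of the $k$ disks $D_1,\ldots,D_k$: each has $O(|M|)$ coordinates of linear bit-size, and there are $O(|M|)$ such disks, yielding total bit-size $O(|M|^3)$. The verifier checks in polynomial time that: (i) the normal coordinates of each $D_i$ satisfy the normal surface matching equations and the quadrilateral compatibility condition (linear constraints); (ii) each $D_i$ is connected, using the Agol--Hass--Thurston counting machinery applied to the normal surface; (iii) $\chi(D_i) = 1$, computed via the linear Euler characteristic formula of Hass--Lagarias--Pippenger; and (iv) the boundary of $D_i$, read off from its normal coordinates restricted to the boundary triangulation, matches the $i$-th component of $\Delta$. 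Conditions (ii)--(iii) certify that each $D_i$ is an embedded disk, and (iv) certifies that its boundary is the corresponding component of $\Delta$; hence every generator of $N$ bounds a disk in $M$ and is therefore contractible, so every curve in $N$ is contractible.

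The main obstacle is the topological content of the first paragraph: showing that a maximal system of compressing disks normally generates $K$. The subtlety is that one must handle sphere boundary components of $M'$ (which requires invoking irreducibility after possibly connect-summing with balls, using the Poincaré conjecture), and assemble the manifold back together via a van Kampen argument on the handle-body decomposition induced by the $D_i$'s. The cleanest reference is Bonahon's characteristic compression body theorem for orientable $3$-manifolds, which directly yields the statement that $K$ is the normal closure of the meridian system of the compression body summand. A secondary technical point is ensuring that each chosen $D_i$ can be made a vertex normal surface: this uses Jaco--Tollefson's theorem that if any essential disk exists then a vertex essential disk exists, applied inductively as we add to the disk system.
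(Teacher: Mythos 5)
Your overall strategy (complete family of compressing disks, vertex normal surfaces for size control, certificate verifying each component of $\Delta$ bounds a disk via the linear Euler characteristic formula) matches the paper's. The key place where your route diverges — and develops a genuine gap — is that you try to work directly in $M$, skipping the step of first making $M$ irreducible. The paper iteratively finds a normal reducing sphere of linear bit-size, crushes along it (which is shown to leave $\partial M$ intact and to only remove connected summands, so it preserves contractibility of boundary curves), and only after reaching an irreducible manifold $M^i$ does it invoke Jaco--Tollefson to obtain a complete family of \emph{disjoint} vertex normal compression disks. Your substitute — ``Jaco--Tollefson applied inductively as we add to the disk system'' — does not work as stated: the Jaco--Tollefson result producing a complete disjoint vertex-normal disk family is formulated for irreducible manifolds, and there is no theorem guaranteeing that given disjoint vertex normal disks $D_1,\dots,D_j$ one can find a new compressing disk that is simultaneously disjoint from them \emph{and} vertex normal. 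The irreducibility reduction is precisely what makes the clean form of this theorem available, and it is the reason the paper's certificate includes the chain of reducing spheres and intermediate manifolds $M^j$, which your certificate omits.

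A second, lesser issue is the topological core — that the kernel of $\pi_1(\partial M)\to\pi_1(M)$ is the normal closure of the $\partial D_i$. You point at Bonahon's characteristic compression body theorem and you flag the Poincar\'e conjecture / sphere-component subtleties as an obstacle, but you don't resolve them. The paper sidesteps these entirely by working in the irreducible $M^i$: there, the Loop Theorem plus a single innermost-disk argument against the completeness of $\mathcal{D}$ (using the incompressibility of $\partial C$ on both sides) gives the claim directly, with no appeal to geometrization or Poincar\'e. Finally, a small bookkeeping point: the estimate ``coordinates bounded by $2^{O(|M|)}$, hence linear bit-size'' is per coordinate; combined with $O(|M|)$ coordinates per disk and $O(|M|)$ disks this gives the cubic certificate size, which you do state, but the intermediate phrasing conflates per-coordinate bit-size with total bit-size. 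The verification checks (i)--(iv) you list are sound and essentially the ones the paper performs for the disk part, but as noted your certificate is incomplete because without the crushing chain the claimed vertex-normal disjoint disk system in $M$ itself is not justified.
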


\begin{proof}
A $3$-manifold $M$ is \textit{irreducible} if every properly embedded $2$-sphere in $M$ bounds a $3$-ball. Our first step is to transform $M$ so that it is irreducible. A \textit{reducing sphere} is a properly embedded $2$-sphere that does not bound a $3$-ball. If there exists a reducing sphere, there exists one that is normal, with normal coordinates of linear bit-size~\cite[Proposition~3.3.24 and Theorem~4.1.12]{matveev}. One can crush~\cite{burton,jacorubinstein} along such a normal surface to obtain a new triangulated manifold $M'$ with strictly less tetrahedra. Note that since our $3$-manifold is orientable, there are no $2$-sided projective planes, and thus the crushing is well-defined.

We claim that $M'$ has boundary identical to that of $M$. Indeed, a normal sphere does not have quadrilateral faces in the tetrahedra with at least one face on the boundary, and thus those are unharmed by the crushing procedure. Furthermore, the tetrahedra with one edge on the boundary can only carry a single quadrilateral type, whose crushing along glues together the adjacent tetrahedra with faces on the boundary, see Figure~\ref{F:quads}. Since the boundary stays the same, crushing only undoes connected sums and removes connected summands~\cite[Corollary~5]{burton}.
Since connected sums induce free products of fundamental groups, a curve on $\partial M'$ is contractible in $M'$ if and only if it is contractible in $M$. We iterate this procedure on $M'$ until we get a $3$-manifold $M^i$ that is irreducible.

\begin{figure}
    \centering
    \def\svgwidth{9cm}
    \begingroup%
  \makeatletter%
  \providecommand\color[2][]{%
    \errmessage{(Inkscape) Color is used for the text in Inkscape, but the package 'color.sty' is not loaded}%
    \renewcommand\color[2][]{}%
  }%
  \providecommand\transparent[1]{%
    \errmessage{(Inkscape) Transparency is used (non-zero) for the text in Inkscape, but the package 'transparent.sty' is not loaded}%
    \renewcommand\transparent[1]{}%
  }%
  \providecommand\rotatebox[2]{#2}%
  \ifx\svgwidth\undefined%
    \setlength{\unitlength}{3292.43978944bp}%
    \ifx\svgscale\undefined%
      \relax%
    \else%
      \setlength{\unitlength}{\unitlength * \real{\svgscale}}%
    \fi%
  \else%
    \setlength{\unitlength}{\svgwidth}%
  \fi%
  \global\let\svgwidth\undefined%
  \global\let\svgscale\undefined%
  \makeatother%
  \begin{picture}(1,0.32854434)%
    \put(0,0){\includegraphics[width=\unitlength,page=1]{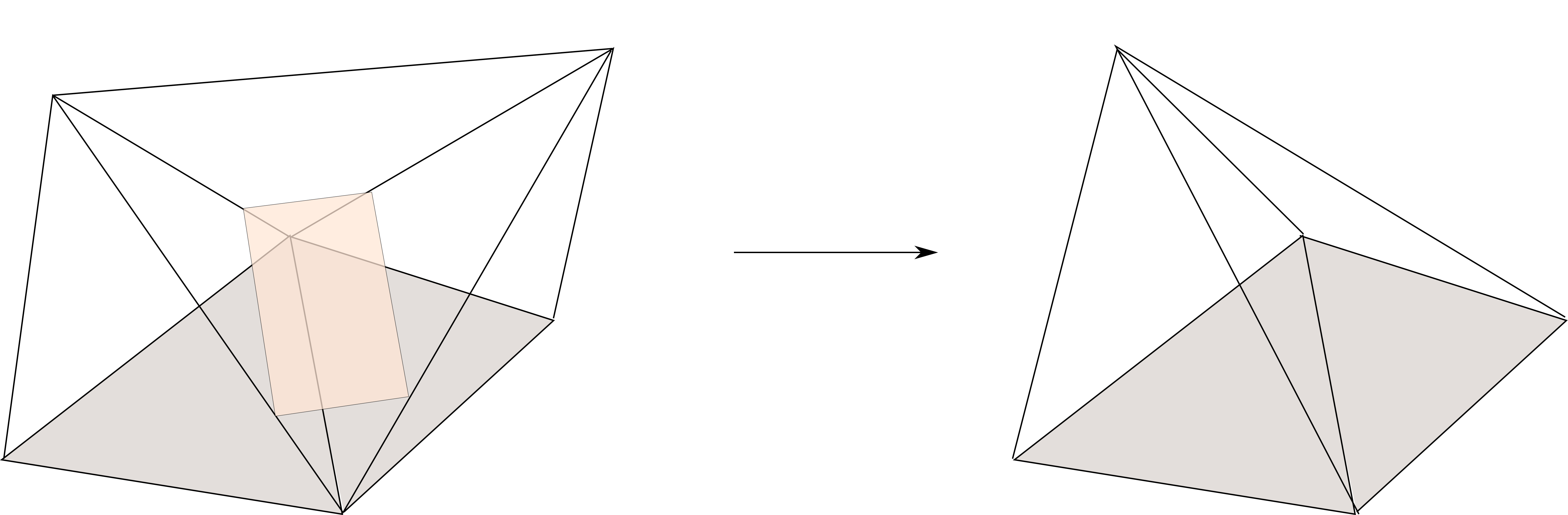}}%
    \put(0.01557014,0.2784227){\color[rgb]{0,0,0}\makebox(0,0)[lb]{\smash{$u$}}}%
    \put(0.3847236,0.32162152){\color[rgb]{0,0,0}\makebox(0,0)[lb]{\smash{$v$}}}%
    \put(0.77744003,0.29413137){\color[rgb]{0,0,0}\makebox(0,0)[lb]{\smash{$u \simeq v$}}}%
  \end{picture}%
\endgroup%

    \caption{The boundary is shaded in gray. Tetrahedra with a triangle on the boundary do not contain quadrilateral normal disks, but tetrahedra with one edge on the boundary might contain one type of those. Crushing along this quadrilateral glues the adjacent tetrahedra together and thus preserves the boundary.}
    \label{F:quads}
\end{figure}

A \textit{compression disk} for the boundary of a $3$-manifold $M$ is a topological disk $D$ that is properly embedded and such that $\partial D$ is an essential curve in $\partial M$. Classic theorems of normal surfaces (see for example Jaco and Tollfeson~\cite[Section~6]{jacotollefson}) show that there exists a complete family $\mathcal{D}$ of linearly many disjoint compression disks for $M^i$ which are \textit{vertex normal surfaces}.
Moreover, we can assume the set of disks $\mathcal{D}$ is \textit{minimal} in the sense that no subset of it is a complete family of compression disks.

Without delving into the details of normal surface theory, we just summarize the salient features of theses vertex normal disks for our purpose: the elements of $\mathcal{D}$ are disks of which the boundaries are normal curves on $\partial M^i$. Each of these boundaries might be exponentially long, but their coordinates, considered as a single multi-curve $\Delta$, fit in a linear number of bits. Furthermore, this family is complete, i.e., if we denote by $C$ the $3$-manifold which is a regular neighborhood of $\partial M \cup \mathcal{D}$, then $\partial C$ is incompressible on both sides: there are no compression disks for $\partial C$ neither in $C$ nor in $M':=M^i \setminus C$.

Let $\gamma$ be a closed curve on $\partial M=\partial M^i$, and let $N$ be the normal subgroup of $\pi_1(\partial M)$ generated by $\Delta$ (after attaching all the curves at a common base point). We claim that $\gamma$ is contractible in $M$ if and only if it belongs to $N$. Let us assume otherwise. Then the loop theorem~\cite[Theorem~4.2]{hempel} implies that there exists a properly embedded disk $D$ in $M^i$ so that $\partial D$ does not belong to $N$. We pick such a properly embedded disk $D$ so that the number of connected components of $D \cap \partial C$ is minimal. This number is non-zero as otherwise $D$ is contained in $C$ and thus $\partial D$ belongs to $N$. Therefore there is an innermost subdisk $D' \subseteq D$ so that $\partial D' \subseteq \partial C$. This disk $D'$ is either contained in $C$ or in $M'$, in both cases this contradicts the completeness of $\mathcal{D}$.

For the verification part, the certificate consists of $\mathcal{D}$, as well as all the reducing normal spheres $S^j$ and the intermediate manifolds $M^j$. Note that since each crushing strictly reduces the size of the triangulation, there are linearly many of those. One can verify in polynomial time that each $S^j$ is a connected sphere (see Agol, Hass and Thurston~\cite[Section~6]{agolhassthurston}), and that crushing along it yields the next manifold: despite the sphere having potentially exponentially many normal disks, the crushing procedure only looks at the existence of quadrilaterals in tetrahedra, which is done in polynomial time. Finally, one checks that each connected component of $\mathcal{D}$ is a disk. By the previous arguments, this ensures the required property. This certificate has cubic size, since there are a linear number of disks and spheres, each of quadratic bit-size.

We finally note that the multi-curve must be reduced. Otherwise, assume there are two components that are isotopic on the boundary. The two normal disks bounding the curves and the image of the isotopy define a 2-sphere which has to bound a ball, thus contradicting the fact that $\Delta$ is minimal.
\end{proof}

Note that we did not strive to optimize the size of the certificate in this proposition, since it does not matter for our application, so further improvements may be possible. In particular, using enumeration techniques for vertex surfaces~\cite[Section~6]{hasslagariaspippenger}, we suspect it can be shown to be quadratic rather than cubic.

Our next proposition shows that we can reduce our problem to the case of orientable manifolds.
\begin{proposition}\label{P:orientability}
The problem of contractibility of a compressed curve on the boundary of an arbitrary 3-manifold reduces, in polynomial time, to the problem of contractibility of a compressed curve on the boundary of an orientable 3-manifold.
\end{proposition}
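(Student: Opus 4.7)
The plan is to use the \emph{orientation double cover}. First I would test in polynomial time whether $M$ is orientable, e.g.\ by attempting to orient the tetrahedra consistently via a BFS on the dual graph of face-gluings; if $M$ is already orientable I simply return $(M,\gamma)$. Otherwise I would construct the orientation double cover $\tilde M$ explicitly from the triangulation: take two copies $\sigma_0,\sigma_1$ of each tetrahedron $\sigma$ of $M$ (with opposite induced orientations), and for every face identification $\sigma \mapsto \tau$ in $M$ determine from the local data whether it preserves or reverses orientation, gluing $\sigma_i$ to $\tau_i$ in the first case and to $\tau_{1-i}$ in the second. The result is a triangulated orientable 3-manifold of complexity at most $2|M|$, connected (because $w_1 : \pi_1(M) \to \mathbb{Z}/2$ is non-trivial), equipped with a $2$-to-$1$ covering projection $p : \tilde M \to M$ whose restriction to the boundary is a $2$-fold cover $\partial \tilde M \to \partial M$.

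The topological content is the standard consequence of the path/disk lifting theorem for covering spaces: a loop $\gamma$ in $\partial M$ is contractible in $M$ if and only if it admits a lift $\tilde\gamma$ in $\tilde M$ which is itself a closed loop and is contractible in $\tilde M$. Indeed, a null-homotopy $D \to M$ lifts to $\tilde M$ because $D$ is simply connected, so any lift of $\gamma$ is forced to be a loop and to be null-homotopic; conversely, projecting a null-homotopy of $\tilde\gamma$ down yields a null-homotopy of $\gamma$. In particular, if some lift of $\gamma$ is an arc rather than a closed loop (equivalently, $\gamma$ is orientation-reversing in $M$), then $\gamma$ is automatically not contractible, and the algorithm can immediately answer NO.

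The main algorithmic step is to produce a compressed representation of a lift $\tilde\gamma$ from the straight-line program for $\gamma$. For each directed edge $e$ of $\partial M$, I precompute its two lifts $e_0,e_1 \subset \partial \tilde M$ together with a ``sheet flip'' bit $f(e) \in \{0,1\}$ recording whether traversing $e$ switches sheets. Then for each non-terminal $A$ in the original program I introduce two new non-terminals $A_0,A_1$, encoding the lifts of $w(A)$ starting respectively in sheet $0$ and in sheet $1$, together with a bit $t(A) \in \{0,1\}$ recording whether $w(A)$ flips sheets. A terminal production $A \to e$ becomes $A_0 \to e_0$, $A_1 \to e_1$ with $t(A) = f(e)$; a composite production $A \to BC$ becomes $A_0 \to B_0 C_{t(B)}$, $A_1 \to B_1 C_{1-t(B)}$ with $t(A) = t(B) \oplus t(C)$. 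Since the production rules of a straight-line program are indexed so that each $A_i$ depends only on $A_j$ with $j<i$, this bottom-up traversal computes all the sheet bits and new productions in polynomial time and at most doubles the size of the program. Finally I inspect $t(A_n)$: if $t(A_n)=1$ then $\gamma$ is orientation-reversing and I output NO; otherwise $\tilde\gamma := (A_n)_0$ is a compressed closed curve on $\partial \tilde M$ and I output the pair $(\tilde M, \tilde\gamma)$ for the orientable subroutine.

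I do not expect a genuine obstacle. The only point that requires care is that the lifting operation on the straight-line program respects compression, which is automatic because the sheet-flip data is a purely local XOR on edges and composes trivially across concatenation, so Chomsky normal form and polynomial size are both preserved. Correctness follows from the lifting theorem applied to the disk domain of any null-homotopy.
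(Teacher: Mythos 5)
Your proof is correct and takes essentially the same approach as the paper: both construct the orientation double cover $\tilde M$ by duplicating tetrahedra and re-gluing according to whether face identifications preserve orientation, both reduce contractibility of $\gamma$ in $M$ to contractibility of a closed lift $\tilde\gamma$ in $\tilde M$ via the disk-lifting criterion, and both lift the straight-line program bottom-up so that each non-terminal acquires two lifts, at most doubling the program size. Your sheet-flip parity bit $t(A)$ is a slightly cleaner bookkeeping device for what the paper tracks by comparing first and last vertices of the two candidate lifts, but the content is the same.
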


\begin{proof}
Let the arbitrary manifold be given as a triangulation $T$ and $c$ be any closed curve lying on $\partial M$. We can construct a triangulation of the orientable double cover $\tilde{M}$ of $M$ from $T$ in polynomial time. If $c$ lifts to a closed curve $\tilde{c}$, then $c$ is contractible in $M$ if and only if any lift $\tilde c$ is contractible in $\tilde{M}$. If $c$ does not lift to a closed curve then $c$ is not contractible in $M$.

To finish the proof we need to present a polynomial-time algorithm that given a compressed curve $\gamma$, computes a compressed curve $\gamma'$ in the double covering such that $p(\gamma')=\gamma$ where $p$ is the covering map. We can assume that the compressed word $\gamma$ is given in Chomsky normal form. We lift the sub-paths of $\gamma$ in the two possible ways starting from the leaf production rules. Therefore, when we have a rule of the form $A \rightarrow A_i A_j$ we already have computed two preimages $\tilde{A}_i$ and $\tilde{A}'_i$, and, similarly, $\tilde{A}_j$ and $\tilde{A}'_j$. We compute a preimages of $A$ by looking at the last vertex of $\tilde{A}_i$, if this vertex equals the first vertex of $\tilde{A}_j$ we set $\tilde{A} = \tilde{A}_i\tilde{A}_j$. Otherwise the vertex has to be the same as the first vertex of $\tilde{A}'_j$, we then set $\tilde{A} = \tilde{A}_i\tilde{A}'_j$. We define $\tilde{A'}$ analogously by starting from $\tilde{A}'_i$. At the end, each of the terminals corresponding to the root terminal defines a preimage of the curve $\gamma$ in the double cover.
\end{proof}

Since non-orientable manifolds can be reduced to orientable ones in polynomial time and the boundary of an orientable 3-manifold is orientable, for the rest of the paper we restrict our attention to only considering orientable surfaces.

\subparagraph*{Hardness of the general normal subgroup membership problem}
If the curves $\Delta$ in the normal subgroup membership problem are not disjoint then this problem is much harder, and possibly undecidable, even if they have few edges.

If we can divide $\Delta$ into two subsets, each of which contains only disjoint curves, then it is not hard to see that the problem is equivalent to contractibility in 3-manifolds. Take an arbitrary triangulated 3-manifold $M$ and thicken its 1-skeleton inside $M$. Define $S$ to be the boundary of the thickening of the 1-skeleton and let $\Delta$ be the "meridians" of the edges, and the intersections of the triangles with $S$. Apply a homotopy to the given cycle to lie onto $S$. Then by gluing disks to $\Delta$ we obtain a space which has the same fundamental group as $M$. Therefore, the contractibility of curves in the interior of a 3-manifold reduces to the normal subgroup membership problem with non-disjoint curves. On the other hand, if we can partition $\Delta$ into two subsets $\Delta_1$ and $\Delta_2$, each of which contains only disjoint curves, then the normal subgroup membership problem reduces to contractibility of a curve in the interior of a 3-manifold by
gluing disks to $\Delta_1$ and $\Delta_2$ on opposite sides of $S$, and then thickening.

If the curves $\Delta$ are arbitrary then the problem is undecidable. We can perform the same reduction as above but using a 4-dimensional manifold instead. It is known that the contractibility problem for 4-manifolds is undecidable, since they contain 4-dimensional thickenings of arbitrary 2-complexes~\cite{stillwell}.

\section{Retriangulation}\label{S:retriangulation}
Given a reduced normal multi-curve $\Delta$ on a triangulated 2-manifold, in this section we present a way to compute a new triangulation of polynomial complexity, in which the curves of $\Delta$ lie in the 1-skeleton and are only polynomially long. Techniques exist  (see e.g., Bell~\cite{bell} or Erickson-Nayyeri~\cite{Erickson2013}) to retriangulate in this way.  However, we also must track our compressed word $\gamma$ in this new triangulation without increasing its (compressed) complexity during the retriangulation process. We remark that, as mentioned in the introduction, there are examples where even if we start with a curve $\gamma$ which has polynomially many edges, the image of $\gamma$ in the new triangulation has exponentially many edges.

\begin{proposition}\label{P:retriangulation}
Let a simplicial complex $K$ triangulating a surface $S$, a reduced normal multi-curve $\Delta$ and a compressed closed walk $\gamma$ on the 1-skeleton of $K$ be given as input. The size of the input is the summation of the complexities of the curves $\gamma, \Delta$ and complexity of $K$. There is an algorithm that in polynomial time computes a new simplicial complex $K'$ triangulating the same surface $S$, a multi-curve $\Delta'$, and a walk $\gamma'$ on the 1-skeleton of $K'$ given as a composition system, such that:
\begin{itemize}
    \item there is a homeomorphism $f: |K| \rightarrow |K'|$, such that the images of $\Delta$ under $f$ coincides with $\Delta'$,
    \item $\gamma'$ is homotopic to $f(\gamma)$,
    \item the multi-curve $\Delta'$ is an embedded multi-curve on the 1-skeleton of $K'$.
\end{itemize}
Consequently, the homotopy class of $\gamma'$ belongs to the normal subgroup generated by the components of $\Delta'$ in $|K'|$ if and only if the homotopy class of $\gamma$ belongs to the normal subgroup generated by the components of $\Delta$ in $|K|$.
\end{proposition}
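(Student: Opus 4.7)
The plan is to construct $K'$ by first computing the street complex of Erickson and Nayyeri~\cite{Erickson2013} associated to $(K, \Delta)$, and then triangulating its polygonal cells. The street complex is a polygonal decomposition of $S$ of polynomial complexity in the input size, whose cells split into small \emph{junctions} near the vertices of $K$ and \emph{streets} corresponding to maximal blocks of parallel arcs of $\Delta$. By design, one obtains a homeomorphism $f\colon |K|\to|K'|$ sending $\Delta$ to a polynomial-length embedded multi-curve $\Delta'$ lying on the 1-skeleton of the street complex; subdividing each polygonal cell by diagonals yields a simplicial complex $K'$ while keeping $\Delta'$ on its 1-skeleton.

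The nontrivial step is producing the compressed walk $\gamma'$ homotopic to $f(\gamma)$. The key issue is that a single directed edge $e$ of $K$ may be crossed by exponentially many arcs of $\Delta$, so the walk in $K'$ corresponding to $f(e)$ can have exponential length even though the 1-skeleton of $K'$ has only polynomial size; the walk must therefore reuse the same edges of $K'$ many times. However, the structure of the street complex imposes a strong periodicity: whenever $e$ traverses a given street with $h$ parallel arcs, the corresponding fragment of walk in $K'$ is a fixed constant-size subword repeated $\Theta(h)$ times, and the multiplicity $h$ can be read off directly from the normal coordinates of $\Delta$. Using the standard binary-doubling trick, I would therefore produce a composition system $C_e$ of size $O(\log h_e)$ encoding a walk $w_e$ homotopic to $f(e)$, where $h_e$ is the total number of arcs of $\Delta$ crossed by $e$. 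Substituting each terminal $e$ in the straight-line program for $\gamma$ by a pointer to $C_e$ yields a polynomial-size composition system for $\gamma'$, which Hagenah's theorem converts to an equivalent straight-line program in polynomial time.

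The main obstacle is the detailed combinatorial description of the street complex and of the subwords $w_e$: for each edge $e$ one must identify the exact sequence of streets and junctions traversed, the number of parallel arcs crossed in each (with the correct orientation), and the constant-size subword realizing the transition at each junction. Secondary subtleties include ensuring $K'$ is genuinely simplicial and that $\Delta'$ remains embedded after triangulation, both of which can be handled by uniform subdivision. Once these combinatorial details are settled, the three listed properties of $K'$, $\Delta'$, and $\gamma'$ follow by direct verification, and the final ``consequently'' statement is immediate because $f$ is a homeomorphism inducing an isomorphism of fundamental groups sending the normal subgroup generated by the components of $\Delta$ onto the one generated by the components of $\Delta'$.
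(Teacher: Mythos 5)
Your overall architecture matches the paper's (compute the Erickson--Nayyeri street complex, triangulate it by diagonals to get $K'$, and encode the image of $\gamma$ as a composition system that is then converted by Hagenah's theorem). The divergence, and the genuine gap, is in the step you correctly flag as the ``main obstacle'' but then treat as a matter of bookkeeping: producing the composition system for $w_e = f(e)$. Your proposal rests on two claims that do not hold as stated. First, that the fragment of $w_e$ inside a street is simply a fixed constant-size subword raised to the power $h$: this ignores that in the presence of \emph{spirals} a street's model has boundary identifications (the ``long edge''), so the subword being repeated is not a boundary walk of the street $e$ is crossing but rather of the \emph{active} street at the moment the spiral was discovered, with a prefix and suffix that depend on where $e$ enters and exits and a repetition count that is the spiral depth $d$ (or $d-1$, $d-2$) rather than the street's arc-multiplicity $h$. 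Second, that the multiplicities (and, implicitly, the sequence of streets and junctions traversed by $e$ together with the transition subwords) ``can be read off directly from the normal coordinates of $\Delta$'': they cannot. The overlay sequence of $e$ with the street complex is exponentially long and its combinatorial structure is only revealed by actually running the tracing algorithm; there is no closed-form expression of the sequence of cells $e$ visits in terms of normal coordinates alone.

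The paper's proof sidesteps both issues by maintaining the composition system $\mathbb{E}$ \emph{incrementally} as the tracing algorithm extends and shrinks streets: for each elementary operation (left/right turn at a junction, fork, spiral) one adds a constant number of production rules that split or merge the terminals of type $(s,\cdot,\cdot)$, and the spiral step is exactly where the exponent $d'$ appears. Since the tracing algorithm performs polynomially many steps, the final composition system is polynomial-size by construction. Your ``post hoc'' read-off would in effect need to reconstruct this same data, and without the incremental structure it is not clear how to do that in polynomial time; so the proposal is missing the central technique rather than just combinatorial details. (A smaller point: your concluding sentence about the ``consequently'' clause is fine.)
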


\begin{proof}
A re-triangulation satisfying the conditions on $\Delta$ and $\Delta'$ and disregarding $\gamma$ can be constructed by at least two explicit polynomial-time algorithms in the literature, one by Erickson and Nayyeri~\cite{Erickson2013} and the second by Bell~\cite{bell}. We work with the former, basing our re-triangulation on their \emph{street complex}, and introduce additional structure to track $\gamma$.

First, we briefly recall their complex, and refer the reader to $\cite{Erickson2013}$ for more details.  The street complex consists of a set of \emph{streets} and \emph{junctions}, which ``bundle" common subpaths of the input curves $\Delta$. These are connected via their common boundary edges in the street complex, refer  We model a street as a rectangle whose boundary is a simple cycle consisted of some vertices and edges. A generic street has two edges on the boundary that are segments of edges of $K$, these are denoted by the $\epsilon_i$ in figure \ref{fig:street} a). The boundary minus $\epsilon_1$ and $\epsilon_2$ is divided into two paths that we call the \textit{sides} of the street. In general, our model of a street is mapped to a street from~\cite{Erickson2013} so that some pairs of the edges might identify with each other. Any pair of identified edges has one edge in each side. We draw these edges using a thick line. Figure \ref{fig:street} b) depicts a street model that is mapped to a spiral, shown in c).

The input compressed curve $\gamma$ is a walk on the 1-skeleton of $K$. We compute a straight-line program in Chomsky normal form for $\gamma$ in polynomial time. The production rules with a terminal at the right-hand side now are of the form $A\rightarrow e$ where $e$ is an oriented edge of the 1-skeleton of $K$.
We trace the multi-curve $\Delta$ using the algorithm of \cite{Erickson2013}, but at the same time define a composition system $\mathbb{E}$ encoding the intersections of the edge $e$ with the street complex for $\Delta.$ We update the composition system $\mathbb{E}$ after each step of the tracing algorithm, as described below.

\begin{figure}[ht]
    \centering
    \includegraphics[scale=0.35]{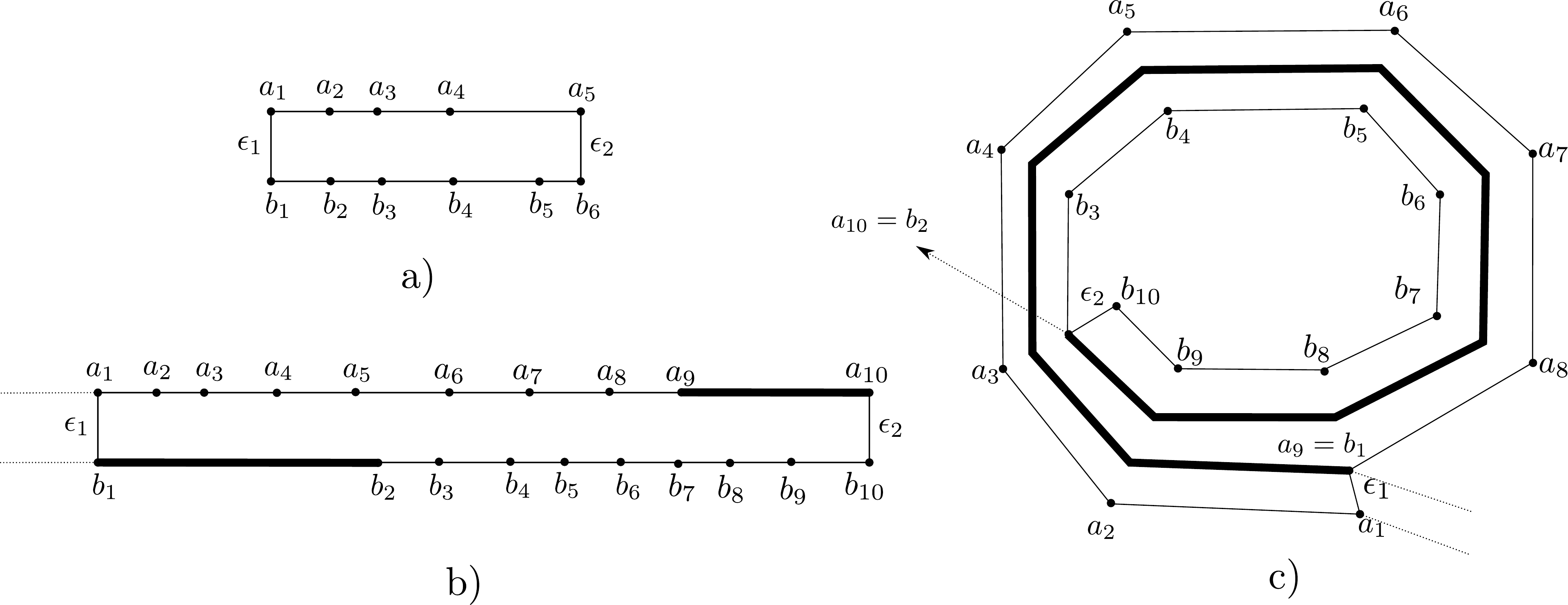}
    \caption{a) A model for a street without a spiral, b) a model for the spiral, where two edges on the boundary (shown in bold) are identified., c) the spiral redrawn as embedded on the surface }
    \label{fig:street}
\end{figure}

The basic \textit{step} of the tracing algorithm is to extend a street across a junction or a fork, and then across a street. The extension across a junction and a street is depicted in Figure~\ref{fig:extension}. This figure depicts a left turn across a junction. When a street is extended through a second street, the latter street is shrunk. When a street is shrunk, one side of the street remains unaltered, we call this side the \textit{fixed side} of the street. The boundary of the street which is extended gains some new vertices and edges of the street complex, while the boundary of the street which is shrunk becomes simpler. A model the effect of this step on the streets is depicted in Figure~\ref{fig:modelextend}.

\begin{figure}[ht]
    \centering
    \includegraphics[scale=0.2]{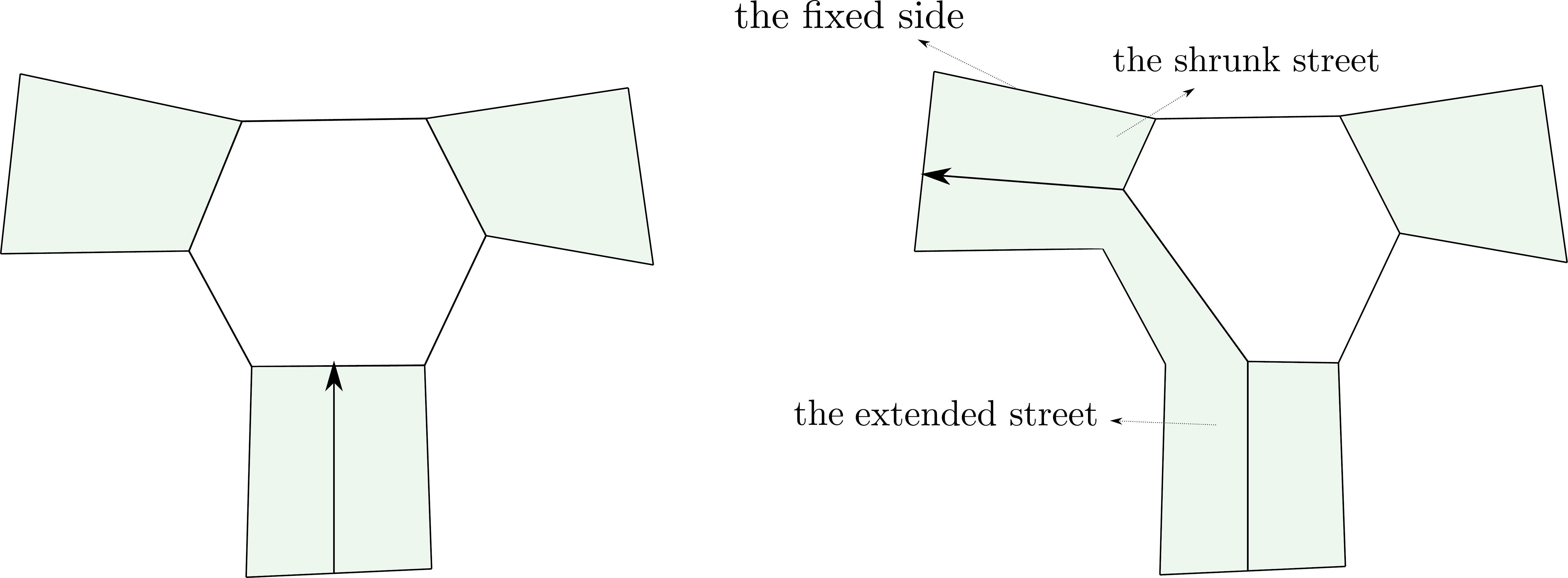}
    \caption{A left turn in the street complex; the arrows depict segments of the edge $e$.}
    \label{fig:extension}
\end{figure}

We next recall that a \textit{phase} of the tracing algorithm of \cite{Erickson2013} consists of a maximal sequence of left turns or of right turns. When we enter a street for the second time, from the same side, during the same phase, we have discovered a \textit{spiral}. In~\cite{Erickson2013}, they show that the spirals can be processed at once in $O(m)$ time, where $m$ is the number of edges of $K$; see Figure~\ref{fig:street}.   The presence of these spirals in a street causes potential identifications; we call the edge which is the result of an identification the \textit{long edge} of the spiral. Observe that when another street extends over a street that contains a spiral, the latter street turns into an ordinary street without identifications. Since we are only interested in the case where $\Delta$ is reduced, the complexity of the street complex is always bounded by a linear function of the complexity of $K$~\cite[Lem. 2.3]{Erickson2013}.

The terminals of $\mathbb{E}$ are of three types: i) $(s,a,x)$ or $(s,x,a)$, where $s$ is a street, $a$ is a vertex on $s$, and $x$ is an edge of the street on the opposite side from $a$, ii) $(s,x,y)$ where $x$ and $y$ are edges on opposite sides on the model of street $s$, and iii) $(s,a,b)$ where $a$ and $b$ are vertices of the model street for $s$ lying on opposite sides. These three types of terminals represent sub-arcs of $e$ that lie inside the street, where the sub-arcs start at the second parameter and end at the third parameter. We use vertices and edges of the model street rather than the street itself. This allows us to distinguish between two arcs in the street complex, which have different directions, but connect the image of two edges or vertices that are identified in the street complex.

\begin{figure}[ht]
    \centering
    \includegraphics[scale=0.4]{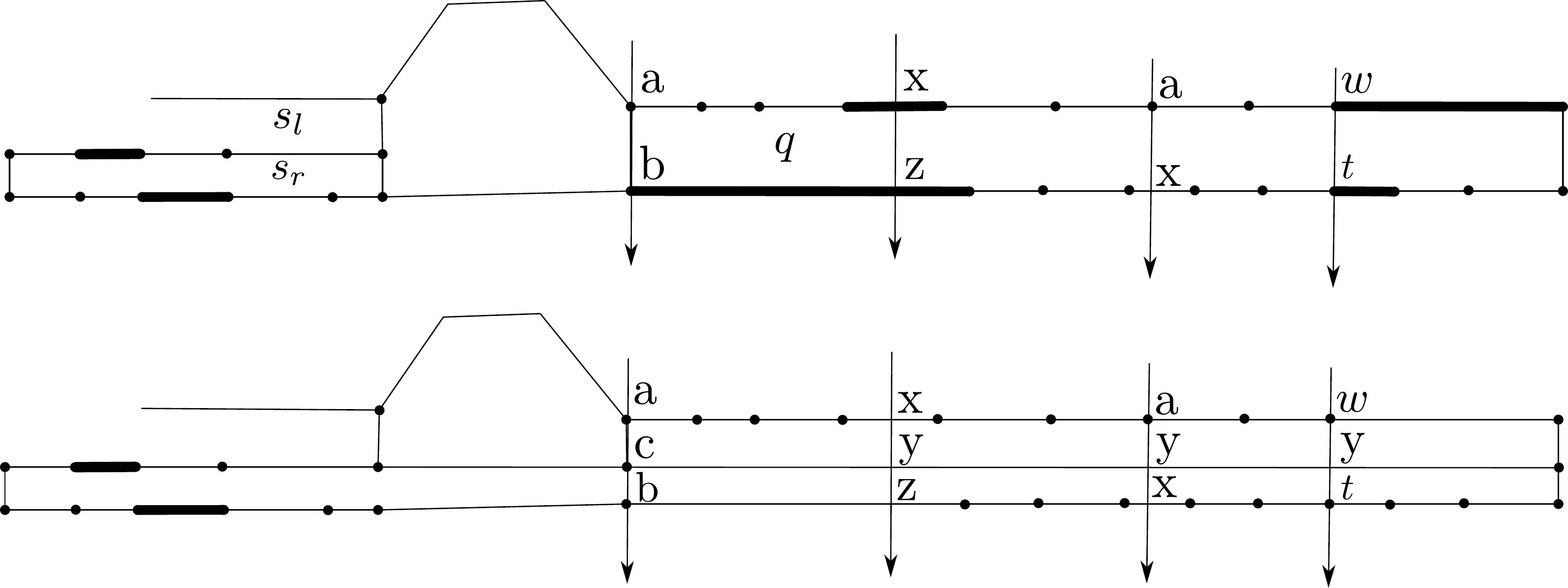}
    \caption{The effect of a right turn on the model streets, the arrows depict parts of the edge $e$, top: before the extension and bottom: after the extension of the street $s_r$ along the street $q$}
    \label{fig:modelextend}
\end{figure}

When traversing an edge $e=uv$ of $\gamma$ in the overlay of the street complex and the original complex $K$, we see a sequence of these terminals which we call the \textit{overlay sequence} of $e$. Our first goal is to construct a composition system for the overlay sequence.

The algorithm that traces the curves in $\Delta$ to build the street complex starts with the triangulation $K$, where each edge is a (trivial) street and there is one fork where we start the tracing. At the beginning, we introduce for the edge $e=uv$ of $\gamma$ a non-terminal $A(s_e, u,v)$, where $s_e$ is the street of the edge $e$. We also introduce the rule $E \rightarrow A(s_e,u,v)$. This finishes the modifications for the starting street complex. We note that the composition system is not a well-defined until the algorithm is finished.

Assume that we turn right at a junction and let $a$ and $b$ be vertices of the street $q$ which is going to be shrunk, and $s_r$ be the right active street, as in the figure \ref{fig:modelextend}. If the non-terminal $A(q,a,b)$ exists, we introduce the rule $A(q,a,b) \rightarrow A(q,a,c)A(s_r,c,b)$ where $c$ is as in the figure. If the street coordinates of $\Delta$ in (new) $q$ and/or in $s_r$ is 0, we also introduce terminals and the rule $A(q,a,c) \rightarrow (q,a,c)$ and/or $A(s_r, c, b) \rightarrow (s_r,c,b)$. Similarly, if the non-terminal $A(q,b,a)$ exists we introduce the rule $A(q,b,a) \rightarrow A(s_r,b,c)A(q,c,a).$
We introduce also the suitable terminals in case any of the streets have coordinates zero for $\Delta$.
Recall that during the tracing algorithm of~\cite{Erickson2013}, the number of arcs of $\Delta$ inside each street is maintained as the street coordinates. If a street has 0 number of arcs of $\Delta$, it will not be shrunk again, and in this case we introduce the leafs of composition system.

Next, for any two vertices $w,t$ on different sides of the model street for $q$, if the non-terminal $A(q,w,t)$ exists, and $w$ lies on the fixed side of $q$, we introduce the rule $A(q,w,t) \rightarrow A(q,w,y)A(s_r,y,t)$ where $y$ is the edge of (new) street $q$ as in the figure \ref{fig:modelextend}. If the street coordinates of $\Delta$ in $q$ and/or in $s_r$ is 0, we also introduce terminals and the rule $A(q,w,y) \rightarrow (q,w,y)$ and/or $A(s_r, y, t) \rightarrow (s_r,y,t)$. Similarly, if the non-terminal $A(q,w,t)$ exists where $t$ is on the fixed side of the street $q$ we introduce the rule $A(q,w,t) \rightarrow A(s_r,w,y)A(q,y,t).$ We introduce also the suitable terminals in case any of the streets have coordinates zero for $\Delta$. This finishes the processing of non-terminals of type iii).

If $A(q,x,z)$ is a non-terminal such that $x$ and $z$ lie on two sides of $q$ and $x$ lies on the fixed side of $q$, we introduce new non-terminals and the rule $A(q,x,z) \rightarrow A(q,x,y)A(s_r,y,z)$, where $y$ is the new side of the shrunk street $q$, as in the figure. We introduce new terminals if the coordinates are zero as above and do symmetrically if $z$ is on the fixed side of the street. This deals with the non-terminals of the second type. An analogous argument works for the non-terminals of the first type.

The modifications necessary for extension across a fork are totally analogous to a junction. In this case again a street is shrunk and another is extended, see~\cite{Erickson2013} figure 7. We do not repeat the case analysis.

We next consider a spiral. Let $d$ be the depth of the spiral, $l$ the length of the spiral, and $m$ be the number of distinct streets traversed by the spiral. See \cite{Erickson2013} for more detailed description of these parameters and how to compute the spiral. Let $\lambda$ be the long edge of the spiral, and $\lambda_1$, $\lambda_2$ be its preimages in the model. After we enter a spiral and before we reach the "final turn", the new side of the shrunk street is always $\lambda$ and the extended street has the edge $\lambda$ on both sides. See figure \ref{fig:spiralcases}. Let, in the system at the time we discover the spiral, $A(s_i,x,y)$ be the non-terminal where $s_i$ is a street traversed by the spiral, and $x$ and $y$ be edges on its two sides such that $x$ lies on the fixed side of $s_i$. After we trace the whole spiral, we need to add the rule $A(s_i,x,y) \rightarrow A(s_a,x,\lambda_*)A(s_a,\lambda_*,\lambda_*)^{d'} A(s_i, \lambda_*, y)$, where $s_a$ is the active street. The number $d'$ is either $d-1$ or $d-2$ depending on where the street lies around the spiral, in the example of figure~\ref{fig:spiralcases} it is $d-1$. In place of the star we place indices $1,2$ as is appropriate. If any street coordinate in the streets $s_i$ becomes zero  we introduce the rule $A(s_i,\lambda_*,x) \rightarrow (s_i, \lambda_*,x)$.
If the active street has coordinates zero for $\Delta$ we add the rule $A(s_a,\lambda_*,\lambda_*) = (s_a,\lambda_*,\lambda_*)$. We then extend the street $s_a$ through the streets that are traversed more than $d$ times. If $y$ is on the fixed side of the street $s_i$ we do analogously. We perform these modifications for every existing non-terminal of the form $A(s_i,x,y)$. This finished the modifications of the composition system for the non-terminals of type ii). Those of other two types are handled similarly.

\begin{figure}[ht]
    \centering
    \includegraphics[scale=0.4]{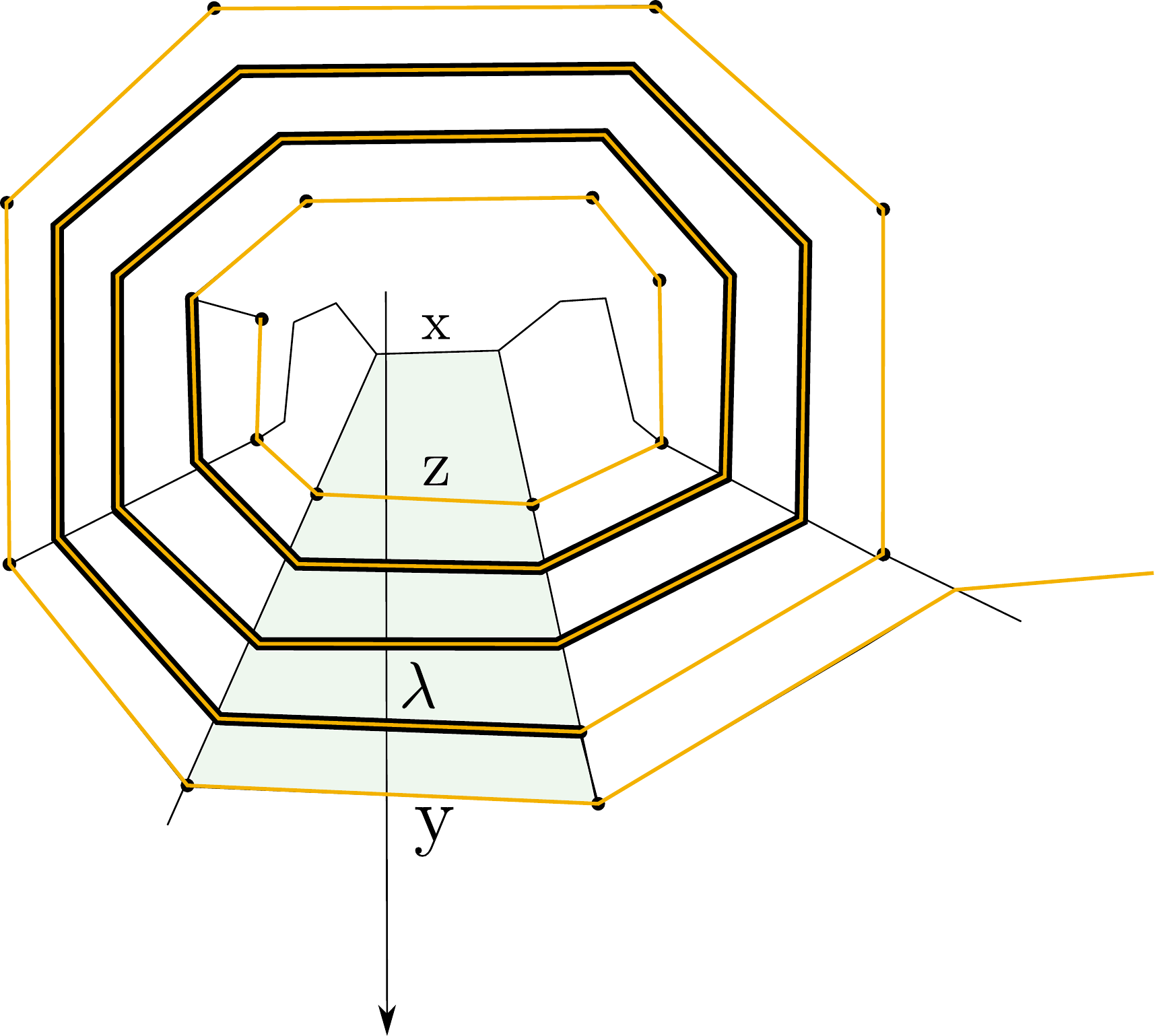}
    \caption{Updating the intersections of $e$ with a spiral, the orange curve is part of $\Delta$}
    \label{fig:spiralcases}
\end{figure}

We have finished the construction of the composition system $\mathbb{E}$. Note that since the curve $\Delta$ is reduced no street is an annulus. At this stage of the argument, we have replaced each rule $A \rightarrow e$ of the walk $\gamma$ by a composition system encoding its overlay sequence with the street complex. To construct $\gamma'$ we do as follows. For each edge of the the street complex, we choose one of the two endpoints. We next replace the terminals of the system $\mathbb{E}$ with a paths on the boundary of a street. This transforms the composition system into a composition system for the curve $\gamma'$ which is freely homotopic to $\gamma$, and $\gamma'$ is a walk on the 1-skeleton of the street complex.

Our next step is to triangulate the streets of the complex. For each street $s$, we triangulate the street by adding diagonals as in the Figure \ref{fig:streettr} a). Let $K'$ be a complex encoding this triangulation. $K'$ is simply the collection of our street models, junctions and their edges and vertices where we identify edges that are identified in the street complex. There is a homeomorphism $f:|K'| \rightarrow |K|$. By subdividing $K'$ we can make sure that it is a simplicial complex. We also modify the composition system for $\gamma'$ to reflect the subdivision.

\begin{figure}[ht]
    \centering
    \includegraphics[scale=0.45]{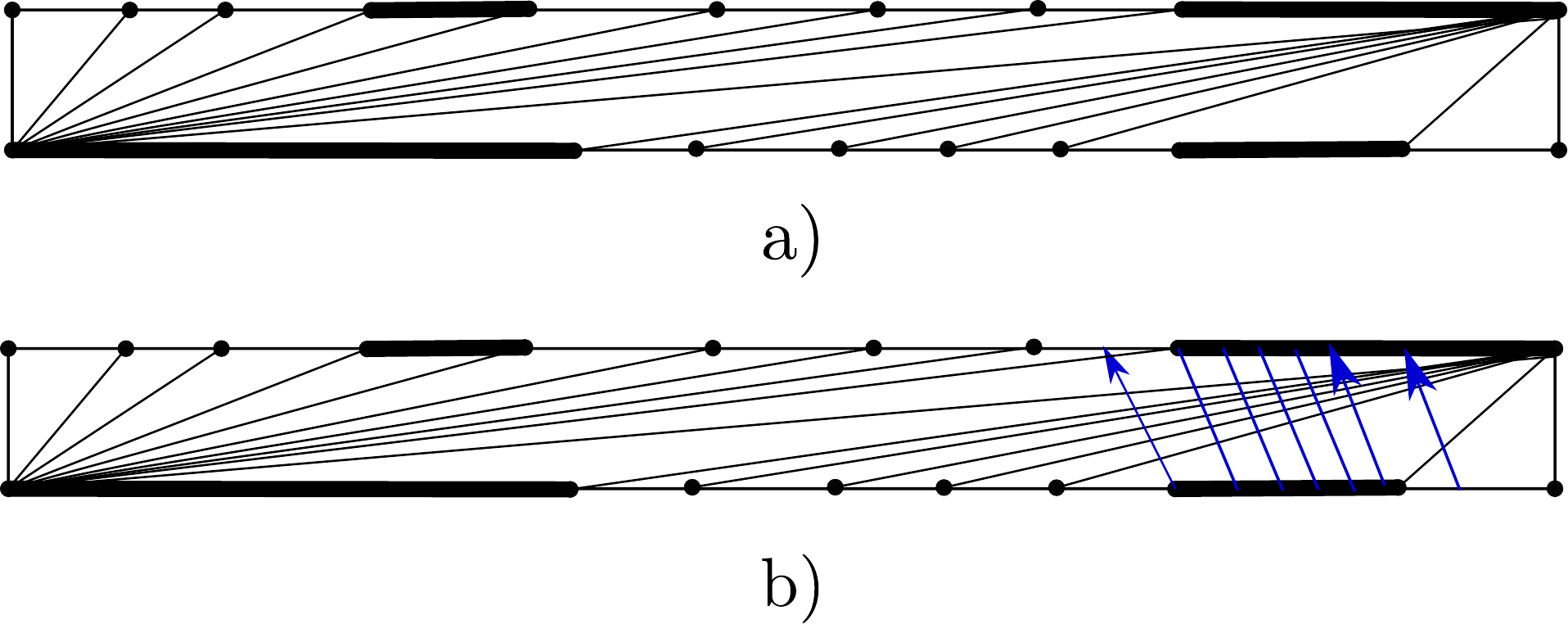}
    \caption{a) adding diagonals to triangulate a street, b) computing the intersection sequence of an arc with the triangulated street}
    \label{fig:streettr}
\end{figure}

\end{proof}

\subparagraph*{Remark}
The proof of Proposition~\ref{P:retriangulation} can also be used to construct the image $\gamma_1=f(\gamma)$ in the triangulation $K'$ as a hybrid compressed word whose terminals are either edges of the 1-skeleton, or arcs that connect a vertex to an edge or two edges in the same triangle. To compute $\gamma_1$ we start again with the street complex and the composition systems $\mathbb{E}$. We add the diagonals again as in Figure \ref{fig:streettr} a) to obtain a complex $K'$. Replace each terminal of the form $(s,x,y)$ where $x$ and $y$ are (not necessarily) distinct edges of the street $s$, by the sequence of intersections of the arc $xy$ with the diagonals of $s$. To find this sequence, we simply connect a points in the interior of $x$ to a point in the interior of $y$ by a direct line segment in our model street and replace $xy$ with the sequence of intersections obtained, as in figure \ref{fig:streettr} b). We do this process analogously for other types of arcs, yielding the curve $\gamma_1$. The curve $\gamma_1$ might be of independent interest for other problems.

\section{Capping off}\label{S:cappingoff}
In this section, we show that the complex obtained by gluing disks on a family of disjoint curves on a surface is homotopy equivalent to a wedge of surfaces and circles. We provide a polynomial-time algorithm to compute the resulting wedge, while also tracking what happens to a compressed curve on the original surface.

 \begin{proposition}\label{P:cappingoff}
 Let $L$ be simplicial complex triangulating a surface $S$ of genus $\geq 1$, $\Delta$ be a family of disjoint embedded closed curves in the 1-skeleton of $L$ and $\gamma$ be a compressed closed walk on the 1-skeleton of $L$. The size of the input is the summation of the complexities of $L$, $\gamma$ and the number of edges of $\Delta$. We can compute in polynomial time a simplicial complex $K$ which is a wedge of surfaces and circles, and a compressed walk $w$ on $K$, so that:
 \begin{itemize}
     \item $K$ is homotopy equivalent to the complex obtained by gluing disks on each component of $\Delta$,
     \item $w$ is homotopic to a trivial walk in $K$ if and only if $\gamma$ belongs to the normal subgroup determined by the curves in $\Delta$.
 \end{itemize}
 \end{proposition}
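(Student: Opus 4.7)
The plan is to construct $K$ by cutting $L$ along $\Delta$, capping off the resulting boundary circles, and then collapsing a spanning tree of the connectivity graph to produce a wedge. Cutting along the disjoint simple closed curves in $\Delta$ yields a disjoint union of triangulated surfaces with boundary $F_1,\ldots,F_k$, where each $\delta_i\in\Delta$ produces two boundary circles $\delta_i^+\subset F_{a_i}$ and $\delta_i^-\subset F_{b_i}$ (possibly with $a_i=b_i$). I would cone off each boundary circle $\delta_i^j$ by introducing a new vertex $q_i^j$ together with edges and triangles joining it to every vertex and edge of $\delta_i^j$, producing closed triangulated surfaces $\hat F_1,\ldots,\hat F_k$. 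Finally, form a complex $K'$ by adjoining, for each $\delta_i$, an auxiliary edge $\tilde e_i$ between $q_i^+$ and $q_i^-$.

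To see that $K'$ is homotopy equivalent to $L':=L\cup\bigcup_i D_i$, collapse each contractible cone on $\delta_i^j$ together with the edges $\tilde e_i$, obtaining the quotient $\bar L:=L/\Delta$ in which each component of $\Delta$ is crushed to a point; conversely, $L'$ itself deformation retracts onto $\bar L$ by collapsing each attached disk $D_i$. To turn $K'$ into a genuine wedge, consider the graph $G$ with vertices $\{\hat F_j\}$ and edges $\{\tilde e_i\}$; since $L$ is connected, so is $G$. Choose a spanning tree $T$ of $G$, and within each $\hat F_j$ a tree $T_j$ in its 1-skeleton connecting all $q_i^j$ incident to $\hat F_j$. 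The union $T'=T\cup\bigcup_j T_j$ is a contractible subcomplex of $K'$, so $K:=K'/T'$ is homotopy equivalent to $K'$ and, by construction, realizes the wedge of the $\hat F_j$ with one circle per non-tree edge $\tilde e_i\notin T$; a final subdivision restores a simplicial structure. All these steps run in polynomial time in $|L|+|\Delta|$.

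I would then obtain the compressed walk $w$ by rewriting each terminal in the straight-line program for $\gamma$. If a terminal $e$ lies on some $\delta_i$, replace it with the empty word, since $e$ collapses to a point under $L'\to\bar L$. Otherwise $e$ belongs to a unique component $F_j$ and lifts to an edge of $\hat F_j\subset K$; if an endpoint of $e$ lies on $\delta_i$, prepend or append the cone edge in $\hat F_j$ from that endpoint to $q_i^j$, which is identified with the wedge point $\ast$ when $\tilde e_i\in T$. The main obstacle is exactly this bookkeeping: the lift must be chosen consistently across all terminals so that consecutive edges meeting at a vertex of $\Delta$ but lying on different sides of it remain concatenable into a continuous walk in $K$, and so that the resulting walk is closed and realizes the image of $\gamma$ under the homotopy equivalence $L'\simeq K$. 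Since each terminal expands to a walk of bounded length and all required spanning-tree computations are polynomial, $w$ has polynomial compressed size, and the homotopy equivalence guarantees that $w$ is null-homotopic in $K$ if and only if $\gamma$ lies in the normal subgroup of $\pi_1(L)$ generated by $\Delta$.
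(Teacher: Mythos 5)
Your construction reaches the same wedge complex as the paper but travels a different route. The paper processes $\Delta$ one component at a time inside the original surface: each component is coned off by a disk and the disk is immediately crushed to a single vertex, so the intermediate objects are \emph{pinched surfaces} $L/\{\delta_1,\ldots,\delta_r\}$; at the end each pinch point $x$ is pulled apart into a two-edge path $x_1x'x_2$, and a spanning tree of the $x_1,x_2$ vertices is contracted. You instead cut along all of $\Delta$ at once, cone off the resulting boundary circles with new apices $q_i^\pm$, reintroduce the identification only via bridge edges $\tilde e_i$, and contract a spanning tree; your intermediate object is a disjoint union of closed surfaces plus bridges rather than a pinched surface. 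Both routes are polynomial and produce a wedge of surfaces and circles, and in fact $q_i^+, q_i^-, \tilde e_i$ play exactly the role of the paper's $x_1, x_2$ and the path $x_1x'x_2$. One modest advantage of your version is that cutting is always simplicial, so you avoid the paper's preprocessing step of double-subdividing any chord with both endpoints on $\Delta$; one advantage of the paper's incremental contraction is that the rewriting rule on terminals is a pure local substitution $u\mapsto x$ followed by deleting constant runs $xx\cdots x$, with no need to choose a side of $\Delta$ until the final unpinching step.

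The ``main obstacle'' you flag at the end — consistency of lifts across terminals meeting $\Delta$ — is not actually a gap, and you should argue it rather than leave it open. The lift of a terminal $e\notin\Delta$ is determined locally: $e$ lies in the closure of a unique piece $F_j$, and if an endpoint of $e$ lies on some $\delta_i$ then the side of $\delta_i$ that $e$ approaches from determines whether that endpoint lifts to $\delta_i^+$ or $\delta_i^-$. With cone edges prepended/appended as you describe, any two consecutive non-$\Delta$ terminals produce lifts that end and begin at cone points; since every $q_i^\pm$ lies in $T'$, they all map to the single wedge vertex after contraction, so the rewritten string is automatically a continuous closed walk in $K$. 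What does deserve a sentence is that each terminal expands to a word of length at most $3$ (lift plus at most two cone edges), so the straight-line program for $w$ has polynomial size, and that after contracting $T'$ one may need to retriangulate degenerate cells, which you do mention. With that made explicit your proof is complete and is a correct alternative to the one in the paper.
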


\begin{proof}
Note that the second property will be satisfied if, in the complex $K$ satisfying the first property, we take for $w$ the image of $\gamma$ under a homotopy equivalence of $L$ to $K$. We construct the simplicial complex $K$ which is homotopy equivalent to the space obtained from $L$ by gluing a disk to each component of $\Delta$. This is done one disk at a time, and at each step we also maintain the image of the curve $\gamma$ as a compressed word. Suppose $\Delta$ has $m$ components. As a preprocessing step, we subdivide any edge that connects vertices on $\Delta$ but does not lie on $\Delta$, twice,  subdivide the incident triangles, and then update the compressed word.

Call a space $X$ a \textit{pinched surface} if there is a finite set of vertices $P$ such that the closure of $X-P$ is a surface. We call $P$ the set of pinch points and assume $P$ is minimal, in the sense that $X$ is not homeomorphic to $\mathbb{R}^2$ in any neighborhood of a point in $P$.

We start by computing a straight-line program in Chomsky normal form for $\gamma$. During the procedure that processes components of $\Delta$, we maintain the following invariants:
i) the current space $|K'|$ is a pinched surface, ii) the current space $|K'|$ is homotopy equivalent to $|L|$ union caps over components of $\Delta$ that we have processed so far, iii) the current curve $w'$ is the image of the original curve $\gamma$ under a homotopy equivalence.

Let $c$ be a component of $\Delta$ and let $K'$ be the current complex. We add a disk with boundary $c$; we then compress this disk into a single vertex $x$ and add $x$ to $P$. See Figure~\ref{fig:contract}. The effect on the word $w'$ is as follows. A terminal $e=uv$ where $uv$ is an edge in $c$ is replaced with $x$. A terminal $e=uv$, where $v$ is in $c$ but $u$ is not, is replaced with $e'=ux$. If $u$ is in $c$ and $v$ is not, the oriented edge $e=uv$ is replaced with $e'=xv$. We then remove all the sequences $xx \cdots x$ in the word $w'$. This can be accomplished in polynomial time. We note that each remaining component of $\Delta$ is an embedded curve in the 1-skeleton lying in a single component of $|K'|-P$. The invariants remain true, and the word $w'$ is now a walk in the new complex $K'$.

\begin{figure}[ht]
    \centering
    \includegraphics[scale=0.24]{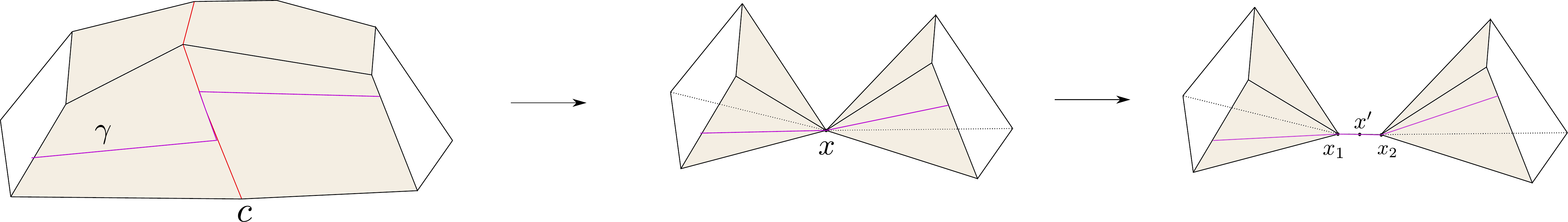}
    \caption{Contraction of a component $c$ of $\Delta$ (shown on the left) results in a new complex $K'$ (shown on the right).}
    \label{fig:contract}
\end{figure}

After contracting all the components of $\Delta$ we have a complex $K'$ which is a pinched surface. We transform $K'$ into a homotopy equivalent surface by replacing each pinch point $x$ by a path of two edges $x_1x'x_2$ where $x_1x'x_2$ maps to $x$ under the homotopy equivalence that contracts $x_1x'x_2.$ We then compute a spanning tree of the vertices $x_1,x_2$ for $x \in P$. We require that the number of $x'$ vertices is minimal in this tree. We then contract every edge of the spanning tree. The result is a complex $K$ which is wedge of circles and surfaces.

We explain the modifications needed to maintain the word $w'$. The first step is when we introduce the paths $x_1x'x_2$. Define $e_0=x_1x'$, $e_1=x'x_2$, and let $\bar{e_i}$ denote the edges with the opposite orientation. Observe that the link of the vertex $x$ consists of two disjoint circles. We associate $x_1$ to one of these and $x_2$ to the other arbitrarily. If a terminal is an edge $e=ux$, where $u$ is in the circle associated to $x_2$, we replace $e$ with the path $e' \bar{e_1}$, where $e'=ux_2$. If $u$ is in the circle associated to $x_1$, we replace $e$ with the path $e'e_0$ where $e'=ux_1$. We do analogously for the edges of the form $e=xv$. These modifications define a curve which is the image of the homotopy equivalence which introduces the paths.
It remains to update $w'$ under contraction of a single edge. This is possible in polynomial time similar to contraction of a circle above. The result is the required curve $w$.

\end{proof}

\section{Triviality for compressed words in free products of surface groups}

In this section, we prove the following theorem:

\begin{theorem}\label{T:triviality}
Let $K$ be a wedge of combinatorial surfaces and circles and $w$ be a compressed walk on $K$. Then one can compute in polynomial time a canonical form for the homotopy class of $w$. In particular, we can test in polynomial time whether $w$ is trivial.
\end{theorem}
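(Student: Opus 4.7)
The plan is to split the problem in two stages: first, reduce triviality in the wedge to triviality in each individual factor via the free-product structure; second, handle a single surface factor using a canonical turn-sequence representative derived from a quad system, with care to batch potentially exponentially many local simplifications.

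Stage one exploits the fact that $\pi_1(K)$ is a free product of surface groups and infinite cyclic groups, one factor per wedge summand sharing the unique wedge point $p$. I would parse the compressed walk $w$ into its \emph{syllables}, i.e.\ the maximal subwords each lying entirely in one factor away from $p$. Using the primitives of Lemma~\ref{algo:main} on the straight-line program encoding $w$ (locating the next occurrence of a letter incident to $p$, extracting subwords, reversing, and testing equality) one can produce compressed representations of each syllable as a based loop in its factor. Then proceed iteratively: for each syllable apply Stage two (in the surface case) or test whether the cyclic exponent vanishes (in the circle case) to detect triviality; whenever a syllable is trivial, delete it and merge its two neighbours, which may themselves be mergeable. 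By the Kurosh/normal-form theorem for free products, $w$ is trivial iff this reduction empties it. A potential argument on the total syllable complexity, combined with the polynomial cost of Stage two per syllable, keeps the total running time polynomial.

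Stage two is the heart of the argument: given a compressed walk $w$ on a single combinatorial surface $\Sigma$ of genus $\geq 1$, compute a canonical representative of its based homotopy class. I would fix a quad system for $\Sigma$ — a square-complex structure whose turns at each vertex have a well-defined notion of ``left'' and ``right'' — and re-encode $w$ as a compressed \emph{turn sequence} (an SLP whose letters are directed edges tagged with their turn angles). In this encoding, homotopic tightening corresponds to two types of local rewriting: removing spurs (backtracks) and removing \emph{brackets}, i.e.\ subwords which bound a disk made of quads in the universal cover. A single rewrite is cheap via Lemma~\ref{algo:main}, but naively iterating them could require exponentially many rewrites. Here I would invoke the structural results on quad systems from~\cite{desprelazarus} (see also~\cite[Section~4.3]{dagstuhl}): the collection of nested brackets in a non-tight walk has a highly constrained form, so that all cancellations sharing a common ``core'' can be described as a single substring deletion in the SLP. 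Repeatedly applying such batched deletions, interleaved with spur removals, yields a compressed tight walk in polynomially many rounds. Uniqueness of the tight turn sequence within a based homotopy class (a classical property of quad systems) makes it the desired canonical form, and equality of canonical forms is decidable on SLPs by Lemma~\ref{algo:main}.

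The main obstacle I anticipate is controlling the batching in Stage two. The easy direction — that individual local rules preserve homotopy class and terminate with a canonical tight walk — is standard, but proving that one can simultaneously discover \emph{all} pending cancellations and encode the result as a polynomial-size SLP requires the fine structural description of the bracket poset in a quad system, and a careful amortized analysis showing that each batched round reduces an appropriate complexity measure (for instance, the number of maximal bracket families, or a weighted turn count) by a constant factor or an additive polynomial amount. The free-product bookkeeping in Stage one is comparatively routine once Stage two is in hand, provided one is vigilant that re-basing a syllable at $p$ after a merge does not inflate the SLP beyond polynomial size — which is ensured by always re-expressing syllables via subword extraction on the original root SLP rather than by copying.
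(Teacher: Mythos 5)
Your Stage~2 is aimed in the right direction — quad systems, turn sequences, spurs and brackets, and the Despr\'e--Lazarus structural results — but your Stage~1 has a genuine gap, and your Stage~2 is missing the key device the paper uses precisely at the point you flag as ``the main obstacle.''

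The gap in Stage~1: a compressed walk of SLP size $n$ on a wedge can pass through the wedge point exponentially many times, i.e.\ it can have $2^{\Theta(n)}$ syllables. Your plan to ``produce compressed representations of each syllable,'' test each for triviality, and iteratively delete and merge is therefore not a polynomial-time procedure as stated. There is no ``potential argument on the total syllable complexity'' that fixes this, because the difficulty is not in any one syllable but in their number. The paper sidesteps this entirely by \emph{not} decomposing into free-product syllables: it encodes jumps between wedge summands as extra turn symbols, so the whole curve stays inside one compressed turn sequence, and the free-product cancellations are absorbed into the same inductive pass over the SLP production rules that handles intra-surface cancellations. Uniqueness of normal forms extends to the wedge for free (a free product of surface groups is still a free product of the corresponding free products), so no separate bookkeeping per factor is needed.

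The missing idea in Stage~2: you correctly observe that the danger is an exponential number of spur/bracket rewrites arising at a concatenation $A_i\to A_jA_k$ even when $A_j$ and $A_k$ are already tight, and you correctly guess that one must ``simultaneously discover all pending cancellations.'' The paper's way of doing this is to maintain, for each SLP character, \emph{both} a rightmost and a leftmost compressed normal form. To concatenate $A_j^R$ and $A_k^R$, one first compares $A_j^R$ with the reversal of $A_k^{\mathrm{L}}$ (leftmost, not rightmost!) to find, via a single longest-common-prefix query on SLPs, the maximal portion that can be peeled off in one shot; the structural theorem that a rightmost-vs-leftmost bigon is an alternating chain of paths and quad staircases then guarantees that after this one batched deletion, plus one more bounded clean-up of a residual ladder/spur, only $O(1)$ brackets remain and can be flattened directly. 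Without this dual $(R,L)$ bookkeeping, a longest-common-prefix comparison of $A_j^R$ against $\overline{A_k^R}$ misses the staircase-type cancellations (the right-hand picture in Figure~\ref{F:snakes}), and you are back to iterating bracket flattenings one at a time. So the ``batched round reduces a complexity measure'' amortization you hope for does not materialize from the ingredients you list; you need the left/right pair of canonical forms as the data carried up the SLP. You also do not mention the torus summands, which require a separate $\mathbb{Z}^2$-exponent treatment since the quad-system turn calculus degenerates there.
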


We emphasize that in this theorem, we consider $w$ as a walk with fixed endpoints, and thus we are considering based homotopies (as opposed to free homotopies).

We can assume that there are no spheres in the wedge of surfaces: since these are simply connected, removing them, and replacing their edges by empty letters in the word $w$ does not change the homotopy class, or contractibility of $w$. Furthermore, tori require a slightly different treatment from the other surfaces. For the ease of exposition, we first explain and justify our algorithm assuming that there are no tori in the wedge, and at the end of the proof we will explain how to deal with them.

Our algorithm starts in a similar way as the known linear-time algorithms to test contractibility or homotopy of curves on surfaces~\cite{tcs,lazarusrivaud}: we first turn each surface in the wedge into a system of quads (Lemma~\ref{L:quadsystem}), and then compute a canonical form for our curve $w$ (Lemma~\ref{L:reduction}). This canonical form is unique, and therefore the walk $w$ is homotopic to a trivial walk if and only if its reduced canonical form is the empty walk. However, due to the compression of the input, our techniques to reduce the curve $w$ are more involved than those of the aforementioned references.

Let $\mathbb{A}$ denote the straight-line program encoding $w$. First, we fix once and for all an orientation for all the surfaces in $K$, which gives meaning to the intuitive notions of turning right, clockwise, etc.

 Then our first step is to turn our surfaces into systems of quads. First, for each surface $S$, we compute a spanning tree and contract it. Then, we remove edges until there is a single face, we add a new vertex inside the single face, add all the radial edges between this new vertex and the single vertex of $S$ and remove all the previous edges. To make things unified, in each circle summand, each loop is subdivided into two edges. The resulting complex is called a \textit{quad system}.

\begin{lemma}\label{L:quadsystem}
Let $K$ be a wedge of surfaces and circles and let $w$ be a compressed walk on $K$. In polynomial time we can turn $K$ into a quad system $K'$ and compute a compressed walk $w'$ on $K'$ that is homotopic to $w$.
\end{lemma}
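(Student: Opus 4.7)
The plan is to realize every elementary modification in the construction of the quad system $K'$ as an extension of the straight-line program encoding $w$. As the paper has already noted in Section~\ref{S:background}, contracting an edge $e$ amounts to adding the production rule $e \to \varepsilon$, while deleting an edge $e$ bounding a face with complementary arc $p$ amounts to adding $e \to p$. I would handle each surface summand and each circle summand of $K$ independently, since these modifications touch disjoint sets of edges.

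For each circle summand whose loop is an edge $\ell$, I would simply add the rule $\ell \to \ell_1 \ell_2$ (together with $\bar \ell \to \bar \ell_2 \bar \ell_1$), introducing two new terminals that subdivide $\ell$. For each surface summand $S$, I would proceed in three phases. First, I compute a spanning tree $T$ of the $1$-skeleton of $S$ and add the rule $e \to \varepsilon$ (and $\bar e \to \varepsilon$) for every $e \in T$; after this, $S$ has a single vertex. Second, while $S$ has more than one face, I select an edge $e$ whose two sides lie in distinct faces, let $p$ be the complementary boundary arc of the merged face, and add the rule $e \to p$ (and $\bar e \to \bar p$); such a separating edge exists as long as more than one face remains, because the face-dual graph is connected. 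Third, $S$ now has one vertex $o$ and a single $4g$-gonal face with boundary $b_1 b_2 \cdots b_{4g}$, each surviving edge appearing twice; I add a new vertex $v^*$ together with $4g$ radial edges $r_1,\ldots,r_{4g}$ joining the corners to $v^*$, and for each surviving edge $e$, using the triangle with sides $r_i, b_i, \bar r_{i+1}$ created by the radial subdivision, I pick one of its two occurrences, say at position $i$, and add the rule $e \to r_i\, \bar r_{i+1}$ (and symmetrically for $\bar e$).

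Aggregating these local changes yields a quad system $K'$ and a compressed walk $w'$ on $K'$ that is homotopic to $w$ by composition of the elementary homotopies justifying each rewrite (tree contraction, sliding an edge across a face, and pushing an edge through a triangle to a two-edge radial path). For complexity, each step above adds $O(1)$ rules, with right-hand sides of length $O(|K|)$ in the second phase and constant length elsewhere. Since the total number of operations is $O(|K|)$, the straight-line program grows by at most a polynomial additive amount, and all edits---together with the final conversion back to Chomsky normal form---run in polynomial time in $|K|+|w|$. The main subtlety lies in the second phase, where the existence of an appropriate separating edge at each step is guaranteed by the connectivity of the face-dual graph; one may in fact precompute a spanning tree of this dual graph to fix the order of removals once and for all.
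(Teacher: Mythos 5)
Your proposal is correct and follows essentially the same approach as the paper's proof: contract a spanning tree via $e\to\varepsilon$ rules, merge faces via $e\to p$ rules, and push each surviving edge to a two-radial-edge path through the new center vertex, handling circle summands by subdivision. The paper's version is terser, relying on the observation from its background section that edge contractions and deletions lift to production-rule updates, and stating explicitly only the final radial-edge substitution; your write-up spells out all three phases, including the dual-graph connectivity argument justifying that a face-merging edge always exists, but the construction and its justification are the same.
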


\begin{proof}The construction of the quad system follows directly from the definition. The word $w$ is modified as follows: each terminal character representing an edge $e$ (which got removed) is now a non-terminal character with a production rule $e \rightarrow e_1e_2$ where $e_1e_2$ is a two-edge path homotopic to $e$ going through the new middle vertex. On surfaces, there can be two such edge-paths, we choose arbitrarily. The other production rules are unchanged. This operation can be done in polynomial time and yields a compressed curve homotopic to $w$.
\end{proof}

In the next step of our algorithm, we encode words using \textit{turn sequences}. For any two directed edges $e$ and $e'$ on the same surface $S$, the turn $\tau(e,e')$ is the number of corners, counted with respect to our chosen orientation, between the end of $e$ and the start of $e'$ on $S$. For any two edges $e$ and $e'$ forming a loop in a circle summand, we define $\tau(e,e')$ to be a new symbol $\lambda$, and for an edge $e$ in an circle summand, we define $\tau(e,e^{-1})$ to be $0$, which is in line with the case of surfaces. We need our encoding to adapt to jumps between different surfaces, and we do it as follows: If $e$ and $e'$ are on different surfaces or circle summands, we define the turn $\tau(e,e')$ to be the pair $ee'$, with the intended meaning that in this case the turn is a jump from $e$ to $e'$ on the new surface. Turn sequences are to be understood modulo the degree of the vertices, and following the literature, we use the notation $\bar{x}:=-x$. The symbol $\lambda$ satisfies $\bar{\lambda}=\lambda$. Henceforth, by a slight abuse of language, we count circle summands among the surface since they will be treated the same way.
The turn sequence of our word is the concatenation of all the turn sequences of consecutive edges.
However, we lose information by encoding with turn sequences. First, a turn sequence does not specify a starting edge. Second, even if we know the starting edge, it is not immediately clear how to compute an arbitrary  intermediate edge along the path efficiently since we encode exponential length paths. The following lemma shows how to do that in polynomial time.

\begin{lemma}\label{algo:second}
Let $\mathbb{T}$ be a compressed turn sequence and $k$ be an integer. Then if we are given a starting edge $e_{initial}$, then in polynomial time we can compute the edge of $K$ we are on just after starting at $e_{initial}$ and following the turn sequence up to $\mathbb{T}[k]$.
\end{lemma}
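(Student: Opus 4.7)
The plan is to precompute, for each non-terminal $A$ of the straight-line program $\mathbb{T}$, a transition function $f_A$ stored as a lookup table: given a directed edge $e$ of $K$, the entry $f_A(e)$ is the directed edge on which a walker ends up after starting on $e$ and executing the entire turn sequence $w(A)$. Because $K$ has polynomially many directed edges and $\mathbb{T}$ has polynomially many non-terminals (Chomsky normal form), the total table has polynomial size. I fill it bottom up: for a terminal production $A \to \tau$, the function $f_\tau$ is read off the combinatorics of the quad system---an integer turn $t$ rotates $t$ corners around the endpoint of the incoming edge, the symbol $\lambda$ exchanges the two edges of the circle summand, and an inter-surface symbol $e e'$ outputs $e'$; for a binary production $A \to BC$, set $f_A(e) := f_C(f_B(e))$ with two constant-time lookups. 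Altogether the precomputation costs polynomial time.

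Once the tables are available, the query reduces to a standard recursive descent through the parse tree of $\mathbb{T}$, analogous to the ones used for indexing compressed strings. I start at the root with current edge $c := e_{initial}$ and target index $k$. At a non-terminal $A \to BC$ I compute $\ell_B := |w(B)|$ using Lemma~\ref{algo:main}; if $k \le \ell_B$ I recurse into $(B, c, k)$, otherwise I update $c := f_B(c)$ via one table lookup and recurse into $(C, c, k - \ell_B)$. Upon reaching a terminal with $k = 1$ I apply the corresponding single turn to $c$ and return it; if $k = 0$ at any point I return $c$ immediately. The recursion has depth at most the height of the SLP, which is polynomial, and each step performs a constant number of table lookups and one length computation, so the whole query runs in polynomial time.

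The main obstacle I foresee is defining the transition functions $f_\tau$ in a consistent way that makes the composition $f_A = f_C \circ f_B$ well-defined: the symbolic turn describing an inter-surface jump $e e'$ (or a $\lambda$ at a particular vertex of a circle summand) is semantically meaningful only when the walker actually sits on the matching edge, yet the tables must still store a value for every starting edge. I would resolve this by extending each $f_\tau$ to a total function on directed edges using an arbitrary but fixed convention for ``inconsistent'' inputs. Since the hypothesis guarantees that following $\mathbb{T}$ from $e_{initial}$ produces a genuine walk on $K$, the descent never evaluates a table entry on an inconsistent input along the branch actually taken, and the returned edge coincides with the true position in the walk.
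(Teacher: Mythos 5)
Your proposal is correct and rests on the same key idea as the paper's proof -- precompute, for each non-terminal $A$, a transition map $f_A$ on directed edges, defined by $f_A = f_C\circ f_B$ when $A\to BC$ -- but the two algorithms are organized rather differently. The paper first uses Hagenah's theorem to extract a straight-line program for the prefix $\mathbb{T}[:k]$, then locates the last inter-surface jump symbol in it and passes to the suffix $\mathbb{T}[i:k]$, which by construction lies in a single surface; it then tabulates $f_A$ only for the non-terminals of that trimmed SLP and only over the directed edges of that one surface, and finishes with a single lookup. You instead tabulate $f_A$ globally over all directed edges of $K$ for every non-terminal of the original $\mathbb{T}$, extending each terminal function $f_\tau$ to a total map (constant $e'$ for a jump symbol $ee'$, the obvious swap for $\lambda$, etc.), and then locate index $k$ by a standard recursive descent of the parse tree, using the tables to skip over entire left subtrees. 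Both run in polynomial time. The paper's trimming step sidesteps ever defining $f_\tau$ on inputs where a jump or $\lambda$ is semantically meaningless; your version has to face that, and your observation that the descent only ever queries consistent entries along the branch actually taken is exactly the right justification. A mild practical advantage of your route is that it avoids invoking Hagenah's theorem on the turn-sequence SLP: you only need lengths and bottom-up composition, both elementary. (Minor optimization: the lengths $|w(B)|$ can be precomputed for all non-terminals in a single bottom-up pass rather than recomputed via Lemma~\ref{algo:main} at each descent step.)
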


\begin{proof}
We first compute a straight-line program for $\mathbb{T}[:k]$, then use Lemma~\ref{algo:main} to compute the last letter of this word that corresponds to an edge. We denote it by $e_{last}$, and if there is none we use $e_{initial}$ instead. We denote by $i$ the index where it appears, which is $0$ if there is no such edge. Then we look at the straight-line program computing $\mathbb{T}[i:k]$, which thus stays on a single surface. We inductively go up the production tree, computing, for each directed edge $e$ of the surface and for each production rule at which directed edge we arrive if we start at edge $e$.  This is straightforward at the leaves of the tree, and for a directed edge $e$ and a production rule of $\mathbb{A}$ of the form $A_i \rightarrow A_jA_k$, we can look up where we arrive after starting at $e$ and following $A_j$, denote the resulting edge by $e'$, and then look up where we arrive after starting at $e'$ and following $A_j$. Finally, we obtain our solution by looking up where we are after starting at $e_{initial}$ and following $\mathbb{T}[i:k]$.
\end{proof}

 The point of turn sequences is that they make it easier to make local simplifications to a contractible curve. Following Lazarus-Rivaud~\cite{lazarusrivaud} and Erickson-Whittlesey~\cite{tcs}, we use the following simplification rules, see Figure~\ref{F:spursbrackets}, which are homotopies.

 \begin{figure}[t]
 \centering
 \def\svgwidth{\textwidth}
\begingroup%
  \makeatletter%
  \providecommand\color[2][]{%
    \errmessage{(Inkscape) Color is used for the text in Inkscape, but the package 'color.sty' is not loaded}%
    \renewcommand\color[2][]{}%
  }%
  \providecommand\transparent[1]{%
    \errmessage{(Inkscape) Transparency is used (non-zero) for the text in Inkscape, but the package 'transparent.sty' is not loaded}%
    \renewcommand\transparent[1]{}%
  }%
  \providecommand\rotatebox[2]{#2}%
  \ifx\svgwidth\undefined%
    \setlength{\unitlength}{3681.29004551bp}%
    \ifx\svgscale\undefined%
      \relax%
    \else%
      \setlength{\unitlength}{\unitlength * \real{\svgscale}}%
    \fi%
  \else%
    \setlength{\unitlength}{\svgwidth}%
  \fi%
  \global\let\svgwidth\undefined%
  \global\let\svgscale\undefined%
  \makeatother%
  \begin{picture}(1,0.47437898)%
    \put(0,0){\includegraphics[width=\unitlength,page=1]{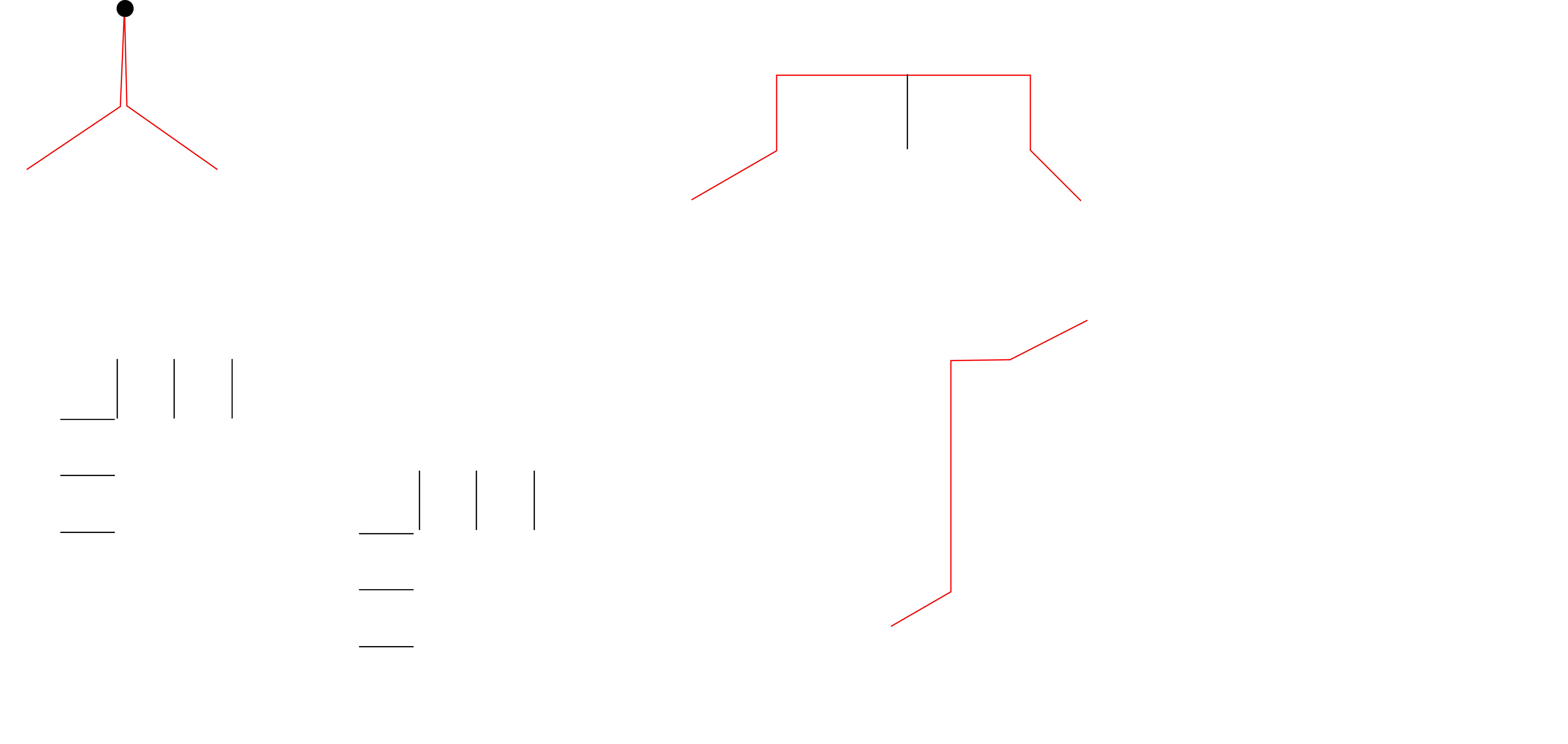}}%
    \put(0.04645779,0.40446905){\color[rgb]{0,0,0}\makebox(0,0)[lb]{\smash{$x$}}}%
    \put(0.05084821,0.46373984){\color[rgb]{0,0,0}\makebox(0,0)[lb]{\smash{$0$}}}%
    \put(0.09519156,0.4049081){\color[rgb]{0,0,0}\makebox(0,0)[lb]{\smash{$y$}}}%
    \put(0,0){\includegraphics[width=\unitlength,page=2]{spursbrackets.pdf}}%
    \put(0.24369105,0.42311813){\color[rgb]{0,0,0}\makebox(0,0)[lb]{\smash{$x+y$}}}%
    \put(0,0){\includegraphics[width=\unitlength,page=3]{spursbrackets.pdf}}%
    \put(0.46618275,0.37988265){\color[rgb]{0,0,0}\makebox(0,0)[lb]{\smash{$x$}}}%
    \put(0.67341097,0.38119978){\color[rgb]{0,0,0}\makebox(0,0)[lb]{\smash{$y$}}}%
    \put(0.49296436,0.4408976){\color[rgb]{0,0,0}\makebox(0,0)[lb]{\smash{$\overline{1}$}}}%
    \put(0.65490526,0.4408976){\color[rgb]{0,0,0}\makebox(0,0)[lb]{\smash{$\overline{1}$}}}%
    \put(0.53147389,0.4408976){\color[rgb]{0,0,0}\makebox(0,0)[lb]{\smash{$\overline{2}$}}}%
    \put(0.57173959,0.4408976){\color[rgb]{0,0,0}\makebox(0,0)[lb]{\smash{$\overline{2}$}}}%
    \put(0.61376147,0.4408976){\color[rgb]{0,0,0}\makebox(0,0)[lb]{\smash{$\overline{2}$}}}%
    \put(0.72094316,0.38031512){\color[rgb]{0,0,0}\makebox(0,0)[lb]{\smash{$x+1$}}}%
    \put(0.97346299,0.38012478){\color[rgb]{0,0,0}\makebox(0,0)[lb]{\smash{$y+1$}}}%
    \put(0.8213684,0.34386489){\color[rgb]{0,0,0}\makebox(0,0)[lb]{\smash{$2$}}}%
    \put(0.86607324,0.34386489){\color[rgb]{0,0,0}\makebox(0,0)[lb]{\smash{$2$}}}%
    \put(0.90425863,0.34386489){\color[rgb]{0,0,0}\makebox(0,0)[lb]{\smash{$2$}}}%
    \put(0,0){\includegraphics[width=\unitlength,page=4]{spursbrackets.pdf}}%
    \put(-0.04627974,0.2004153){\color[rgb]{0,0,0}\makebox(0,0)[lb]{\smash{}}}%
    \put(0.00867579,0.12982653){\color[rgb]{0,0,0}\makebox(0,0)[lb]{\smash{$\overline{2}$}}}%
    \put(0.00867579,0.16956414){\color[rgb]{0,0,0}\makebox(0,0)[lb]{\smash{$\overline{2}$}}}%
    \put(0.00867579,0.20681816){\color[rgb]{0,0,0}\makebox(0,0)[lb]{\smash{$\overline{2}$}}}%
    \put(0.06890311,0.25680065){\color[rgb]{0,0,0}\makebox(0,0)[lb]{\smash{$\overline{2}$}}}%
    \put(0.10739893,0.25680065){\color[rgb]{0,0,0}\makebox(0,0)[lb]{\smash{$\overline{2}$}}}%
    \put(0.14403205,0.25680065){\color[rgb]{0,0,0}\makebox(0,0)[lb]{\smash{$\overline{2}$}}}%
    \put(0.00867579,0.25680065){\color[rgb]{0,0,0}\makebox(0,0)[lb]{\smash{$\overline{1}$}}}%
    \put(0.30422436,0.10468008){\color[rgb]{0,0,0}\makebox(0,0)[lb]{\smash{$2$}}}%
    \put(0.34272017,0.10468008){\color[rgb]{0,0,0}\makebox(0,0)[lb]{\smash{$2$}}}%
    \put(0.37935329,0.10468008){\color[rgb]{0,0,0}\makebox(0,0)[lb]{\smash{$2$}}}%
    \put(0.28125105,0.08108584){\color[rgb]{0,0,0}\makebox(0,0)[lb]{\smash{$2$}}}%
    \put(0.28125105,0.04414227){\color[rgb]{0,0,0}\makebox(0,0)[lb]{\smash{$2$}}}%
    \put(0.28125105,0.00440464){\color[rgb]{0,0,0}\makebox(0,0)[lb]{\smash{$1$}}}%
    \put(0.28125105,0.10585139){\color[rgb]{0,0,0}\makebox(0,0)[lb]{\smash{$3$}}}%
    \put(0.38760918,0.13170432){\color[rgb]{0,0,0}\makebox(0,0)[lb]{\smash{$1$}}}%
    \put(0.00765619,0.09516668){\color[rgb]{0,0,0}\makebox(0,0)[lb]{\smash{$x$}}}%
    \put(0.17976102,0.2571735){\color[rgb]{0,0,0}\makebox(0,0)[lb]{\smash{$y$}}}%
    \put(0.15516389,0.02272461){\color[rgb]{0,0,0}\makebox(0,0)[lb]{\smash{$x+1$}}}%
    \put(0.40041954,0.16056447){\color[rgb]{0,0,0}\makebox(0,0)[lb]{\smash{$y+1$}}}%
    \put(0,0){\includegraphics[width=\unitlength,page=5]{spursbrackets.pdf}}%
    \put(0.57767533,0.12982653){\color[rgb]{0,0,0}\makebox(0,0)[lb]{\smash{$\overline{2}$}}}%
    \put(0.57767533,0.16956414){\color[rgb]{0,0,0}\makebox(0,0)[lb]{\smash{$\overline{2}$}}}%
    \put(0.57767533,0.20681816){\color[rgb]{0,0,0}\makebox(0,0)[lb]{\smash{$\overline{2}$}}}%
    \put(0.63351224,0.26031299){\color[rgb]{0,0,0}\makebox(0,0)[lb]{\smash{$y$}}}%
    \put(0.57767533,0.25680065){\color[rgb]{0,0,0}\makebox(0,0)[lb]{\smash{$\overline{1}$}}}%
    \put(0.57665574,0.09516668){\color[rgb]{0,0,0}\makebox(0,0)[lb]{\smash{$x$}}}%
    \put(0,0){\includegraphics[width=\unitlength,page=6]{spursbrackets.pdf}}%
    \put(0.7792494,0.04047352){\color[rgb]{0,0,0}\makebox(0,0)[lb]{\smash{$x+1$}}}%
    \put(0,0){\includegraphics[width=\unitlength,page=7]{spursbrackets.pdf}}%
    \put(0.9067585,0.14612819){\color[rgb]{0,0,0}\makebox(0,0)[lb]{\smash{$2$}}}%
    \put(0.9067585,0.10918459){\color[rgb]{0,0,0}\makebox(0,0)[lb]{\smash{$2$}}}%
    \put(0.9067585,0.07224102){\color[rgb]{0,0,0}\makebox(0,0)[lb]{\smash{$2$}}}%
    \put(0.9067585,0.03250339){\color[rgb]{0,0,0}\makebox(0,0)[lb]{\smash{$1$}}}%
    \put(0.91301129,0.18042705){\color[rgb]{0,0,0}\makebox(0,0)[lb]{\smash{$y+1$}}}%
    \put(0,0){\includegraphics[width=\unitlength,page=8]{spursbrackets.pdf}}%
  \end{picture}%
\endgroup%

 \caption{The reductions to eliminate a spur (top left) and a bracket (top right), and the shifting moves to remove a $\overline{1}$ (bottom row).}
 \label{F:spursbrackets}
 \end{figure}

 \begin{itemize}
     \item A \textit{spur} in a turn sequence $w$ is a $0$ turn. Removing a spur is applying the rule $x0y\rightarrow x+y$.
     \item A \textit{bracket} in a turn sequence $w$ is a subword of the form $12\ldots 21$ or $\bar{1}\bar{2}\ldots \bar{2}\bar{1}$.
     Flattening a bracket is applying the rules $x12\ldots 21y \rightarrow (x-1)\bar{2}\ldots \bar{2}(y-1)$ or $x\bar{1}\bar{2}\ldots \bar{2}\bar{1}y \rightarrow (x+1)2\ldots 2(y+1)$.
 \end{itemize}

 We are only considering homotopies of paths in this paper, and thus do not need the other special cases considered in Erickson-Whittlesey~\cite{tcs}.
Our goal is to modify the turn sequence so that it is \textit{reduced}, i.e., contains neither spurs nor brackets. However, unlike in the aforementioned works, we cannot apply these rules directly, even inductively: in a production rule $A_i \rightarrow A_jA_k$, even when $A_j$ and $A_k$ are reduced, there might be an exponential number of bracket flattenings in $A_i$, for example in the cases in Figure~\ref{F:snakes}.

\begin{figure}
    \centering
    \def\svgwidth{\textwidth-4cm}
    \begingroup%
  \makeatletter%
  \providecommand\color[2][]{%
    \errmessage{(Inkscape) Color is used for the text in Inkscape, but the package 'color.sty' is not loaded}%
    \renewcommand\color[2][]{}%
  }%
  \providecommand\transparent[1]{%
    \errmessage{(Inkscape) Transparency is used (non-zero) for the text in Inkscape, but the package 'transparent.sty' is not loaded}%
    \renewcommand\transparent[1]{}%
  }%
  \providecommand\rotatebox[2]{#2}%
  \ifx\svgwidth\undefined%
    \setlength{\unitlength}{1706.54042969bp}%
    \ifx\svgscale\undefined%
      \relax%
    \else%
      \setlength{\unitlength}{\unitlength * \real{\svgscale}}%
    \fi%
  \else%
    \setlength{\unitlength}{\svgwidth}%
  \fi%
  \global\let\svgwidth\undefined%
  \global\let\svgscale\undefined%
  \makeatother%
  \begin{picture}(1,0.72088604)%
    \put(0,0){\includegraphics[width=\unitlength,page=1]{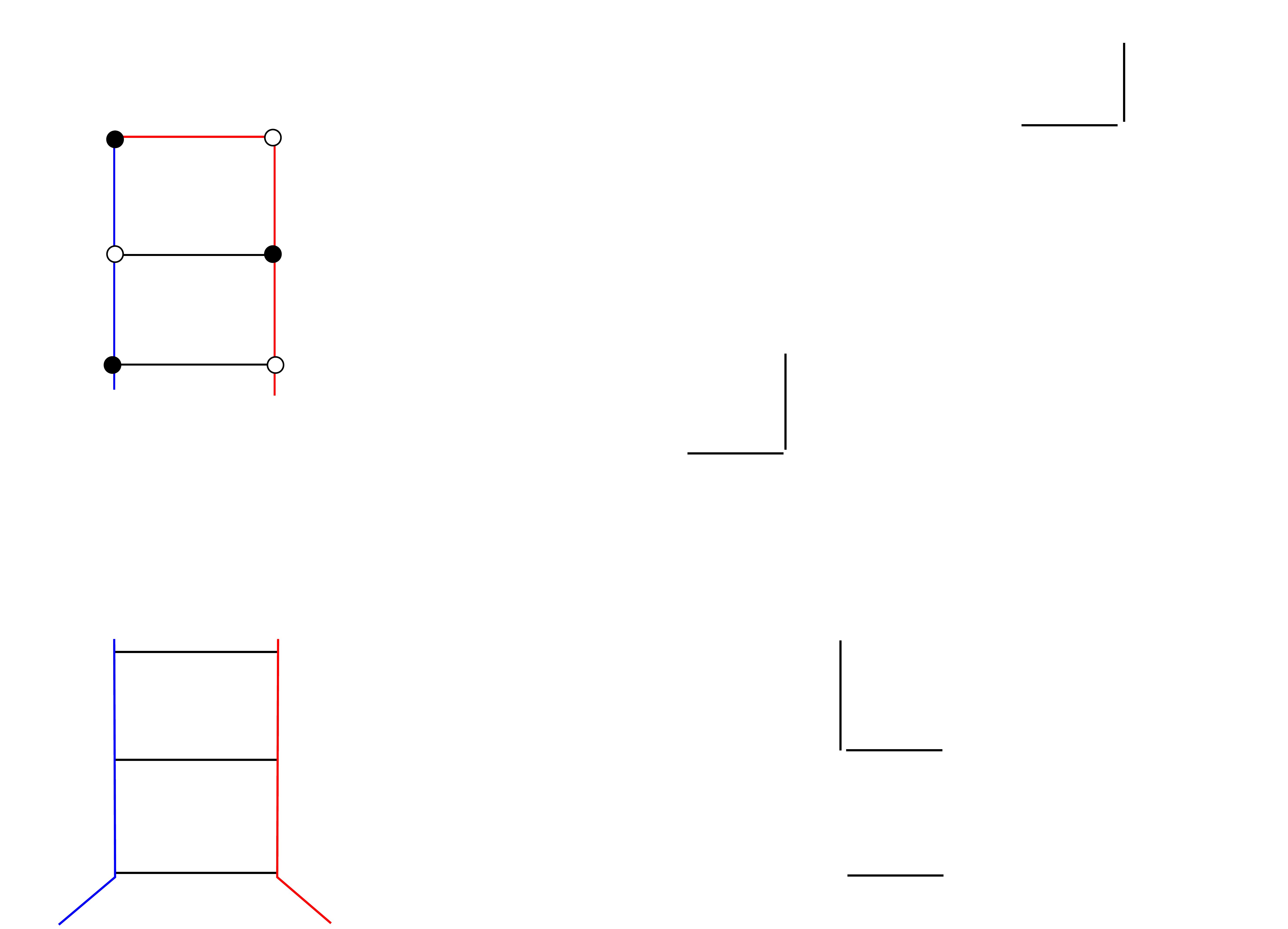}}%
    \put(0.08471938,0.31340546){\color[rgb]{0,0,0}\makebox(0,0)[b]{\smash{$\vdots$}}}%
    \put(0.21263062,0.31340546){\color[rgb]{0,0,0}\makebox(0,0)[b]{\smash{$\vdots$}}}%
    \put(0,0){\includegraphics[width=\unitlength,page=2]{snakes.pdf}}%
    \put(0.01487567,0.12521174){\color[rgb]{0,0,0}\makebox(0,0)[lb]{\smash{$2$}}}%
    \put(0.01487567,0.2076084){\color[rgb]{0,0,0}\makebox(0,0)[lb]{\smash{$2$}}}%
    \put(0.01487567,0.43112118){\color[rgb]{0,0,0}\makebox(0,0)[lb]{\smash{$2$}}}%
    \put(0.01487567,0.5173062){\color[rgb]{0,0,0}\makebox(0,0)[lb]{\smash{$2$}}}%
    \put(0.25259477,0.12521174){\color[rgb]{0,0,0}\makebox(0,0)[lb]{\smash{$2$}}}%
    \put(0.25259477,0.2076084){\color[rgb]{0,0,0}\makebox(0,0)[lb]{\smash{$2$}}}%
    \put(0.25259477,0.43112118){\color[rgb]{0,0,0}\makebox(0,0)[lb]{\smash{$2$}}}%
    \put(0.25259477,0.5173062){\color[rgb]{0,0,0}\makebox(0,0)[lb]{\smash{$2$}}}%
    \put(0.01593771,0.60725987){\color[rgb]{0,0,0}\makebox(0,0)[lb]{\smash{$1$}}}%
    \put(0.24991602,0.60677237){\color[rgb]{0,0,0}\makebox(0,0)[lb]{\smash{$1$}}}%
    \put(0,0){\includegraphics[width=\unitlength,page=3]{snakes.pdf}}%
    \put(-0.00122484,0.03287066){\color[rgb]{0,0,1}\makebox(0,0)[lb]{\smash{$A_k$}}}%
    \put(0.26680108,0.03287066){\color[rgb]{1,0,0}\makebox(0,0)[lb]{\smash{$A_j$}}}%
    \put(0.30373756,0.03902672){\color[rgb]{0,0,0}\makebox(0,0)[lb]{\smash{}}}%
    \put(0.54817583,0.04238915){\color[rgb]{0,0,1}\makebox(0,0)[lb]{\smash{$A_k$}}}%
    \put(0.77828798,0.04286269){\color[rgb]{1,0,0}\makebox(0,0)[lb]{\smash{$A_j$}}}%
    \put(0,0){\includegraphics[width=\unitlength,page=4]{snakes.pdf}}%
    \put(0.75190497,0.48141204){\color[rgb]{0,0,0}\makebox(0,0)[b]{\smash{$\iddots$}}}%
    \put(0,0){\includegraphics[width=\unitlength,page=5]{snakes.pdf}}%
    \put(0.6561851,0.23725999){\color[rgb]{0,0,0}\makebox(0,0)[lb]{\smash{$2$}}}%
    \put(0.75549854,0.1061134){\color[rgb]{0,0,0}\makebox(0,0)[lb]{\smash{$2$}}}%
    \put(0.47465913,0.36340438){\color[rgb]{0,0,0}\makebox(0,0)[lb]{\smash{$2$}}}%
    \put(0.73812075,0.6216205){\color[rgb]{0,0,0}\makebox(0,0)[lb]{\smash{$2$}}}%
    \put(0.84726408,0.7075298){\color[rgb]{0,0,0}\makebox(0,0)[lb]{\smash{$2$}}}%
    \put(0.73812075,0.7075298){\color[rgb]{0,0,0}\makebox(0,0)[lb]{\smash{$1$}}}%
    \put(0.96788836,0.7075298){\color[rgb]{0,0,0}\makebox(0,0)[lb]{\smash{$1$}}}%
    \put(0.96788836,0.58907147){\color[rgb]{0,0,0}\makebox(0,0)[lb]{\smash{$1$}}}%
    \put(0.89029266,0.52438051){\color[rgb]{0,0,0}\makebox(0,0)[lb]{\smash{$1$}}}%
    \put(0.47465913,0.46782155){\color[rgb]{0,0,0}\makebox(0,0)[lb]{\smash{$1$}}}%
    \put(0.72372802,0.3420253){\color[rgb]{0,0,0}\makebox(0,0)[lb]{\smash{$1$}}}%
    \put(0.47683078,0.1061134){\color[rgb]{0,0,0}\makebox(0,0)[lb]{\smash{$1$}}}%
    \put(0.77078273,0.23725999){\color[rgb]{0,0,0}\makebox(0,0)[lb]{\smash{$1$}}}%
    \put(0.61582772,0.1061134){\color[rgb]{0,0,0}\makebox(0,0)[lb]{\smash{$3$}}}%
    \put(0.62490629,0.3420253){\color[rgb]{0,0,0}\makebox(0,0)[lb]{\smash{$3$}}}%
    \put(0.89029266,0.58907147){\color[rgb]{0,0,0}\makebox(0,0)[lb]{\smash{$3$}}}%
    \put(0.47553308,0.23725999){\color[rgb]{0,0,0}\makebox(0,0)[lb]{\smash{$k$}}}%
    \put(0.56135708,0.23740162){\color[rgb]{0,0,0}\makebox(0,0)[lb]{\smash{$\overline{k}+1$}}}%
    \put(0.47273638,0.30083914){\color[rgb]{0,0,0}\makebox(0,0)[lb]{\smash{$k'$}}}%
    \put(0.55192559,0.26805209){\color[rgb]{0,0,0}\makebox(0,0)[lb]{\smash{$\overline{k'}+1$}}}%
    \put(0,0){\includegraphics[width=\unitlength,page=6]{snakes.pdf}}%
    \put(0.58502267,0.46782155){\color[rgb]{0,0,0}\makebox(0,0)[lb]{\smash{$2$}}}%
    \put(0.6257044,0.29765935){\color[rgb]{0,0,0}\makebox(0,0)[lb]{\smash{$1$}}}%
  \end{picture}%
\endgroup%

    \caption{Some production rules $A_i \rightarrow A_jA_k$ require an exponential number of bracket flattenings and/or spur removals to be reduced, despite $A_j$ and $A_k$ being already reduced.}
    \label{F:snakes}
\end{figure}

There are known techniques to handle exponentially long spurs~\cite{lohrey,Schleimer2008}, but for the more intricate cases pictured in Figure~\ref{F:snakes}, especially the kind on the right side, we need to develop our own tools. In order to do that, we rely on stronger inductive forms, similarly to those used in the free homotopy test~\cite{tcs,lazarusrivaud}. A reduced path is \textit{leftmost} (or \textit{rightmost}, respectively) if its turn sequence contains no $1$, respectively no $\bar{1}$ (we emphasize that we also use this definition in our general setting of wedges of surfaces).
Note that leftmost paths become rightmost paths under reversal, and vice-versa. This trivial observation will fuel many of our arguments. One can transform a reduced path into its rightmost (resp. leftmost) form by doing \textit{elementary right-shifts} (resp. \textit{elementary left-shifts}) which are the three transformations $x\overline{2}^s\overline{1}\overline{2}^ty \rightarrow (x+1)12^{s-1}32^{t-1}1(y+1)$, $x\overline{2}^s\overline{1}y \rightarrow (x+1)12^s(y+1)$ and $x\overline{1}2^ty\rightarrow (x+1)2^t1(y+1)$ or their mirrors, see Figure~\ref{F:spursbrackets}.

For each character in the straight-line program, we inductively compute both a rightmost and leftmost turn sequence. Then, both are used to carry the induction step. So the next step of our algorithm is the following reduction algorithm. It takes as input the straight-line program $\mathbb{A}$ describing a walk on a system of quads, and at the same time transforms it into two compressed turn sequences representing the reduced leftmost and rightmost form for the walk. In these representations, every character $A_i$ of $\mathbb{A}$ encoding a walk $w_i$ gives rise to two characters $A^L_i$ and $A^R_i$ representing the turn sequences of the leftmost and rightmost form of $w_i$. Since turn sequences only encode the turns at the interior vertices of a path and forget where the path starts, we store this information separately: for each character in the straight-line program, we store in a dictionary its starting and ending edges (which might be empty).

\vspace{1em}

\textsc{Reduction Algorithm}

\textbf{Input}: A compressed walk on a quad system described by a straight-line program $\mathbb{A}$, as output by Lemma~\ref{L:quadsystem}.

\textbf{Output}: Two compressed turn sequences on a quad system which are the leftmost and rightmost forms of the input, and their starting and ending edges.

\begin{itemize}
    \item The leaves of the production tree are transformed to empty sequences since they consist of a single edge. We store the edge as both the starting and the ending edge in our dictionary.
    \item Let $A_i \rightarrow A_jA_k$ be a production rule, and assume that we have inductively computed rightmost and leftmost reduced turn sequences for $A_j$ and $A_k$, which are denoted by $A^R_j$, $A^L_j$, $A^R_k$ and $A^L_k$, as well as their starting and ending edges. We explain how to compute a rightmost reduced turn sequence $A_i^R$, the leftmost case being symmetric. If the dictionary entries of $A^R_j$ and $A^R_k$ are empty, $A^R_i$ is an empty sequence, with empty starting and ending edges. If one of the dictionary entries is empty, $A^R_i$ is the other turn sequence, inheriting its starting and ending edges. Otherwise:
    \begin{enumerate}
        \item We look up the last edge of $A_j^R$ and the first edge of $A_k^R$ in the dictionary and compute the turn $\tau$ between them \footnote{At the start of steps 1, 2 and 3, it might happen that the last letter of $A_j^R$ and the first letter of $A_k^R$ are both \textit{edges}, if half of a jumping turn got removed. In that case, we remove those edges since they are superfluous.}. If it is different from $0$ or $1$, we go to step 3. Otherwise, we use Lemma~\ref{algo:main} to compute the largest $m$ so that $A_j^R[-m:]=\overline{A_k^L}[-m:]$ if they have the same ending edge (let us emphasize that we use the \textbf{leftmost} form here). We use Lemma~\ref{algo:second} to compute the last edge of $\overline{A_k^R}[:-m-1]$. If it arrives at the same vertex as $\overline{A_k^L}[:-m-1]$, we go to step 2 with $A_j^R:=A_j^R[:-m-1]$ and $A_k^R:=A_k^R[m+1:]$. If not, we go to step 2 with $A_j^R:=A_j^R[:-m-1]$ and $A_k^R:=(\tau-1) A_k^R[m+2:]$, where $\tau$ is the turn $A_k^R[m+1]$.

        \item Using Lemma~\ref{algo:second}, we compute\footnotemark[\value{footnote}] the turn $\tau$ between the last edge of $A_j^R$ and the first edge of $A_k^R$. If $\tau=1$, we use Lemma~\ref{algo:main} to compute a maximal subword $x2^{m'}112^{m'}y$ in $A_j^R\tau A_k^R$, i.e.,  $A_j^R\tau A_k^R=Sx2^{m'}112^{m'}yT$, and we set $A_j^R$ to be $S(x-1)$ and $A_k^R$ to be $T$ and go to step 3.
        It $\tau=0$, we use Lemma~\ref{algo:main} to compute the largest $m$ so that $A_j^R[-m:]=\overline{A_k^R}[-m:]$ and go to step 3 with $A_j^R:=A_j^R[:-m-1]$ and $A_k^R:=A_k^R[m+1:]$.
        \item If we came directly here, we look up the last edge of $A_j^R$ and the first edge of $A_k^R$ in the dictionary and compute the turn $\tau$ between them\footnotemark[\value{footnote}]. If we came from step 2, we compute\footnotemark[\value{footnote}] these edges using Lemma~\ref{algo:second} and use them to compute $\tau$. We concatenate the two words, adding $\tau$ in between.
        \begin{enumerate}
            \item[a.] If $\tau=\overline{1}$, we use Lemma~\ref{algo:main} to compute the maximal subword $x\overline{2}^s\overline{1}\overline{2}^ty$, and apply an elementary shift which changes it into $(x+1)12^{s-1}32^{t-1}1(y+1)$,  $(x+1)12^s(y+1)$, $(x+1)2^t1(y+1)$, or  depending on whether $s$ or $t$ is zero (if $x$ or $y$ are edges, they are changed into the turned edges). Since it is straightforward to encode efficiently $2^s$ using straight-line programs, this is done in polynomial time. We go to step b.
            \item[b.] If $\tau=1$ or $\tau=2$, we use Lemma~\ref{algo:main} to compute a maximal subword of the form $x12^k1y$. There might be multiple such subwords, we pick one of them arbitrarily. We change it into $(x-1)\overline{2}^k(y-1)$ (if $x$ or $y$ are edges, they are changed into the turned edges). Since it is straightforward to encode efficiently $2^k$ using straight-line programs, this is done in polynomial time. We loop this step while there are brackets. When there are none, we output the result and store in our dictionary the first and last edge: these are the first and last edge of respectively, $A_j^R$ and $A_k^R$, perhaps turned because of one of the elementary operations, or perhaps made empty if the entire word became empty.
            \item[c.] In any other case, we do not change anything and output the concatenated word, and store in our dictionary the first edge of $A_j^R$ and the last edge of $A_k^R$.
        \end{enumerate}

    \end{enumerate}
\end{itemize}

Step 3 simply implements the bracket flattenings and the shifts, so the main mysteries in this algorithm are steps 1 and 2. These are tailored to deal with the bad cases pictured in Figure~\ref{F:snakes}, which are the only possible bad cases; see the proof of Lemma~\ref{L:removingstaircases}. Figure~\ref{F:reductions} illustrates how step 1 reduces (a subset of) the right case of Figure~\ref{F:snakes} to its left case, and how step 2 deals with that case. Note that in these pictures, $A_j$ and $A_k$ were already rightmost.

\begin{figure}[ht]
    \def\svgwidth{14.5cm}
    \begingroup%
  \makeatletter%
  \providecommand\color[2][]{%
    \errmessage{(Inkscape) Color is used for the text in Inkscape, but the package 'color.sty' is not loaded}%
    \renewcommand\color[2][]{}%
  }%
  \providecommand\transparent[1]{%
    \errmessage{(Inkscape) Transparency is used (non-zero) for the text in Inkscape, but the package 'transparent.sty' is not loaded}%
    \renewcommand\transparent[1]{}%
  }%
  \providecommand\rotatebox[2]{#2}%
  \ifx\svgwidth\undefined%
    \setlength{\unitlength}{2545.68024687bp}%
    \ifx\svgscale\undefined%
      \relax%
    \else%
      \setlength{\unitlength}{\unitlength * \real{\svgscale}}%
    \fi%
  \else%
    \setlength{\unitlength}{\svgwidth}%
  \fi%
  \global\let\svgwidth\undefined%
  \global\let\svgscale\undefined%
  \makeatother%
  \begin{picture}(1,0.78971411)%
    \put(-0.20453753,0.54130644){\color[rgb]{0,0,0}\makebox(0,0)[lb]{\smash{}}}%
    \put(0,0){\includegraphics[width=\unitlength,page=1]{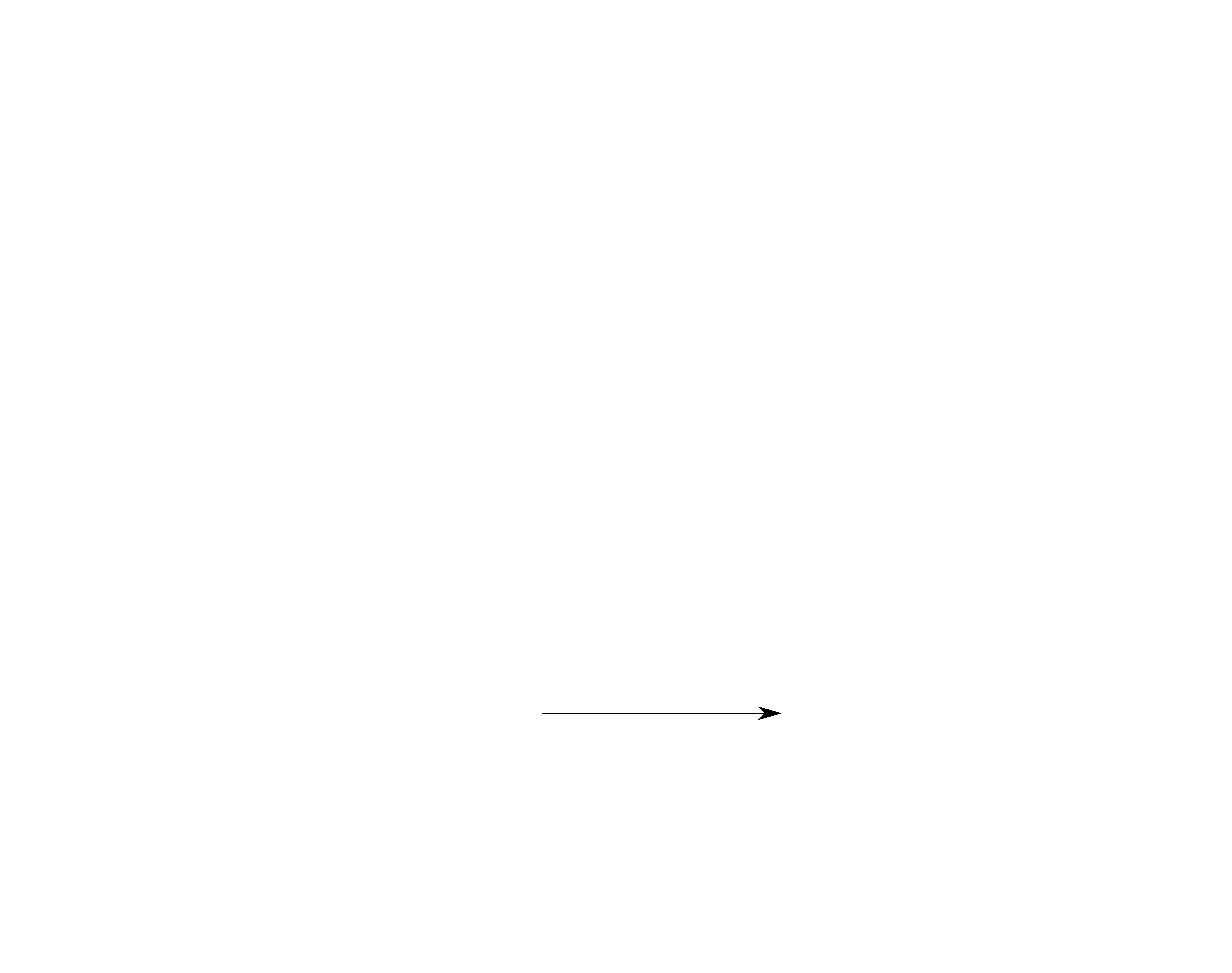}}%
    \put(0.518648,0.23140719){\color[rgb]{0,0,0}\makebox(0,0)[lb]{\smash{Step 2}}}%
    \put(0,0){\includegraphics[width=\unitlength,page=2]{reductions.pdf}}%
    \put(-0.00077682,0.37870324){\color[rgb]{0,0,0}\makebox(0,0)[lb]{\smash{Step 1}}}%
    \put(0,0){\includegraphics[width=\unitlength,page=3]{reductions.pdf}}%
    \put(0.05300266,0.46054731){\color[rgb]{0,0,1}\makebox(0,0)[lb]{\smash{$A^R_k$}}}%
    \put(0.2853778,0.46266057){\color[rgb]{1,0,0}\makebox(0,0)[lb]{\smash{$A^R_j$}}}%
    \put(0,0){\includegraphics[width=\unitlength,page=4]{reductions.pdf}}%
    \put(0.16850675,0.62619961){\color[rgb]{0,0,0}\makebox(0,0)[lb]{\smash{$2$}}}%
    \put(0.23508323,0.53828328){\color[rgb]{0,0,0}\makebox(0,0)[lb]{\smash{$2$}}}%
    \put(0.04681769,0.71076269){\color[rgb]{0,0,0}\makebox(0,0)[lb]{\smash{$2$}}}%
    \put(0.04681769,0.78076052){\color[rgb]{0,0,0}\makebox(0,0)[lb]{\smash{$1$}}}%
    \put(0.21378528,0.69643085){\color[rgb]{0,0,0}\makebox(0,0)[lb]{\smash{$1$}}}%
    \put(0.04827349,0.53828328){\color[rgb]{0,0,0}\makebox(0,0)[lb]{\smash{$1$}}}%
    \put(0.24532922,0.62619961){\color[rgb]{0,0,0}\makebox(0,0)[lb]{\smash{$1$}}}%
    \put(0.14145246,0.53828328){\color[rgb]{0,0,0}\makebox(0,0)[lb]{\smash{$3$}}}%
    \put(0.14753844,0.69643085){\color[rgb]{0,0,0}\makebox(0,0)[lb]{\smash{$3$}}}%
    \put(0.04740357,0.62619961){\color[rgb]{0,0,0}\makebox(0,0)[lb]{\smash{$k$}}}%
    \put(0.10493715,0.62058049){\color[rgb]{0,0,0}\makebox(0,0)[lb]{\smash{\small{$\overline{k}+1$}}}}%
    \put(0.04552874,0.66882099){\color[rgb]{0,0,0}\makebox(0,0)[lb]{\smash{$k'$}}}%
    \put(0.09861459,0.6452544){\color[rgb]{0,0,0}\makebox(0,0)[lb]{\smash{\small{$\overline{k'}+1$}}}}%
    \put(0,0){\includegraphics[width=\unitlength,page=5]{reductions.pdf}}%
    \put(0.12080178,0.78076052){\color[rgb]{0,0,0}\makebox(0,0)[lb]{\smash{$2$}}}%
    \put(0.14807347,0.66668937){\color[rgb]{0,0,0}\makebox(0,0)[lb]{\smash{$1$}}}%
    \put(0,0){\includegraphics[width=\unitlength,page=6]{reductions.pdf}}%
    \put(0.40522231,0.45766839){\color[rgb]{0,0.50196078,0}\makebox(0,0)[lb]{\smash{$A^L_k$}}}%
    \put(0,0){\includegraphics[width=\unitlength,page=7]{reductions.pdf}}%
    \put(0.6117857,0.47482976){\color[rgb]{1,0,0}\makebox(0,0)[lb]{\smash{$A^R_j$}}}%
    \put(0,0){\includegraphics[width=\unitlength,page=8]{reductions.pdf}}%
    \put(0.90984091,0.46665082){\color[rgb]{1,0,0}\makebox(0,0)[lb]{\smash{$A^R_j[:-m-1]$}}}%
    \put(0.69011706,0.45457365){\color[rgb]{0,0,1}\makebox(0,0)[lb]{\smash{$A^R_k[m+2:]$}}}%
    \put(0,0){\includegraphics[width=\unitlength,page=9]{reductions.pdf}}%
    \put(0.81196785,0.57825968){\color[rgb]{0,0,0}\makebox(0,0)[lb]{\smash{$\tau$}}}%
    \put(0,0){\includegraphics[width=\unitlength,page=10]{reductions.pdf}}%
    \put(0.93007725,0.01150923){\color[rgb]{1,0,0}\makebox(0,0)[lb]{\smash{$S(x-1)$}}}%
    \put(0.69768188,0.00188915){\color[rgb]{0,0,1}\makebox(0,0)[lb]{\smash{$T$}}}%
    \put(0,0){\includegraphics[width=\unitlength,page=11]{reductions.pdf}}%
    \put(0.35654289,0.01044172){\color[rgb]{1,0,0}\makebox(0,0)[lb]{\smash{$A^R_j[:-m-1]$}}}%
    \put(0.08057696,0.00329742){\color[rgb]{0,0,1}\makebox(0,0)[lb]{\smash{$(\tau-1)A^R_k[m+2:]$}}}%
    \put(0,0){\includegraphics[width=\unitlength,page=12]{reductions.pdf}}%
    \put(0.18615754,0.1274028){\color[rgb]{0,0,0}\makebox(0,0)[lb]{\smash{$\tau-1$}}}%
    \put(0,0){\includegraphics[width=\unitlength,page=13]{reductions.pdf}}%
  \end{picture}%
\endgroup%

    \caption{The leftmost form $A^L_k$ forms a long spur with $A^R_j$. After reducing $A^R_k$ by the length of this spur, all of the staircases get removed, except possibly the last one. It disappears in step 2.}
    \label{F:reductions}
\end{figure}

We claim that after this procedure, the path that represents the word $A_i^R$ (respectively $A_i^L$) is homotopic to the path represented by $A_i$, and is rightmost (respectively leftmost). This is straightforward for the leaves of the production tree, but the rest requires some more involved analysis.

For this analysis, we employ the terminology developed by Despré and Lazarus in their analysis of the combinatorics of quad systems~\cite[Section~4]{desprelazarus}, to which we refer the reader for definitions. We will rely on two important structural results~\cite[Theorem~8 and Theorem~10]{desprelazarus}(see also ~\cite{tcs,lazarusrivaud}).

\begin{theorem}\label{T:uniqueness} For a path $p$ on a combinatorial surface of negative Euler characteristic described by a system of quads, there is a unique rightmost, respectively leftmost, path homotopic to $p$.
\end{theorem}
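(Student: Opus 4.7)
The plan is to pass to the universal cover and run a discrete Gauss--Bonnet argument. I will prove uniqueness of the rightmost path; the leftmost case then follows by reversing the orientation of the surface, which exchanges the roles of $1$ and $\bar 1$ in all turn sequences.

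First, I would lift $p$ to a path $\tilde p$ in the universal cover $\tilde S$, which inherits a cellulation by quads from $S$. Any two reduced paths homotopic rel endpoints to $p$ lift to paths between the same pair of lifts of the endpoints. For any reduced path $q$ homotopic to $p$, its lift $\tilde q$ is an embedded arc in $\tilde S$: a self-intersection would bound an innermost subdisk, and walking around its boundary would force either a spur (a $0$-turn) at a backtracking vertex or a bracket inside the subdisk, contradicting that $q$ is reduced.

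Second, suppose for contradiction that $q_1 \neq q_2$ are both rightmost paths homotopic to $p$ with the same endpoints. Lift them to $\tilde q_1, \tilde q_2$ sharing both endpoints in $\tilde S$. Their union bounds a union of topological disks in $\tilde S$; replacing by an innermost component if necessary, I may assume it bounds a single disk $D$ cellulated by the quads of $\tilde S$. Orient the plane so that $D$ lies on the right of $\tilde q_1$ and on the left of $\tilde q_2$.

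Third, I apply discrete Gauss--Bonnet to $D$, measuring angles in quarter-turns. In a quad system on a surface of negative Euler characteristic, each vertex has degree at least $4$, so every interior vertex $v$ of $D$ has angle defect $4 - \deg(v) \leq 0$. Gauss--Bonnet then states
\[
\sum_{v \in \mathrm{int}(D)} \bigl(4 - \deg(v)\bigr) \;+\; \sum_{v \in \partial D} \bigl(2 - \deg_D(v)\bigr) \;=\; 4,
\]
so the boundary contributions sum to at least $4$. Translating degrees into turns: an interior vertex of $\tilde q_1$ with turn $\tau$ (with $D$ on the right) contributes $1 - \tau$ to the boundary sum, and analogously $1 + \tau$ along $\tilde q_2$ (with $D$ on the left). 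The rightmost assumption forbids $\tau = \bar 1$ and reducedness forbids $\tau = 0$ and bracket subwords, which together force each interior boundary contribution along $\tilde q_1$ to be $\leq 0$, and symmetrically along $\tilde q_2$. The only positive contributions come from the two shared endpoints, each worth at most $1$, for a total of at most $2 < 4$, a contradiction. Hence $D$ is trivial and $\tilde q_1 = \tilde q_2$.

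The main obstacle will be the boundary bookkeeping: one must verify that brackets are precisely the local configurations that would make a $\leq 0$ interior contribution turn positive, and handle the shared endpoints (where a turn is not defined in the usual sense) as well as the possibility of boundary vertices with $\deg_D(v) = 1$ where the path touches the quad structure tangentially. A careful case analysis on turn values $\{\bar 2, 2, 3, 4, \dots\}$ consistent with the rightmost and reducedness conditions shows the bound $\leq 0$ is tight, and this is the technical content of Theorems~8 and~10 of~\cite{desprelazarus}.
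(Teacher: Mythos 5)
The paper does not prove this statement --- it is cited directly as \cite[Theorem~8]{desprelazarus}, with \cite[Theorem~10]{desprelazarus} invoked separately for the disk-diagram structure. Your sketch follows the same underlying strategy (universal cover and a discrete Gauss--Bonnet count on the bigon bounded by two putative rightmost representatives), but the crucial estimate is wrong, not merely left informal. Once you fix $D$ on the \emph{right} of $\tilde q_1$, it necessarily lies on the \emph{left} of $\tilde q_2$, and the two sides are not treated symmetrically by the ``rightmost'' hypothesis. With $D$ on its right, the contribution at an interior vertex of $\tilde q_1$ is $2-\deg_D(v)=2-\deg(v)+\tau$, and excluding $\tau=\bar 1$ does give $\leq 0$. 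But with $D$ on its left, the contribution at an interior vertex of $\tilde q_2$ is $2-\tau$, and a rightmost path is perfectly free to take $\tau=1$, contributing $+1$. So the assertion that the interior contributions are $\leq 0$ ``and symmetrically along $\tilde q_2$'' is false: forbidding $\bar 1$ controls the wrong side of $\tilde q_2$, and neither reducedness nor bracket-freeness rules out positive contributions locally there.

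What actually closes the argument is a \emph{global} use of the no-bracket condition on the $\tilde q_2$ side: between two consecutive $\tau=1$ turns there must be a turn $\geq 3$ (it cannot be $0$ or $\bar 1$, and a block of all $2$'s would be a bracket), contributing $\leq -1$, so the total along $\tilde q_2$ is $\leq +1$ regardless of its length; adding $\leq 0$ from $\tilde q_1$, at most $+1$ from each shared endpoint, and $\leq 0$ from interior vertices (in a quad system on a genus-$g\geq 2$ surface every vertex has degree $4g\geq 8$, so interior vertices actually subtract at least $4$ each) gives a total $\leq 3<4$, contradicting Gauss--Bonnet. This accounting is exactly what is encoded by the staircase decomposition of \cite[Theorem~10]{desprelazarus} and is the reason uniqueness holds for rightmost paths but fails for merely reduced ones; it cannot be waved off as boundary bookkeeping. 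Finally, your stated contribution formulas $1-\tau$ and $1+\tau$ do not match the identity $\sum(2-\deg_D(v))$ that you wrote down and should be rederived under a fixed turn convention.
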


\begin{theorem}\label{T:geodesics}
Let $c$ and $d$ be a rightmost and a leftmost path on a combinatorial surface of negative Euler characteristic described by a system of quads. Then $c \cdot d^{-1}$ bounds a disk diagram composed of an alternating sequence of paths and quad staircases connected through their tips.
\end{theorem}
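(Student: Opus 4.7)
The plan is to produce a disk diagram $D$ bounded by $c\cdot d^{-1}$ and then read off the staircase structure from its combinatorics. Because the surface has negative Euler characteristic, its universal cover is homeomorphic to $\mathbb{R}^2$, and the quad structure lifts to a locally finite quad tiling. Lifting $c$ and $d$ to paths $\tilde{c}$ and $\tilde{d}$ sharing endpoints (they are homotopic and based), one obtains a unique topological disk $\tilde{D}$ between them; projecting the quads of the tiling that lie inside $\tilde{D}$ gives the sought disk diagram $D$.

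Next I would decompose $\partial D$ according to where $c$ and $d$ agree on an edge versus where they disagree. Maximal agreement arcs split $D$ into subdiagrams $R_1,\ldots,R_k$, each bounded by a subpath $c'$ of $c$ and a reversed subpath $d'$ of $d$ that meet only at their two endpoints. Showing that each $R_j$ is exactly a quad staircase (with those endpoints as tips) immediately yields the alternating decomposition claimed in the statement.

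The heart of the argument is a local-to-global induction on the number of quads in such an $R_j$. At a tip vertex $s$ of $R_j$, the turn performed by $c$ when going from $c$ into $c'$ strictly exceeds the turn performed by $d$ at the same point (otherwise $c$ and $d$ would continue together and $s$ would not be a tip). Examine the unique quad $Q$ of $R_j$ containing $s$: since each vertex has valence four in the quad system, $Q$ must contribute two consecutive edges to $c'$ (along the outer corner) and two to $d'$ (along the inner corner), producing local turn patterns compatible with $1$ on the $c$ side and $\bar{1}$ on the $d$ side. Removing $Q$ leaves a strictly smaller subregion of the same type, bounded by slightly shortened rightmost and leftmost subpaths; by the induction hypothesis this is a staircase, and re-attaching $Q$ extends it by one step. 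Iterating, $R_j$ is a staircase from $s$ to the opposite tip, with the characteristic turn sequences $1,2,\ldots,2,1$ and $\bar{1},\bar{2},\ldots,\bar{2},\bar{1}$ on its two sides.

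The main obstacle is ruling out pathological fillings of $R_j$: interior vertices, pinch points between $c'$ and $d'$, or configurations in which $R_j$ is tiled by several ``columns'' of quads rather than a single staircase. These are excluded by combining Theorem~\ref{T:uniqueness} with the negative Euler characteristic assumption: an interior vertex or a non-staircase filling would produce, via a local shifting move, a distinct rightmost (or leftmost) representative homotopic to $c$ (resp.\ $d$), violating uniqueness. A secondary point requiring care is the matching at tips between consecutive pieces of the decomposition, which amounts to checking that the turn of $c$ or $d$ at a tip coincides with the corner angle of the abutting staircase; this is enforced by the rightmost/leftmost conditions, since any mismatch would again allow a local reduction contradicting minimality.
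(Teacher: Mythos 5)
The paper does not supply its own proof of Theorem~\ref{T:geodesics}: the statement is imported from Despr\'e and Lazarus~\cite[Theorem~10]{desprelazarus}, and the surrounding text only adds the observation that the result extends to wedges by gluing the diagrams in each summand. Your proposal instead sketches a proof from scratch---lift to the universal cover, cut the region between the two lifts along maximal agreement arcs, and induct on the number of quads in each non-degenerate piece. This outline is close to the one used in the cited literature, but the inductive step as written contains several concrete errors.

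The assertion that ``each vertex has valence four in the quad system'' is false: the quad system of a genus-$g$ surface has exactly two vertices, both of degree $4g \geq 8$, and this persists in the universal cover. The downstream deduction that the tip quad $Q$ ``contributes two consecutive edges to $c'$ and two to $d'$'' therefore does not follow; it is also simply false whenever the staircase has more than one quad, since the tip quad then has only three of its four sides on $\partial R_j$. You also never establish that the interior angle of $R_j$ at a tip is exactly one corner---the only thing available there is that $c$ and $d$ split, which gives an angle $\geq 1$---so the ``unique quad containing $s$'' is not yet well-defined. After removing $Q$, the new boundary acquires turns at $Q$'s opposite corner that you do not compute, so the shortened paths are not known to remain rightmost and leftmost and the induction hypothesis is not reinstated. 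Finally, the claim that interior vertices and non-staircase fillings are excluded because a ``local shifting move'' would contradict Theorem~\ref{T:uniqueness} does not hold up: an interior vertex of the disk diagram is invisible from the boundary turn sequences and does not by itself produce a second rightmost or leftmost representative. Ruling out such configurations is precisely what the combinatorial Gauss--Bonnet count in Despr\'e--Lazarus accomplishes, and that count is absent from your proposal.
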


The two theorems extend directly to our setting of wedge of surfaces and circles, since the fundamental group of a wedge of spaces is the free product of their fundamental groups. For the first theorem, the generalization to wedges then follows by the uniqueness of the rightmost and leftmost paths in each surface they span. For the second theorem, we obtain the generalization to wedges by gluing the disk diagrams in each surface together.

The following lemma is the crux of the analysis. We state it for rightmost paths but the symmetric version with leftmost paths holds with the same proof.

\begin{lemma}\label{L:removingstaircases}
At the end of step 2, the paths represented by $A_j^R$ and $A_k^R$ are rightmost. Furthermore, denoting by $\tau$ the turn between $A_j^R$ and $A_k^R$ at the end of step 2, the path represented by $A_j^R \tau A_k^R$ is homotopic to the one before these two steps, contains no spurs, contains at most two brackets, and these brackets contain both at least one $2$.
\end{lemma}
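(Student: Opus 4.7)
The plan is to establish the three claims of the lemma in sequence: that $A_j^R$ and $A_k^R$ remain rightmost, that the concatenation is homotopic to the original, and the structural bound on spurs and brackets.

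For rightmostness, I would observe that steps 1 and 2 only truncate $A_j^R$ and $A_k^R$, possibly prepending a single turn $\tau-1$ to $A_k^R$ in step 1. Since any sub-turn-sequence of a rightmost reduced sequence is rightmost and reduced, the truncations are harmless. For the prepended $\tau-1$: because $\tau$ was a turn in the originally rightmost and reduced $A_k^R$, we have $\tau \notin \{0,\bar 1\}$, hence $\tau-1 \neq \bar 1$, and the prepended turn cannot break rightmostness at its position. Homotopy invariance then stems from the fact that every modification corresponds to a chain of bracket flattenings and elementary shifts applied to the turn sequence $A_j^R \tau A_k^R$. The specific role of the leftmost form $\overline{A_k^L}$ in step 1 is to locate in polynomial time the boundary of a staircase pattern that would otherwise require exponentially many bracket flattenings to unravel; the case distinction about whether the truncated suffix of $\overline{A_k^R}$ lands on the same vertex as that of $\overline{A_k^L}$ corresponds exactly to the two possible boundary behaviors of such a staircase, as described by Theorem~\ref{T:geodesics} applied to the rightmost/leftmost pair $(A_k^R, A_k^L)$.

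For the structural claim, since both $A_j^R$ and $A_k^R$ are individually rightmost and reduced, any spur or bracket in the concatenation $A_j^R \tau A_k^R$ must involve the junction turn $\tau$. A $\bar 1\bar 2^k \bar 1$ bracket would need two occurrences of $\bar 1$, which neither piece contains and which the single turn $\tau$ cannot provide; such brackets are therefore impossible. Spurs at the junction are eliminated by step 2's $\tau=0$ case through the maximal reverse-suffix matching. For $1\,2^k\,1$ brackets, a case analysis on the location of the two $1$'s yields three configurations: (a) one $1$ from the tail of $A_j^R$ and the other $1$ supplied by $\tau$ (requires $\tau=1$); (b) $\tau$ provides the first $1$ and the second lies in the head of $A_k^R$ (requires $\tau=1$); (c) one $1$ in each side with $\tau=2$ between them. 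Cases (a) and (b) can occur simultaneously when $\tau=1$, giving at most two brackets $1\,2^a\,1$ and $1\,2^b\,1$; case (c) yields at most one bracket $1\,2^{a+1+b}\,1$ that contains at least one $2$ (namely $\tau$). In the $\tau=1$ situation, the maximality of the pattern $x\,2^{m'}\,1\,1\,2^{m'}\,y$ removed by step 2 forces $a,b \geq 1$: otherwise a $1\,1$ sub-bracket would survive at the junction, and it would have been captured (together with any flanking $2$'s) by step 2's pattern matching, contradicting maximality. Hence each of the at most two brackets contains at least one $2$.

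The main obstacle is making rigorous the step 1 operation, in particular showing that the truncations based on matching $A_j^R$ with $\overline{A_k^L}$ realize a homotopy and correctly expose exactly the staircase that must disappear. This requires invoking Theorem~\ref{T:geodesics} on the pair $(A_k^R, A_k^L)$ to describe the internal disk diagram between the two representatives by staircases, and then tracking how such a staircase interacts with $A_j^R$ when placed adjacent to it with $\tau \in \{0,1\}$. The two sub-cases in step 1 (whether or not $\overline{A_k^R}[:-m-1]$ lands on the same vertex as $\overline{A_k^L}[:-m-1]$) distinguish whether the tip of a staircase falls inside the truncated portion, which determines whether the prepended $\tau-1$ shift is needed; translating this geometric picture into the algorithmic rules, and verifying that step 2 then sees at most the residual single $11$-bracket or residual spur, is the delicate combinatorial step.
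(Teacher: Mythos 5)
Your high-level picture is sensible, and you correctly identify the delicate part, but the argument as written has a concrete gap and misses the key structural tool the paper uses. The paper's proof does not argue locally at the junction $\tau$; it considers the disk diagram bounded by $A_j^R$, $A_k^R$ and $\overline{A_i^R}$ (the rightmost form of the \emph{whole} word $A_i$) and applies Theorem~\ref{T:geodesics} to that ``combinatorial triangle.'' The two orientations of the resulting reduced triangle correspond exactly to the two branches your step-2 analysis is trying to handle, and it is this global picture---not local counting---that yields the bound of two brackets, each containing a $2$, and the absence of spurs.

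Two concrete issues in your argument. First, for rightmostness of $A_k^R$ after step~1: your observation that contiguous subsequences of a rightmost reduced word are again rightmost and reduced is fine for pure truncations, but the branch that replaces $A_k^R$ by $(\tau-1)A_k^R[m+2:]$ prepends a turn not present in the original sequence. You rule out $\tau-1=\bar{1}$, but you do not rule out $\tau-1=0$ (a spur) or the possibility that $\tau-1$ together with the start of $A_k^R[m+2:]$ forms a bracket; either would contradict ``rightmost'' (which, as defined in the paper, presupposes ``reduced''). This cannot be resolved locally: one needs the geometric fact that the prepended turn is the turn at a staircase tip in the triangle diagram, which is what the paper's case analysis establishes. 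Second, your maximality argument that the surviving brackets must contain a $2$ does not close: step~2's $\tau=1$ rule matches the maximal pattern around the $\tau$ that is \emph{current during} step~2, but the $\tau$ whose brackets you must bound is recomputed at the start of step~3 after the truncation $A_j^R:=S(x-1)$, $A_k^R:=T$, and nothing in your case analysis precludes this new junction turn being $1$ with a $1$ immediately on one side. Again, the paper excludes this by the orientation case analysis on the post-step-1 triangle $\Delta'$ (clockwise forces $\tau\ge 2$; counterclockwise forces any remaining bracket through $\tau$ to contain a $2$ by maximality of the removed staircase/ladder).
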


\begin{proof}
Since $A_j^R$ and $A_k^R$ are obtained from subpaths of rightmost curves, they are rightmost. Recall that $A_i^R$ is the rightmost path homotopic to $A_i$. Since $A^R_i$ is homotopic to $A_j^R\tau A_k^R$, the three paths represented by $A_j^R$, $A_k^R$ and $\overline{A_i^R}$ bound together a disk diagram. Since $A_i^R$ and $A^R_j$ are rightmost, by uniqueness of rightmost representatives of paths, they share a unique maximal subpath which is a common prefix, likewise for $A_i^R$ and $A^R_k$ where it is a common suffix. We consider the combinatorial triangle $\Delta$ obtained after removing these subpaths, see Figure~\ref{F:triangles}. We denote its endpoints by $a, b$ and $c$, where $a$ is the vertex between $A^R_j$ and $A^R_k$, $b$ the vertex between $A^R_k$ and $A^R_i$, and $c$ the last one. Let $x$ denote the first vertex of degeneracy of $\Delta$ on the path from $c'$ to $a$, and $y$ denote the vertex at the end of the first common path of $A^R_j$ and $A^R_k$ between $x$ and $a$. Then we have a pinched disk diagram between $y$ and $a$, the boundary of which is made of a subpath $\gamma_j$ of $A_j^R$ and a subpath $\gamma_k$ of $A_k^R$. Theorem~\ref{T:geodesics} tells us that this disk diagram is an alternating sequence of staircases and paths, which are connected by their tips. By definition, $y$ is the tip of a staircase. We claim that $y$ also lies on $A_k^L$. Otherwise, denoting by $\lambda_1$ and $\lambda_2$ the leftmost paths between $a$ and $y$ and between $y$ and $c$, the turn between $\lambda_1$ and $\lambda_2$ at $y$ would be $\overline{1}$, yielding a bracket with the other side of the staircase, a contradiction.

Since $\gamma_j$ and $\gamma_k$ are both rightmost, they have the same length and $\lambda_1$, the leftmost path homotopic to $\gamma_k$ is exactly $\overline{\gamma_j}$.  Therefore, during Step 1., the whole subdisk between $\gamma_j$ and $\gamma_k$ got removed.

\begin{figure}[t]
    \centering
    \def\svgwidth{\textwidth}
    \begingroup%
  \makeatletter%
  \providecommand\color[2][]{%
    \errmessage{(Inkscape) Color is used for the text in Inkscape, but the package 'color.sty' is not loaded}%
    \renewcommand\color[2][]{}%
  }%
  \providecommand\transparent[1]{%
    \errmessage{(Inkscape) Transparency is used (non-zero) for the text in Inkscape, but the package 'transparent.sty' is not loaded}%
    \renewcommand\transparent[1]{}%
  }%
  \providecommand\rotatebox[2]{#2}%
  \ifx\svgwidth\undefined%
    \setlength{\unitlength}{3723.87942619bp}%
    \ifx\svgscale\undefined%
      \relax%
    \else%
      \setlength{\unitlength}{\unitlength * \real{\svgscale}}%
    \fi%
  \else%
    \setlength{\unitlength}{\svgwidth}%
  \fi%
  \global\let\svgwidth\undefined%
  \global\let\svgscale\undefined%
  \makeatother%
  \begin{picture}(1,0.61726836)%
    \put(0,0){\includegraphics[width=\unitlength,page=1]{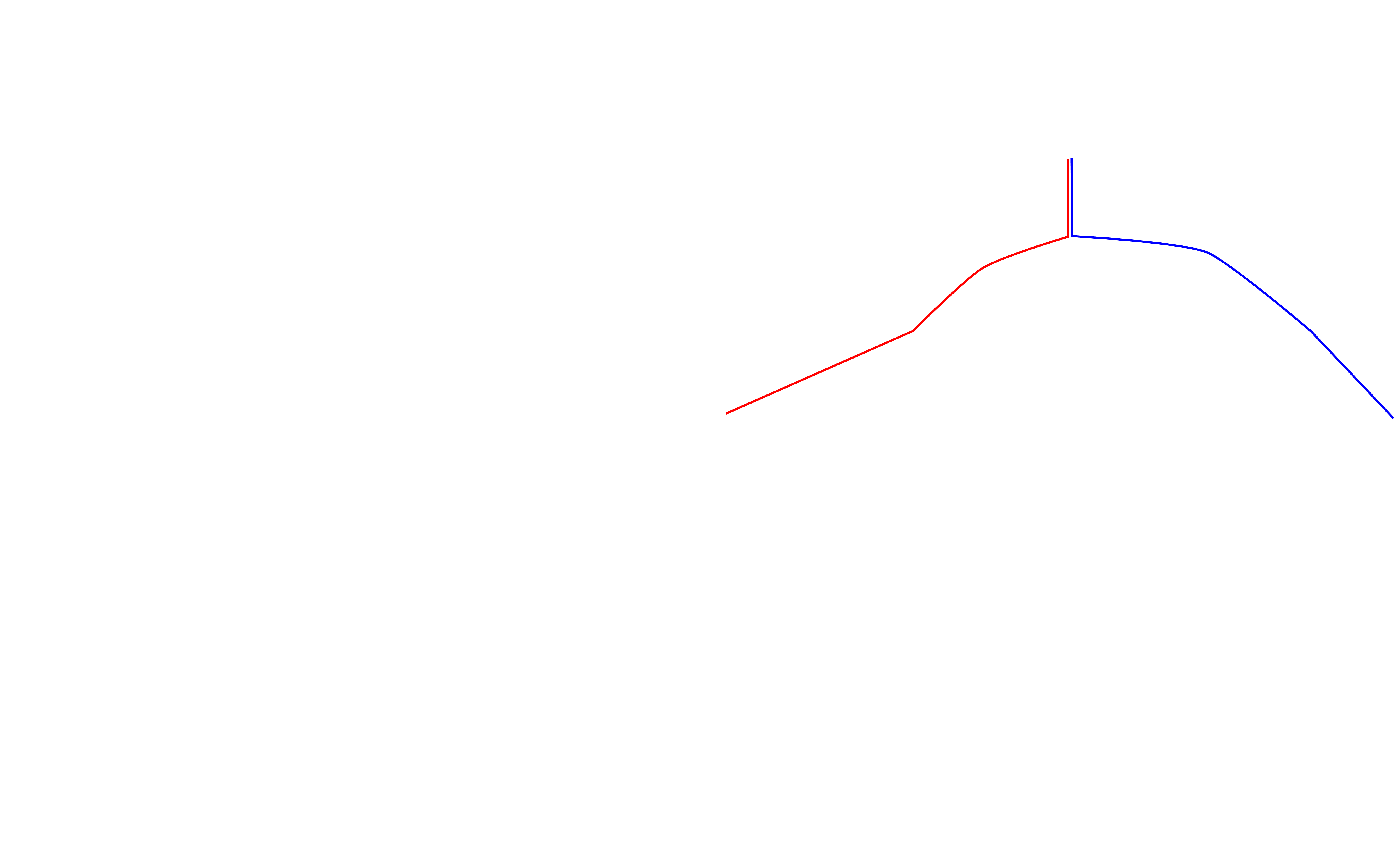}}%
    \put(0.64487751,0.4299007){\color[rgb]{1,0,0}\makebox(0,0)[lb]{\smash{$A_j^R$}}}%
    \put(0,0){\includegraphics[width=\unitlength,page=2]{triangles.pdf}}%
    \put(0.85626071,0.44828836){\color[rgb]{0,0,1}\makebox(0,0)[lb]{\smash{$A_k^R$}}}%
    \put(0,0){\includegraphics[width=\unitlength,page=3]{triangles.pdf}}%
    \put(0.10650491,0.10597447){\color[rgb]{0,0,1}\makebox(0,0)[lb]{\smash{$A_k^R$}}}%
    \put(0.25596499,0.04428766){\color[rgb]{1,0,1}\makebox(0,0)[lb]{\smash{$A_i^R$}}}%
    \put(0.32931555,0.29285774){\color[rgb]{0,0,0}\makebox(0,0)[lb]{\smash{$a$}}}%
    \put(0.10244835,0.07614481){\color[rgb]{0,0,0}\makebox(0,0)[lb]{\smash{$c'$}}}%
    \put(0.22410241,0.1856544){\color[rgb]{0,0,0}\makebox(0,0)[lb]{\smash{$x$}}}%
    \put(0,0){\includegraphics[width=\unitlength,page=4]{triangles.pdf}}%
    \put(0.61810656,0.11426077){\color[rgb]{0,0,1}\makebox(0,0)[lb]{\smash{$A_k^R$}}}%
    \put(0.77094255,0.04428766){\color[rgb]{1,0,1}\makebox(0,0)[lb]{\smash{$A_i^R$}}}%
    \put(0.61323461,0.07845107){\color[rgb]{0,0,0}\makebox(0,0)[lb]{\smash{$c'$}}}%
    \put(0.94506317,0.07999989){\color[rgb]{0,0,0}\makebox(0,0)[lb]{\smash{$b'$}}}%
    \put(0,0){\includegraphics[width=\unitlength,page=5]{triangles.pdf}}%
    \put(0.49657429,0.16827509){\color[rgb]{0,0,0}\makebox(0,0)[lb]{\smash{Step 1}}}%
    \put(0.77401538,0.20648744){\color[rgb]{0,0,0}\makebox(0,0)[lb]{\smash{$a'$}}}%
    \put(0,0){\includegraphics[width=\unitlength,page=6]{triangles.pdf}}%
    \put(0.3484125,0.13834605){\color[rgb]{1,0,0}\makebox(0,0)[lb]{\smash{$A_j^R$}}}%
    \put(0.87357863,0.13834605){\color[rgb]{1,0,0}\makebox(0,0)[lb]{\smash{$A_j^R$}}}%
    \put(0,0){\includegraphics[width=\unitlength,page=7]{triangles.pdf}}%
    \put(0.25596499,0.35621848){\color[rgb]{1,0,1}\makebox(0,0)[lb]{\smash{$A_i^R$}}}%
    \put(0.32931555,0.60478856){\color[rgb]{0,0,0}\makebox(0,0)[lb]{\smash{$a$}}}%
    \put(0.43404085,0.38373541){\color[rgb]{0,0,0}\makebox(0,0)[lb]{\smash{$c'$}}}%
    \put(0.22410241,0.44810675){\color[rgb]{0,0,0}\makebox(0,0)[lb]{\smash{$x$}}}%
    \put(0,0){\includegraphics[width=\unitlength,page=8]{triangles.pdf}}%
    \put(0.77094255,0.35621848){\color[rgb]{1,0,1}\makebox(0,0)[lb]{\smash{$A_i^R$}}}%
    \put(0.95437558,0.38777776){\color[rgb]{0,0,0}\makebox(0,0)[lb]{\smash{$c'$}}}%
    \put(0.61867891,0.38672246){\color[rgb]{0,0,0}\makebox(0,0)[lb]{\smash{$b'$}}}%
    \put(0,0){\includegraphics[width=\unitlength,page=9]{triangles.pdf}}%
    \put(0.49657429,0.48020591){\color[rgb]{0,0,0}\makebox(0,0)[lb]{\smash{Step 1}}}%
    \put(0.2618699,0.50626564){\color[rgb]{0,0,0}\makebox(0,0)[lb]{\smash{$a'$}}}%
    \put(0,0){\includegraphics[width=\unitlength,page=10]{triangles.pdf}}%
    \put(0.1856544,0.5231925){\color[rgb]{0,0,0}\makebox(0,0)[lb]{\smash{$y$}}}%
    \put(0,0){\includegraphics[width=\unitlength,page=11]{triangles.pdf}}%
    \put(0.13051376,0.4299007){\color[rgb]{1,0,0}\makebox(0,0)[lb]{\smash{$A_j^R$}}}%
    \put(0,0){\includegraphics[width=\unitlength,page=12]{triangles.pdf}}%
    \put(0.71686805,0.45748825){\color[rgb]{0,0,0}\makebox(0,0)[lb]{\smash{$x$}}}%
    \put(0,0){\includegraphics[width=\unitlength,page=13]{triangles.pdf}}%
    \put(0.72912079,0.49705864){\color[rgb]{0,0,0}\makebox(0,0)[lb]{\smash{$a'$}}}%
    \put(0.77946128,0.45836595){\color[rgb]{0,0,0}\makebox(0,0)[lb]{\smash{$\tau \geq 2$}}}%
    \put(0.34189693,0.44828836){\color[rgb]{0,0,1}\makebox(0,0)[lb]{\smash{$A_k^R$}}}%
    \put(0,0){\includegraphics[width=\unitlength,page=14]{triangles.pdf}}%
    \put(0.09090864,0.38585441){\color[rgb]{0,0,0}\makebox(0,0)[lb]{\smash{$b'$}}}%
    \put(0.43812592,0.07305554){\color[rgb]{0,0,0}\makebox(0,0)[lb]{\smash{$b'$}}}%
    \put(0.05935414,0.31716294){\color[rgb]{0,0,0}\makebox(0,0)[lb]{\smash{$b$}}}%
    \put(0.43434877,0.318031){\color[rgb]{0,0,0}\makebox(0,0)[lb]{\smash{$c$}}}%
    \put(0.58104807,0.31803099){\color[rgb]{0,0,0}\makebox(0,0)[lb]{\smash{$b$}}}%
    \put(0.96125098,0.31629491){\color[rgb]{0,0,0}\makebox(0,0)[lb]{\smash{$c$}}}%
    \put(0.46907051,0.02984065){\color[rgb]{0,0,0}\makebox(0,0)[lb]{\smash{$b$}}}%
    \put(0.05501391,0.00813958){\color[rgb]{0,0,0}\makebox(0,0)[lb]{\smash{$c$}}}%
    \put(0.56715939,0.00640349){\color[rgb]{0,0,0}\makebox(0,0)[lb]{\smash{$b$}}}%
    \put(0.95604272,0.01074371){\color[rgb]{0,0,0}\makebox(0,0)[lb]{\smash{$c$}}}%
    \put(0.29285774,0.53938201){\color[rgb]{0,0,0}\makebox(0,0)[lb]{\smash{$\gamma_j$}}}%
    \put(0.19563689,0.58886045){\color[rgb]{0,0,0}\makebox(0,0)[lb]{\smash{$\gamma_k$}}}%
    \put(0.2919897,0.22862254){\color[rgb]{0,0,0}\makebox(0,0)[lb]{\smash{$\gamma_j$}}}%
    \put(0.18869254,0.26768448){\color[rgb]{0,0,0}\makebox(0,0)[lb]{\smash{$\gamma_k$}}}%
    \put(0.18391831,0.21069695){\color[rgb]{0,0,0}\makebox(0,0)[lb]{\smash{$y$}}}%
  \end{picture}%
\endgroup%

    \caption{The two non-degenerate cases for the combinatorial triangles in the proof of Lemma~\ref{L:removingstaircases}. Staircases are colored. Step $2$ will remove the spur or the ladder.}
    \label{F:triangles}
\end{figure}

We denote by $\Delta'$ the disk diagram obtained after doing Step 1, with endpoints $a'$, $b'$ and $c'$ as before, and we distinguish cases depending on its orientation. If $\Delta'$ is degenerate, i.e., has no quads in its interior, then after removing possibly one sequence of spurs in step 2., we have $A_j^R\tau A_k^R=A_i^R$ and thus we have successfully computed a rightmost form for $A_i$.

If the boundary of $\Delta'$, oriented $c' \rightarrow a' \rightarrow b' \rightarrow c'$, is oriented clockwise, then $a'$ is on the path between $x$ and $y$, since $A^L_k$ and $A^R_j$ can not coincide below $x$, see Figure~\ref{F:triangles}, top. After removing the sequence of spurs between $a'$ and $x$ in step 2., the turn between $A_j^R$ and $A_k^R$ is at least $2$, since otherwise there would be a spur in $A_j^R$ or $A_k^R$.

If the boundary of $\Delta'$, oriented $c' \rightarrow a' \rightarrow b' \rightarrow c'$, is oriented counter-clockwise, we observe that $x$ also lies on $A^L_k$. Indeed, otherwise, since $A^L_k$ is at distance at most one from $A^R_k$ and $\Delta'$ is not degenerate, there would be a common vertex $z$ that would be both on $A_j^R$ and $A_k^L$ and such that the paths from $z$ to $a$ are different, since one contains $x$ and not the other. This would contradict uniqueness of canonical paths. Therefore, the entire pinched disk between $x$ and $a$ has been removed in Step 1. Now we look at the maximal (possibly empty) staircase having $a'$ as a tip and being bounded by pair of subpaths $\alpha_j$ and $\alpha_k$ of $A_j^R$ and $A_k^R$, except for one edge. If there is no such staircase, nothing happens in step 2, and we are done. If there is such a staircase, we claim that most of it got already removed when we get to Step 2. If $\alpha_k$ contains one or more $1$s, then the start of the leftmost form of $\alpha_k$ coincides with the start of $\alpha_j^{-1}$, and thus this subword containing this $1$ has also been eliminated during step 1, see Figure~\ref{F:reductions}. Note that the slight offset of the indices in Step 1 is due to the difference in length between $\alpha_j$ and $\alpha_k$, and that the new turn $\tau$ that we introduce is the missing edge to close the staircase.  Therefore, $\alpha_k$ contains no $1$, and the last turn of $\alpha_j$ is a $1$. Therefore the entire staircase is a ladder, as pictured in the left of Figure~\ref{F:snakes} and on the bottom right of Figure~\ref{F:triangles}, which gets removed during step 2.

Let $\tau$ denote the turn between $A_j^R$ and $A_k^R$ after steps 1 and 2. Then in the clockwise case, $\tau$ is at least $2$ and thus at most one bracket can exist in $A_j^R \tau A_k^R$. In the counter-clockwise case, the only brackets that can exist in $A_j^R \tau A_k^R$ must contain a $2$, as otherwise the staircase of the previous paragraph would not have been maximal. There are at most two such brackets: one finishing at $\tau$ and one starting at $\tau$, since otherwise such a bracket would have been present in $A_j^R$ or $A_k^R$. As the reductions correspond to flattening brackets and removing spurs on a disk diagram, they are homotopies. Finally, since the degenerate parts of the disk diagram have been removed, there are no spurs.
\end{proof}

We now prove that the reduction algorithm does indeed compute rightmost and leftmost forms.

\begin{lemma}\label{L:reduction}
The \textsc{Reduction Algorithm} runs in polynomial time. The straight-line program that it outputs has the property that every pair of characters $A_i^R$ and $A_i^L$ encode a pair of turn sequence corresponding respectively to the rightmost and leftmost paths homotopic to $A_i$. In particular, at the top level, $\mathbb{A}^R$ and $\mathbb{A}^L$ are the rightmost and leftmost paths homotopic to the compressed input walk.
\end{lemma}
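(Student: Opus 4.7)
The plan is to prove correctness and polynomial runtime simultaneously by structural induction on the straight-line program $\mathbb{A}$. The base case is a leaf representing a single edge: its turn sequence is empty, and it is trivially both rightmost and leftmost, with the starting and ending edge recorded in the dictionary. For the inductive step, consider a rule $A_i \to A_j A_k$ and inductively assume that $A_j^R, A_j^L, A_k^R, A_k^L$ correctly encode the canonical forms of $A_j$ and $A_k$. I would argue that the algorithm computes $A_i^R$; the leftmost case is symmetric.

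The correctness argument leans on Lemma~\ref{L:removingstaircases}. That lemma already tells us that after Steps~1 and~2 the concatenation $A_j^R \tau A_k^R$ is homotopic to $A_i$, has no spurs, and has at most two brackets, each necessarily containing a $2$. Step~3 converts this into the rightmost form: Step~3(a) handles $\tau = \overline{1}$ by a single elementary right-shift, eliminating the $\overline{1}$ via Lemma~\ref{algo:main}; Step~3(b) flattens any remaining $12^k1$ or $\bar{1}\bar{2}^k\bar{1}$ bracket, which, since $A_j^R$ and $A_k^R$ were reduced and rightmost, must cross the junction at $\tau$. Using the disk-diagram description established in the proof of Lemma~\ref{L:removingstaircases}, any bracket created after a flattening remains confined to the same combinatorial triangle $\Delta$, so only a constant number of iterations of Step~3(b) suffice before the rewriting stabilizes. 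Step~3(c) covers the generic junction case where no rewriting is needed. In every branch, the resulting compressed turn sequence contains no spur, no bracket, and no $\overline{1}$, hence represents a rightmost path; by Theorem~\ref{T:uniqueness}, it must coincide with the unique rightmost path homotopic to $A_i$.

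For the runtime analysis, every primitive operation on compressed words invoked by the algorithm — computing length, extracting a position, finding maximal common suffixes, detecting a maximal substring of the form $\overline{2}^s\overline{1}\overline{2}^t$ or $12^k1$, and tracing an intermediate edge along a turn sequence — reduces to a polynomial-time call to Lemma~\ref{algo:main} or Lemma~\ref{algo:second}. The new blocks $2^k$ and $\overline{2}^k$ introduced by shifts and flattenings admit compact straight-line-program encodings of size $O(\log k)$, so the output programs remain of polynomial bit-size. Combined with the constant bound on the number of bracket flattenings per rule and the fact that $\mathbb{A}$ has polynomially many production rules, this yields an overall polynomial runtime. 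The main obstacle, already dispatched by Lemma~\ref{L:removingstaircases}, is precisely the avoidance of the exponential cascades of bracket flattenings illustrated in Figure~\ref{F:snakes}; once that geometric control is in hand, the proof of Lemma~\ref{L:reduction} becomes an exercise in careful bookkeeping on compressed operations.
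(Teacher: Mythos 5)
Your overall structure is right and matches the paper: you correctly defer the heavy lifting to Lemma~\ref{L:removingstaircases}, appeal to Theorem~\ref{T:uniqueness} for canonicity, and reduce the runtime claim to Lemmas~\ref{algo:main} and~\ref{algo:second} plus a bound on the number of flattenings per production rule. However, there is a genuine gap in the one place the argument actually has teeth: the claim that Step~3(b) terminates after a constant number of iterations.

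You justify this by saying that ``any bracket created after a flattening remains confined to the same combinatorial triangle $\Delta$.'' That observation does not bound the number of flattenings: the entire difficulty illustrated by Figure~\ref{F:snakes} is precisely that an exponentially long cascade of flattenings can occur inside one bounded combinatorial region (the snake stays in the triangle but keeps sliding). What actually prevents this in Step~3 is a letter-level case analysis on the turn values $x$ and $y$ adjacent to the bracket being flattened. The paper's argument is: after Step~3(a) one has $x,y\neq\overline{2}$ by maximality of the matched subword, and $x,y\notin\{0,\overline{1}\}$ since $A_j^R$ and $A_k^R$ are rightmost and spur-free; hence at most two disjoint brackets (each containing a $2$) can arise. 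In Step~3(b), when $x12^k1y$ is flattened to $(x-1)\overline{2}^k(y-1)$, one must argue that the new letters do not spawn a fresh bracket: $x$ and $y$ cannot be $\overline{1}$, $0$, or $1$ for the reasons above, and if $x=3$ or $y=3$ the resulting $2$ is prevented from joining a new bracket because it sits next to an $\overline{2}$ --- and this step crucially uses the assumption that there are no torus summands, where a $\overline{2}\,2$ pattern could otherwise fold into a bracket. The only residual possibility is that $x-1=1$ or $y-1=1$, which then bounds the second pre-existing bracket; after flattening that one, nothing remains. You never invoke the no-torus hypothesis nor carry out this case analysis, so the ``constant number of iterations'' claim is asserted rather than proved, and that is exactly the nontrivial part of the lemma. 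Everything else in your write-up (the base case, the symmetry between $R$ and $L$, the use of $O(\log k)$-size encodings of $2^k$, and the appeal to polynomially many production rules) is correct and matches the paper.
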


\begin{proof}
By Lemma~\ref{L:removingstaircases}, when the algorithm reaches step 3, the path $A_j^R A_k^R$ may not be reduced for two reasons: either the turn between $A_j^R$ and $A_k^R$ is a $\overline{1}$, or there is one or two brackets with at least one $2$ between them. Note that the two cases are exclusive. We argue that no spurs or brackets remain after the two reductions a and b. For the first one (reduction a)), note that $x,y \neq \overline{2}$ by definition and $x$ and $y$ are neither $0$ nor $\overline{1}$ because $A_j^R$ is rightmost. We may have $x=1$ or $y=1$, which may create at most two brackets, but note that these brackets are disjoint and contain each at least one $2$.

When we reach step b, we therefore have at most two brackets, each containing at least one $2$. We conclude by proving that no new brackets nor spurs are created during an application of this step b. Note that $x$ and $y$ cannot be $\overline{1}$ because the curves are rightmost, cannot be $0$ because they contain no spurs, and cannot be $1$ because otherwise there would be a bracket without a $2$. They could be $3$, but the $\overline{2}$ blocks the resulting $2$ from participating in a bracket since there is no torus surface among the wedge summands. Therefore, the only way for $x-1$ or $y-1$ to be involved in a bracket is if one of them is $1$, in which case it bounds a second bracket that already existed before the reduction. After reducing that one, there are no more possible spurs nor brackets. In particular, step 3.b. only loops a constant number of times, and thus the whole inductive step is polynomial. In turn, computing the full straight-line programs for both reduced words $\mathbb{A}^L$ and $\mathbb{A}^R$ is also polynomial.
\end{proof}

We are now ready to prove Theorem~\ref{T:triviality}.

\begin{proof}[Proof of Theorem~\ref{T:triviality}]
Using Lemma~\ref{L:quadsystem}, we first turn our walk $w$ into a walk on a quad system. Then, using Lemma~\ref{L:reduction}, we compute a compressed rightmost loop $w'$ homotopic to $w$ in this quad system. By Theorem~\ref{T:uniqueness}, there is a unique such rightmost loop. Combined with the data of the first and last edge, this is our canonical form. If $w$ is homotopic to a trivial walk, it is homotopic to the empty loop, which is rightmost. Therefore, to test contractibility, it suffices to test whether $w'$ is the empty word.
\end{proof}

Finally, let us comment on how to adapt this algorithm in the case where there are tori in the wedge. In tori summands, we dispense from using turn sequences and simply reduce to a one-vertex one-face graph, which therefore has two edges $a$ and $b$. Homotopy in tori is a simple matter since the fundamental group is $\mathbb{Z}^2$, corresponding exactly to how many times the oriented edges $a$ and $b$ are taken. Therefore, we directly use for our encoding a pair $(k,\ell)$ representing this, with the convention that $\overline{(k,\ell)}=(-k,-\ell)$. The reduction algorithm has an additional rule stipulating that $(k_1,\ell_1)(k_2,\ell_2)$ should be reduced to $(k_1+k_2,\ell_1+\ell_2)$. Any reduction to $(0,0)$ triggers a search for a deeper sequences of spurs and staircases as in Steps 1 to 3.

\section*{Proofs of the main theorems}

 We now have all the tools to prove our main theorems. Theorem~\ref{t:2mccontract} is a subcase of Theorem~\ref{T:triviality} when the the wedge of surfaces consists of a single surface.

\begin{proof}[Proof of Theorem~\ref{t:nsm}]
We can make $\Delta$ reduced in polynomial time~\cite{Erickson2013}. By Proposition~\ref{P:retriangulation}, one can compute in polynomial time a new triangulation of the surface $S$ so that the multi-curve $\Delta$ has polynomial complexity and the curve $\gamma$ is encoded as a straight-line program. By Proposition~\ref{P:cappingoff}, we can compute a wedge of surfaces and circles $K$ and a walk $w$ on $K$ so that $w$ is contractible in $K$ if and only if $\gamma$ belongs to the normal subgroup generated by the curves $\Delta$ in $S$. The simplicial complex $K$ can be transformed to a wedge of combinatorial surfaces and circles easily. Then the theorem follows from Theorem~\ref{T:triviality}.
\end{proof}

\begin{proof}[Proof of Theorem~\ref{t:3mcontract}]
By Proposition~\ref{P:orientability}, we can reduce our problem to the case of orientable manifolds. Note that the size of the manifold at most doubles in this process. Then, Proposition~\ref{P:from3dto2d} reduces the problem to the disjoint normal subgroup membership problem. The \textbf{NP} certificate consists of the certificate in that proposition, and the corresponding algorithm is to guess this certificate. The disjoint normal subgroup membership problem is then solved using Theorem~\ref{t:nsm}.
\end{proof}

\section*{Acknowledgements}
The authors would like to thank Mark Bell, Ben Burton, and Jeff Erickson for  helpful discussions. We also thank an anonymous referee of~\cite{cdvparsaJ} for suggesting the use of a maximal compression body as a simple exponential-time algorithm for deciding contractability of arbitrary (non-compressed) curves on the boundary of a 3-manifold.

\end{document}